\documentclass[11pt]{article}

\usepackage{subfigure}
\usepackage{natbib}
\usepackage{amsmath}
\usepackage{graphicx}
\usepackage{fullpage}

%
%
\usepackage{latexsym}
\usepackage{amsthm}
\usepackage{amsmath}
\usepackage{amstext}
\usepackage{amsfonts}
\usepackage{amsopn}

\usepackage{ifthen}

%
%
\newtheorem{theorem}{Theorem}[section]
\newtheorem{lemma}[theorem]{Lemma}
\newtheorem{corollary}[theorem]{Corollary}
\newtheorem{proposition}[theorem]{Proposition}
\newtheorem{definition}{Definition}[section]

%
%

%
%

%
%
%

%
%
%

%
%

%
%

%
%

%
%

%
%

%
%



%
%
\newcommand{\prob}[2][]{\text{\bf Pr}\ifthenelse{\not\equal{}{#1}}{_{#1}}{}\!\left[#2\right]}
\newcommand{\expect}[2][]{\text{\bf E}\ifthenelse{\not\equal{}{#1}}{_{#1}}{}\!\left[#2\right]}
\newcommand{\given}{\ \mid\ }

%
%

\newcommand{\setsize}[1]{\left| #1 \right|}

%
%

%
%
\newcommand{\suchthat}{\ :\ }

%
%
%

%
%


\newcommand{\super}[1]{^{(#1)}}

\newcommand{\dist}{F}

\DeclareMathOperator{\RSEM}{RSEM}

\newcommand{\val}{v}
\newcommand{\vals}{{\mathbf \val}}

\newcommand{\wal}{w}
\newcommand{\wals}{{\mathbf \wal}}
\newcommand{\wali}[1][i]{{\wal_{#1}}}

\newcommand{\vali}[1][i]{{\val_{#1}}}
\newcommand{\valith}[1][i]{{\val_{#1}}}

\newcommand{\util}{u}

\newcommand{\utili}[1][i]{{\util_{#1}}}

\newcommand{\VV}[1][]{\Phi\ifthenelse{\not\equal{}{#1}}{^{#1}}{}}
\newcommand{\ivv}[1][]{\bar{\VV}\ifthenelse{\not\equal{}{#1}}{^{#1}}{}}
\newcommand{\eivv}[1][]{\widetilde{\VV}\ifthenelse{\not\equal{}{#1}}{^{#1}}{}}

\newcommand{\price}{p}
\newcommand{\prices}{{\mathbf \price}}
\newcommand{\pricei}[1][i]{{\price_{#1}}}

\newcommand{\alloc}{x}
\newcommand{\allocs}{{\mathbf \alloc}}

\newcommand{\alloci}[1][i]{\alloc_{#1}}

\newcommand{\drop}[2]{{{#2},\vals_{-#1}}}

\newcommand{\feasibles}{{\cal X}}

\DeclareMathOperator{\ICoperator}{IC}
\newcommand{\IC}[1]{\ICoperator\ifthenelse{\not\equal{}{#1}}{^{#1}}{}}

\DeclareMathOperator{\EFoperator}{EF}
\newcommand{\EF}[1]{\EFoperator\ifthenelse{\not\equal{}{#1}}{^{#1}}{}}

\DeclareMathOperator{\ICOoperator}{ICO}
\newcommand{\ICO}[1]{\ICOoperator\ifthenelse{\not\equal{}{#1}}{^{#1}}{}}

\DeclareMathOperator{\EFO}{EFO}

\DeclareMathOperator{\PFOoperator}{PFO}
\newcommand{\PFO}[1][]{\PFOoperator\ifthenelse{\not\equal{}{#1}}{^{#1}}{}}

\DeclareMathOperator{\Roperator}{R}
\newcommand{\R}[1][]{\Roperator\ifthenelse{\not\equal{}{#1}}{^{#1}}{}}

\newcommand{\IR}[1][]{\bar{\Roperator}\ifthenelse{\not\equal{}{#1}}{^{#1}}{}}
\newcommand{\ER}[1][]{\widetilde{\Roperator}\ifthenelse{\not\equal{}{#1}}{^{#1}}{}}
\newcommand{\Rhat}[1][]{\widehat{\Roperator}\ifthenelse{\not\equal{}{#1}}{^{#1}}{}}
\newcommand{\IRs}{\IR[S]}

\newcommand{\vhat}{\hat{v}}
\newcommand{\vhati}[1][i]{\vhat_{#1}}
\newcommand{\vhats}{{\mathbf \vhat}}

\newcommand{\f}{\mathcal{F}}
\newcommand{\ftwo}{\f^{(2)}}

\newcommand{\runder}{\underline{R}^S}
\newcommand{\amscr}{4.68}

\title{Envy Freedom and Prior-free Mechanism Design}

\author{Nikhil R. Devanur 
\and Jason D. Hartline
\and Qiqi Yan}

\newcommand{\multiunitratio}{6.24}
\newcommand{\digitalgoodratio}{3.12}
\newcommand{\RSEMmultiunitratio}{12.5}
\newcommand{\RSEMdownwardclosedratio}{189}
\DeclareMathOperator{\VCGr}{VCG-with-reserve}
\newcommand{\resi}{r}
\newcommand{\imbal}{\lambda}
\newcommand{\highp}{p}
\newcommand{\lowp}{q}
\newcommand{\highi}{i}
\newcommand{\lowi}{j}

\begin{document}

\maketitle

\begin{abstract}
We consider the provision of an abstract service to single-dimensional
agents.  Our model includes position auctions, single-minded
combinatorial auctions, and constrained matching markets.  When the
agents' values are drawn from a distribution, the Bayesian optimal
mechanism is given by \citet{mye-81} as a virtual-surplus optimizer.
We develop a framework for prior-free mechanism design and analysis.
A good mechanism in our framework approximates the optimal mechanism
for the distribution if there is a distribution; moreover, when there
is no distribution this mechanism still performs well.

We define and characterize optimal envy-free outcomes in symmetric
single-dimensional environments.  Our characterization mirrors
Myerson's theory.  Furthermore, unlike in mechanism design where there
is no point-wise optimal mechanism, there is always a point-wise
optimal envy-free outcome. 

Envy-free outcomes and incentive-compatible mechanisms are similar in
structure and performance.  We therefore use the optimal envy-free
revenue as a benchmark for measuring the performance of a prior-free
mechanism.  A good mechanism is one that approximates the envy-free
benchmark on any profile of agent values.  We show that good mechanisms
exist, and in particular, a natural generalization of the random
sampling auction of \citet{GHW-01} is a constant approximation.
\end{abstract}

\section{Introduction}

%
%
The theories of optimal mechanism design for revenue maximization and
social surplus maximization are fundamentally different.  The
revenue-optimal mechanism \citep{mye-81} depends on the prior
distribution from which the values of the agents are drawn whereas the
surplus-optimal mechanism is prior-free \citep{vic-61,cla-71,gro-73}.
In fact, the latter result is singular in this regard.  Incentive
constraints bind across possible misreports of each agent; therefore,
an optimal mechanism must generally trade off performance on one input
(i.e., profile of agent valuations) for another.

%
%
The goal of prior-free mechanism design and this work therein is
to sacrifice optimality to obtain prior freedom.  Of course, the hope
is that not too much in the way of performance is lost.  One way to
make such a hope precise is to require that the prior-free mechanism
perform within a constant factor of the Bayesian optimal mechanism
when there is a distribution.  

%
%
With this goal in mind, \citet{HR-08} give prior-free single-item
auctions to maximize the agents' utility (surplus less payments).  For
any profile of valuations their mechanism approximates the performance
of the best, in hindsight, (i.i.d.) Bayesian optimal mechanism.
Approximation of this prior-free benchmark implies simultaneous
approximation of the Bayesian optimal auction when the values of the
agents are drawn from an i.i.d.~distribution.  Our main goal is to
extend this framework and approach to richer environments that include
position auction \citep[e.g.,][]{var-06,EOS-07}, single-minded combinatorial
auctions \citep[e.g.,][]{LOS-02}, and constrained matching markets
\citep[e.g.,][]{BdVSV-08}.

%
%
A first challenge with our goal is that the benchmark of \citet{HR-08}
is analytically complex in general environments.  Therefore, we
consider an alternative approach motivated by the similarity between
incentive and no-envy constraints.  An outcome is envy free if no agent
prefers the treatment of another to her own.  I.e., no agent wishes to
unilaterally swap outcomes with another.  Incentive compatibility, on
the other hand, requires that no agent wants to misreport her value.
Importantly envy freedom, as a constraint, binds point wise on a
valuation profile whereas incentive compatibility binds across
different valuation profiles (for the possible misreports of the
agents).  While incentive-compatible mechanisms do not generally have
point-wise optima, envy-free outcomes do.  

%
%
A centerpiece of our work is a characterization of envy-free outcomes
and their optima.  This characterization is structurally equivalent to
Myerson's \citeyearpar{mye-81} characterization of Bayesian optimal
mechanisms applied to the empirical distribution given by the actual
profile of agent values.  The envy-free optimal outcome is a
virtual-surplus maximizer.  This connection between envy freedom and
incentive compatibility is implicit in \citet{JK-07} where it is shown
that they are equivalent in the limit, in contrast we show that they
are structurally equivalent generally.

%
%
As a benchmark for prior-free mechanism design, the optimal envy-free
performance is appealing as it inherits many nice properties of
Bayesian optimal mechanisms.  It is the maximum of a linear objective
subject to feasibility constraints.  Furthermore, in environments such
as constrained matching markets, the envy-free optimal revenue
point-wise dominates the revenue of the Bayesian-optimal
incentive-compatible mechanism for any i.i.d.~distribution.  We view
the envy-free optimal revenue as an unattainable prior-free
``first-best'' solution and, because of constraints of incentive
compatibility, attempt to show that ``second-best'' approximation
mechanisms are not to far away from it.  While the \citet{JK-07}
result implies such a result holds in the limit, we show that it
continues to hold in general.

%
%
The incentive-compatible mechanisms that we design will satisfy the
following prior-free performance guarantee: Their revenue on any
profile of valuations will be within a constant multiplicative factor
of the optimal envy-free revenue for the same valuation profile.  The
approximation factors we obtain range from $\multiunitratio$ to
roughly 2500.  While these bounds are outside the realm of being
practically significant in themselves, they do have important
theoretical and practical implications.  A mechanism that is a
constant approximation in theory must exhibit at least some of the
necessary characteristics needed for good performance in practice.  As
an example of the contrapositive, we show that mechanisms based on
reserve prices do not give constant approximations in the environments
that we consider.  Furthermore, since our goal is theoretical
understanding of good properties of mechanisms in environments where
the optimal mechanism is complex, properties and analysis of constant
approximations are provide more abstract understanding.  Finally, the
closer a mechanism is to optimal the more potential there is for the
form of the mechanism to be overly dependent on modeling details that may
not be accurate of practice.  Hence, simple mechanisms that give
constant approximations are often robust and detail free in the
sense of \citet{wil-87}.  Finally, in practice these mechanisms
perform much better than their theoretical guarantees.  For deeper
motivation of approximation in mechanism design see the survey of
\citet{har-11}.

%
%
Our first approach for the design of prior-free mechanisms for general
environments is via reduction.  A digital-good environment (a.k.a.,
unlimited supply) is one where the mechanism has no inter-agent
feasibility constraint.  A multi-unit environment (a.k.a., limited
supply) is one where the mechanism has a constraint on the number of
agents that can be simultaneously served, e.g., multiple units of a
single item.  We given a reduction from multi-unit environments to
digital-good environments that loses at most a factor of two in
approximation factor.  We then give a lossless reduction from position
auction environments and constrained matching environments to
multi-unit environments.  To obtain these reductions we give a
structural characterization of these environments that shows that they
are equivalent, even with respect to approximation.  Given the
$\digitalgoodratio$-approximation for digital goods given by
\citet{II-10}, our reduction implies $\multiunitratio$-approximations
for multi-unit, constrained-matching, and position environments.

%
%
Our second approach is via a generalization of the random sampling
auction of \citet{GHW-01} and \citet{BV-03}.  The auction takes the
following form: The agents are randomly partitioned into a market and
a sample.  The sample is then used for market analysis and its
empirical distribution is calculated.  The optimal auction for the
empirical distribution of the sample is then run on the market.  This
prior-free mechanism is one of the most fundamental, and we extend the
analysis techniques derived for it in digital-good environments to
multi-unit and (more generally) downward-closed environments.  A
downward-closed environment is given by a set system that specifies
which agents can be simultaneously served.  The only constraint placed
on this set systems is that subsets of feasible sets are themselves
feasible.  The approximation factors we obtain are
$\RSEMmultiunitratio$ and $\RSEMdownwardclosedratio$ for multi-unit
(and by the reduction above, constrained-matching and position auction
environments) and downward-closed environments respectively.

\paragraph{Related Work.}

Our work fundamentally relies on the theory of optimal auctions as
defined by \citet{mye-81} and refined by \citet{BR-89}.  In
particular, Myerson shows that Bayesian optimal mechanisms are virtual
surplus optimizers and Bulow and Roberts show that the virtual value
of an agent in this virtual surplus maximization can be viewed as the
marginal revenue as given by an agent's value distribution.

The random sampling auction for digital goods was first studied by
\citet{GHW-01}.  The asymptotic performance of the mechanism was given
by \citet{seg-03} and \citet{BV-03} and the convergence rate was
studied by \citet{BBHM-08}.  In contrast, \citet{GHKSW-06} consider
the non-asymptotic behavior of the random sampling auction and show
that its performance is a (large) constant factor from a prior-free
benchmark that in retrospect coincides with ours.  \citet{AMS-09} give
a nearly tight analysis that shows that the random sampling auction is
a 4.68-approximation (the lower-bound is four).

There have been numerous attempts to design good prior-free mechanisms
for digital goods outside the random sampling paradigm.  Two notable
approaches include an approximate reduction to the ``decision
problem''\footnote{Given a target profit, the decision problem is to construct
a mechanism that obtains that target profit when it is attainable.} by
\citet{GHKSW-06} and an approach based on statistical estimates that
are non-manipulable with high probability by \citet{GH-03a}.
\citet{HM-05} extend the former approach and obtain an approximation
factor of 3.25.  Finally, \citet{II-10} show that a convex combination
of these approaches gives an approximation factor of 3.12.  The
6.24-approximation we obtain is the instantiation of our reduction
with the 3.12 approximation of \citet{II-10}.

The digital good auctions described above were analyzed in comparison
to a natural single-priced benchmark.  \citet{HR-08} suggest that
approximation of a prior-free benchmark should imply approximation of
the Bayesian optimal mechanism for any i.i.d.~distribution.
Benchmarks for which such an implication holds are well grounded in
the classical theory of Bayesian optimal auctions.  They propose the
performance of the best, in hindsight, Bayesian optimal mechanism as a
benchmark. For multi-unit environments, they characterize this
benchmark as two-priced lotteries.  In contrast, our benchmark, the
envy-free optimal revenue, can be viewed as a relaxation of the
Hartline-Roughgarden benchmark that is structurally simpler and
analytically tractable in general downward-closed environments.

Subsequent to our study, \citet{HH-12} generalized the
statistical-estimation-based approach of \citet{GH-03a} to design a
32-approximation of the envy-free benchmark in downward-closed
environments.  A generalization of the
reduction-to-the-decision-problem approach of \citet{GHKSW-06} yields
a 19-approximation \citep{HH-12b}.

\paragraph{Overview.}

In Section~\ref{s:prelim} we describe in detail the our abstract
environment for mechanism design which includes multi-unit, position,
and combintorial auctions.  In Section~\ref{s:ef} we characterize
envy-free outcomes and their optima.  In Section~\ref{s:ic_vs_ef} we
compare incentive-compatible and envy-free revenues.  In
Section~\ref{s:benchmarks} we describe our prior-free design and
analysis framework and give Bayesian justification for using the
envy-free optimal revenue as a prior-free performance benchmark.  In
Section~\ref{s:reduction} we describe a reduction-based approach
wherein we show that, e.g., constrained matching mechanisms reduce to
position auctions which reduce to multi-unit auctions which reduce
(with a loss of a factor of two) to digital-good auctions.  In
Section~\ref{s:random_sampling} we describe a random sampling auction
and analyze this auction in downward-closed environments.

\section{Single-dimensional Environments}

\label{s:prelim}

There are $n\geq 2$ agents. Each agent $i$ has a valuation $\vali$ for
receiving an abstract service.  The {\em valuation profile} is $\vals
= (\vali[1],\ldots,\vali[n])$.  We assume that the agents are indexed
in order of decreasing values, i.e., $\vali \geq \vali[i+1]$.  An
agent $i$ who is served with probability $\alloci$ and charged price
$\pricei$ obtains utility $\utili = \vali \alloci - \pricei$.  {\em
  Individual rationality} requires that $\utili$ be non-negative.

We are allowed to serve certain feasible sets of agents as given by a
set system.  The set system is \emph{downward-closed} in the sense if
a set of agents is feasible, so is any of its subsets.  The empty set
is always feasible.  We allow randomization in two senses (1) the set
system constraint may be randomized (i.e., by convex combination over
set systems) and (2) the set of agents served may be random (by
convex combination over feasible sets).  Notably, randomization in (1)
is given by the environment and randomization in (2) is by our choice
of outcome.  We define an {\em allocation} as a vector $\allocs
=(\alloci[1],\ldots,\alloci[n]) \in [0,1]^n$ where $\alloci$ is the
probability that agent $i$ is served.  An allocation is feasible if it is
the characteristic vector induced by the process above.  The environments
permitted include digital good auctions, multi-unit auctions, position
auction environments, matroid environments, and single-minded
combinatorial auctions.

%
%
We further assume that the feasibility constraint imposed by the
environment is symmetric, i.e., the set of feasible allocations is
closed under permutation.  Digital good, multi-unit auction, and
position auction environments are all symmetric.  Given any
asymmetric environment its corresponding {\em permutation environment}
is obtained by randomly permuting the agents with respect to
the feasibility constraint.  Of special interest for us will be
downward-closed permutation environments and matroid permutation
environments.  By definition these environments are symmetric.

To make the model described above precise, we give the following
formal definitions.

\begin{definition}[single-dimensional environment]
There are $n$ agents denoted $N = \{1,\ldots,n\}$.  The sets of agents
that can be simultaneously served are denoted by $\feasibles \subset
2^N$.  The mechanism may serve $S \subset N$ if $S \in \feasibles$.
\end{definition}

\begin{definition}[downward-closed environment]
A {\em downward-closed environment} is a single-dimensional
environment were all subsets of feasible sets are feasible. I.e., $S
\in \feasibles$ and $T \subset S$ implies that $T \in \feasibles$.
\end{definition}

{\em Single-minded combinatorial} auctions are an example of a
downward-closed environment.  In a single-minded combinatorial auction
each agent $i$ desires a specific bundle (i.e., subset) of a set of
$k$ available items.  Agent $i$ has value $\vali$ for receiving her
entire desired bundle (or a superset of it) and value zero otherwise.
We say an agent is ``served'' if she receives her desired bundle.
Notice that a set of agents $S$ can be simultaneously served (i.e., $S
\in \feasibles$) if their desired bundles are disjoint.  Of course, if
$S$ has disjoint desired bundles and $T \subset S$, then $T$ has
disjoint desired bundles, hence $\feasibles$ is downward closed.

\begin{definition}[matroid environment]
A {\em matroid environment} is a downward-closed environment that
satisfies an additional {\em augmentation property}: there is always
an element in a larger cardinality feasible set that can by feasibly
added to a smaller feasible set.  I.e., given two feasible sets $S,T
\subset \feasibles$ with $|S| < |T|$, there exists an $i \in T
\setminus S$ such that $S \cup \{i\} \in \feasibles$.
\end{definition}

Two key consequences of the augmentation property are that (1) all
maximal feasible sets have the same cardinality and that (2) the
greedy algorithm optimizes surplus, i.e., the sum of the values of
the agents served \citep[see e.g.,][]{oxl-92}.  The greedy algorithm
sorts the agents by their values and then, in this order, greedily
serves each agent if it is feasible to do so given the subset of
agents previously served.  Importantly the greedy algorithm is {\em
  ordinal} in that only the order of values matters in determining the
surplus maximizing set and not the magnitude of the values.  Matroid
environments have a rich history in mechanism design, see e.g.,
\citet{tal-03}, \citet{BdVSV-08}, and \citet{HR-09}.

The example of a matroid environment that is most relevant to auction
theory is that of {\em constrained matching markets}.  In a
constrained matching market each agent $i$ desires one of a subset of
$k$ available items and her value for any item in this subset is
$\vali$ (her value for any other item is zero).  The agent demand sets
induce a bipartite graph between agents and items where an edge is
between agent $i$ and item $j$ if $j$ is one of $i$'s desired items.
A subset $S$ of agents can be simultaneously served if there is a
matching in the bipartite graph were all of $S$ is matched.  This
matroid is known as the {\em transversal matroid}.

A special case of the transversal matroid is when the market is
essentially partitioned into $\ell$ parts and within each part every
item is acceptable to every agent (of course, there may be fewer items
than agents in the part).  This is known as the {\em partition
  matroid}.  A special case if the partition matroid is the one where
there is only one part, i.e., there are $k$ identical items, and $n$
agents who each desire one of these items.  This matroid is known as
the {\em $k$-uniform matroid} and it corresponds precisely to the
standard $k$-unit auction environment.

\begin{definition}[multi-unit environment]
There are $n$ unit-demand agents and $k$ units of an item.  I.e.,
$\feasibles = \{ S \suchthat |S| \leq k \}$.
\end{definition}

An important special case of the multi-unit environment is one where
the supply constraint $k$ is never binding, i.e., when $k = n$.  This
special case is known as the digital-good environment.

\begin{definition}[digital-good environment]
There are $n$ unit-demand agents denoted $N = \{1,\ldots,n\}$ and any
subset of them can be served.  I.e., $\feasibles = 2^N$.
\end{definition}

A generalization of multi-unit environments that has recently been
under intense scrutiny due to its application to auctions for selling
advertisements on Internet search auction is the {\em position
  environment} \citep{var-06,EOS-07}.

\begin{definition}[position environment]
There are $n$ agents and $n$ positions.  The positions have
non-increasing weights $\wals = (\wali[1],\ldots,\wali[n])$.  If an
agent $i$ is assigned position $j$ she served with probability
$\wali[j]$ and her value for this assignment is $\vali\wali[j]$.
\end{definition}

It has been observed, e.g., by \citet{Dughmi09}, that there is a close
connection between position environments and convex combinations of
multi-unit environments.  E.g., if we choose the supply $k$ randomly
so that $k$-units are available with probability $\wali[k] -
\wali[k+1]$ then this distribution over $k$-unit environments gives
the same service probabilities as the position auction.  Of course, a
$k$-unit auction is a position auction wherein the $k$ highest weights
are one, and the remaining weights are zero.

Digital-good, multi-unit, and position environments are
agent-symmetric whereas matroid and downward-closed environments may
not be.  We can compose any environment with a permutation that
renames the agents with respect to the set system.  The convex
combination of these environments with the permutation taken uniformly
at random from the set of $n$-element permutations is referred to as a
{\em permutation environment} and it is symmetric.  

\begin{definition}
Given any $n$ agent single-dimensional environment specified by
$\feasibles$, the permutation environment for $\feasibles$ draws
permutation $\pi$ (mapping $i$ to $j$ by $\pi(i) = j$) uniformly at
random from the set of all $n$-element permutations, and the feasible
sets are given by $\feasibles_\pi = \{ \{\pi(i) \suchthat i
\in S\} \suchthat S \in \feasibles\}$.
\end{definition}

\section{Optimal Envy-free Pricing}

\label{s:ef}

In this section we derive a theory of optimal envy-free outcomes in
single-dimensional environments that mirrors that of Bayesian optimal
(incentive compatible) mechanisms for i.i.d.~prior
distributions~\citep[cf.,][]{mye-81,BR-89}.

%
%
\begin{definition}[Envy freedom]
An allocation $\allocs$ with payments $\prices$ is {\em envy free} for
valuation profile $\vals$ if no agent prefers the outcome of another
agent to her own.  Formally,
\begin{align*}
\forall i, j,\ \vali \alloci - \pricei
     &\geq \vali \alloci[j] - \pricei[j].
\end{align*}
\end{definition}

We first characterize envy-free outcomes in terms of the allocation.
For a given allocation $\allocs$ there may be several pricings
$\prices$ for which the allocation is envy free.  Since our objective
is profit maximization we will characterize the $\prices$
corresponding to $\allocs$ that gives the highest total revenue.  The
proof of this characterization as it is nearly identical to that of
the analogous (and standard) characterization of incentive compatible
mechanisms; we include it for completeness.

\begin{definition}
An allocation is {\em swap monotone} if the allocation probabilities
have the same order as the valuations of the agents, i.e., $\alloci
\geq \alloci[i+1]$ for all $i$. (Recall agents are ordered with $\vali \geq \vali[i+1]$.)
\end{definition}

\begin{lemma}\label{l:ef_char} In symmetric environments, 
an allocation $\allocs$ admits a non-negative and individual rational
payment $\prices$ such that $(\allocs,\prices)$ is envy free if
and only if $\allocs$ is swap monotone.  If $\allocs$ is swap
monotone, then the maximum payments for which $(\allocs,\prices)$ is
envy free satisfy, for all $i$, (Figure~\ref{fig:ef_payment})
\begin{align*}
\pricei 
&= \sum\nolimits_{j=i}^n (\vali[j]-\vali[j+1])\cdot (\alloci - \alloci[j+1])\\
&= \sum\nolimits_{j=i}^n \vali[j] \cdot (\alloci[j] - \alloci[j+1]).
\end{align*}
\end{lemma}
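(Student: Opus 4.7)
The plan is to prove the two directions of the characterization separately and then justify the two payment formulas. The argument parallels the textbook derivation of Myerson payments, with envy-freedom between adjacent agents playing the role of local incentive compatibility.

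For the ``only if'' direction, suppose $(\allocs,\prices)$ is envy free. Writing the two envy-free constraints between adjacent agents $i$ and $i+1$,
\begin{align*}
\vali \alloci - \pricei &\geq \vali \alloci[i+1] - \pricei[i+1],\\
\vali[i+1] \alloci[i+1] - \pricei[i+1] &\geq \vali[i+1] \alloci - \pricei,
\end{align*}
and adding them yields $(\vali - \vali[i+1])(\alloci - \alloci[i+1]) \geq 0$. Since $\vali \geq \vali[i+1]$ by our sorted labeling, this gives $\alloci \geq \alloci[i+1]$, i.e., swap monotonicity.

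For the ``if'' direction, assume $\allocs$ is swap monotone and define $\pricei$ by the first formula (using $\vali[n+1] = \alloci[n+1] = 0$). Non-negativity is immediate since each summand is a product of two non-negative terms. For individual rationality, I would expand $\vali \alloci - \pricei$ and telescope the sum on the $\vali[j]$ differences to rewrite it as $\sum_{j=i}^n (\vali[j]-\vali[j+1])\,\alloci[j+1] \geq 0$. For envy-freedom itself, I would use the equivalent Myerson-style form to obtain the telescoping identity $\pricei - \pricei[k] = \sum_{j=i}^{k-1} \vali[j] (\alloci[j] - \alloci[j+1])$ for $k > i$, and then bound $\vali[j] \leq \vali[i] = \vali$ termwise to get $\pricei - \pricei[k] \leq \vali(\alloci - \alloci[k])$, which is exactly the constraint that $i$ does not envy $k$. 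The symmetric bound $\vali[j] \geq \vali[i]$ on $j < i$ handles the case $k < i$.

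For the maximality claim, the adjacent envy-free inequalities above, rearranged, give the upper bound $\pricei \leq \pricei[i+1] + \vali(\alloci - \alloci[i+1])$, while individual rationality forces $\pricei[n] \leq \vali[n]\alloci[n]$. Iterating the recursion at equality (anchored by $\pricei[n+1] = 0$) reproduces the stated formula, so any larger payments would violate envy-freedom or individual rationality. The equivalence of the two displayed expressions for $\pricei$ is a one-line Abel summation: expanding $(\vali[j]-\vali[j+1])(\alloci-\alloci[j+1])$, the $\alloci$ pieces telescope against $\sum_j(\vali[j]-\vali[j+1]) = \vali[i]$, leaving exactly $\sum_j \vali[j](\alloci[j]-\alloci[j+1])$.

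The only real choice point is whether to prove envy-freedom of the constructed payments directly from the pairwise inequalities or via the recursion; I would use the recursion since it simultaneously delivers maximality. Everything else is bookkeeping, and the main obstacle is just keeping the case split between $k > i$ and $k < i$ tidy when verifying envy-freedom against non-adjacent agents.
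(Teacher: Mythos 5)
Your proposal is correct and follows essentially the same route as the paper's proof: monotonicity from summing pairwise envy constraints, envy-freedom of the constructed payments via the telescoping identity $\pricei - \pricei[j] = \sum_{k} \vali[k](\alloci[k]-\alloci[k+1])$ bounded termwise by $\vali$, and maximality via iterating the adjacent envy constraint with the IR anchor at agent $n$. The only cosmetic difference is that you spell out the Abel-summation equivalence of the two payment formulas, which the paper states without comment.
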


\begin{figure*}[tbp]
\centering
  \includegraphics[scale=0.8]{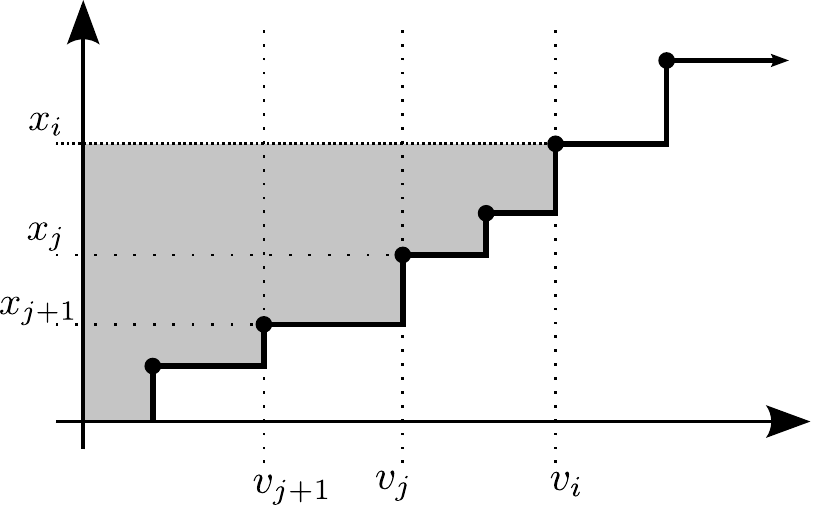}
\caption{The solid curve depicts a swap monotone allocation as a function of
  the values (points). The shaded area corresponds to agent $i$'s payment $\pricei$ from Lemma~\ref{l:ef_char}.}
\label{fig:ef_payment}
\end{figure*}

\begin{proof}
Monotonicity and the payment identity imply envy-freedom:  Suppose
$\allocs$ is swap monotone. Let $\prices$ be given as in the lemma. We
verify that $(\allocs,\prices)$ is envy-free. There are two cases: if
$i\leq j$, we have:
\begin{align*}
\pricei-\pricei[j] &= \sum_{k=i}^{j-1}v_{k}\cdot(\alloci[k]-\alloci[k+1])
 \leq \vali\cdot\sum_{k=i}^{j-1}(\alloci[k]-\alloci[k+1])
 = \vali\cdot(\alloci[i]-\alloci[j]),\\
\intertext{and if $i\geq j$, we have:}
\pricei-\pricei[j] &=-\sum_{k=j}^{i-1}v_{k}\cdot(\alloci[k]-\alloci[k+1])
 \leq -\vali\cdot\sum_{k=j}^{i-1}(\alloci[k]-\alloci[k+1])
 = \vali\cdot(\alloci[i]-\alloci[j]).
 \end{align*}
Rearranging the results of these calculations we have the definition
of envy freedom.

Envy-freedom implies monotonicity:  Suppose $\allocs$ admits $\prices$
such that $(\allocs,\prices)$ is envy-free. By definition, $\vali
\alloci - \pricei \geq \vali \alloci[j] - \pricei[j]$ and $\vali[j]
\alloci[j] - \pricei[j] \geq \vali[j] \alloci - \pricei$.  By summing
these two inequalities and rearranging,
$(\alloci-\alloci[j])\cdot(\vali-\vali[j])\geq0$, and hence $\allocs$
is swap monotone.

The maximum envy-free prices satisfy the payment identity: Agent $i$
does not envy $i+1$ so $\vali \alloci - \pricei \geq \vali
\alloci[i+1] - \pricei[i+1]$, or rearranging: $\pricei \leq
\vali(\alloci - \alloci[i+1]) + \pricei[i+1]$.  Given $\pricei[i+1]$
the maximum $\pricei$ satisfies this inequality with equality.
Letting $\pricei[n] = \vali[n]\alloci[n]$ (the maximum individually
rational payment) and induction gives the payment identity: $\pricei =
\sum^n_{j=i}\vali[j]\cdot(\alloci[j]- \alloci[j+1])$.
\end{proof}

Importantly, the above characterization leaves us free to speak of the
(maximum) envy-free revenue of any swap monotone allocation $\allocs$ on values
$\vals$, which we denote by $\EF{\allocs}(\vals)$.  For any $\vals$
and any symmetric environment we will now solve for the envy-free optimal
revenue, denoted by $\EFO(\vals)$.

\begin{figure*}[htbp]
\centering
	\includegraphics[scale=0.5]{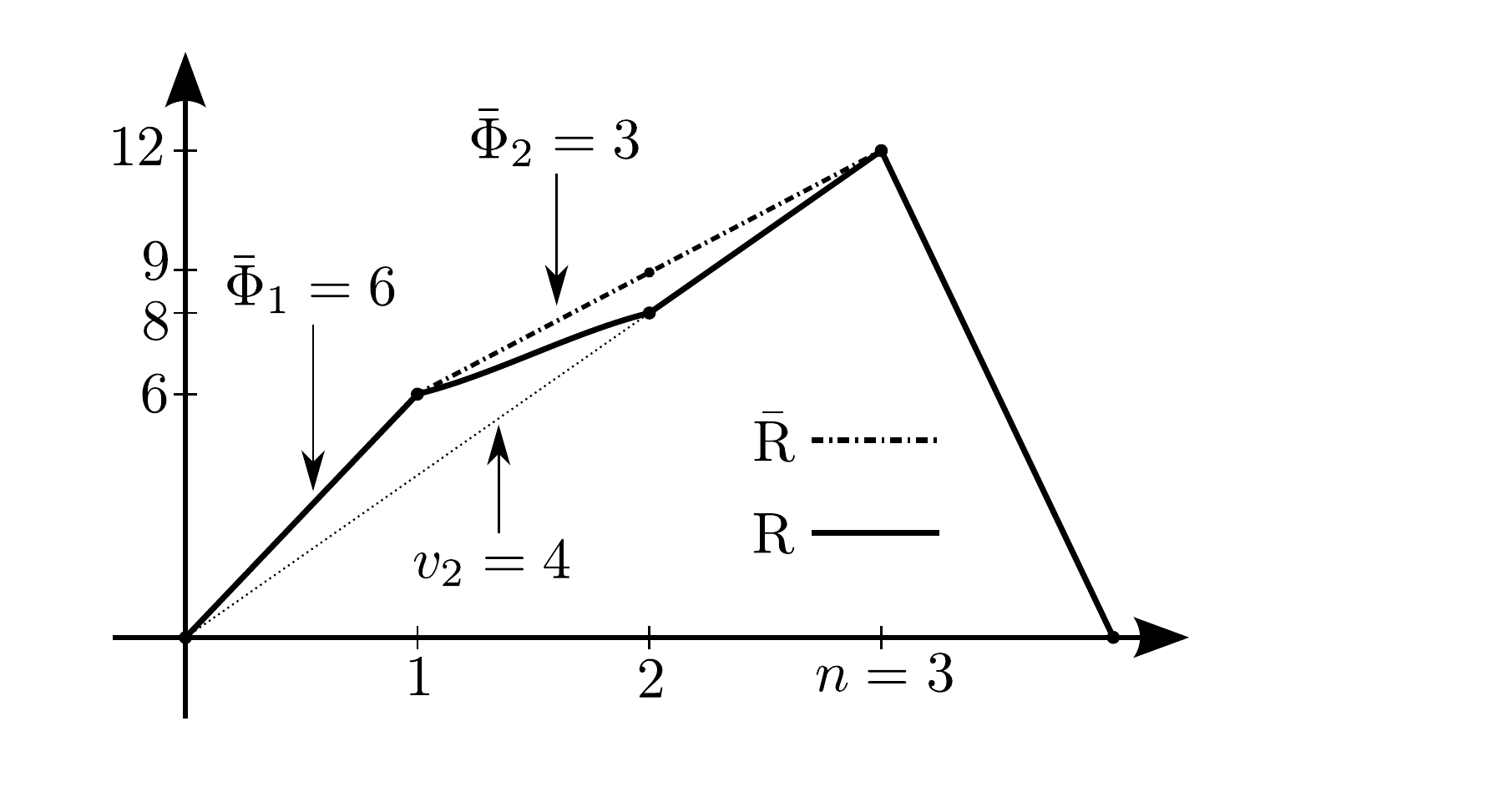}
	\caption{$\R$ and $\IR$ are the revenue curve and ironed
          revenue curve of the valuation profile $(6,4,4)$. The ironed
          virtual value of the high-value agent is $6$, and the ironed
          virtual value of the two low-value agents are both
          $(12-6)/2=3$.  E.g., the optimal EF revenue in the $k=2$ unit environment is
          $\IR(2) = 9$.}
	\label{fig:ironing}
\end{figure*}

We will characterize the envy-free optimal revenue in terms of
properties of the valuation profile $\vals$.  Given a valuation
profile $\vals$ we denote the {\em revenue curve} by $\R[\vals](i) =
i\cdot \vali$ for $i=\{1,\ldots,n\}$ (recall $\vali$'s are indexed in
decreasing order).  For convenience we also let
$\R[\vals](0)=\R[\vals](n+1)=0$.  The {\em ironed revenue curve},
denoted $\IR[\vals](i)$, is the minimum concave function that
upper-bounds $\R[\vals]$.  Likewise, define the {\em virtual valuation
  function} $\VV[\vals](\val) = \R[\vals](i) - \R[\vals](i-1)$ and the
{\em ironed virtual valuation function} $\ivv[\vals](\val) =
\IR[\vals](i) - \IR[\vals](i-1)$, where $i\in\{1,\ldots, n+1\}$ is
such that $\val \in [\vali,\vali[i-1])$.  (We set $\vali[0]=\infty$
  for notational convenience.)  See Figure~\ref{fig:ironing}.

$\R[\vals](i)$ is the best envy-free revenue one can get from serving
  exactly $i$ agents at the same price deterministically.  Consider a
  2-unit auction example with one high-value
  agent with value 6 and two low-value agents with value 4. It is
  envy free to serve one high-value agent and one low-value agent at
  price $4$, achieving revenue $\R(2)=8$. Interestingly, this is not
  optimal.
The following allocation and payments are also envy-free: serve the
high-value agent with probability 1 at price 5, and serve 
a low-value agent chosen at random at price 4.
Both units are always sold and the total revenue is $\IR(2)=9$.  
In what follows we will derive that
this revenue is optimal among all
envy-free outcomes.



\begin{lemma}
\label{l:vv=rev} The (maximum) envy-free revenue of a swap monotone allocation $\allocs$ satisfies:
$$ \EF{\allocs}(\vals) 
= \sum\nolimits_{i=1}^n \R[\vals](i) \cdot (\alloci - \alloci[i+1])
= \sum\nolimits_{i=1}^n \VV[\vals](\vali) \cdot \alloci.
$$
\end{lemma}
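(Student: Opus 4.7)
The plan is to start from the payment identity established in Lemma~\ref{l:ef_char} and then perform two summation-rearrangement steps. The maximum envy-free revenue is
$\EF{\allocs}(\vals) = \sum_{i=1}^n \pricei$
where $\pricei = \sum_{j=i}^n \vali[j]\,(\alloci[j] - \alloci[j+1])$. The main obstacle, such as it is, is bookkeeping with boundary terms $\R[\vals](0)=0$ and $\alloci[n+1]=0$; there is no combinatorial difficulty.

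For the first equality I would swap the order of summation in $\sum_{i=1}^n \sum_{j=i}^n \vali[j]\,(\alloci[j]-\alloci[j+1])$. The inner index $j$ ranges over $i\le j\le n$, so $j$ ranges over $1,\dots,n$ and for each $j$ the outer $i$ ranges over $1,\dots,j$. This contributes a factor of $j$ and yields
\begin{align*}
\sum_{i=1}^n \pricei \;=\; \sum_{j=1}^n j\,\vali[j]\,(\alloci[j]-\alloci[j+1])
\;=\; \sum_{j=1}^n \R[\vals](j)\cdot(\alloci[j]-\alloci[j+1]),
\end{align*}
using the definition $\R[\vals](j) = j\,\vali[j]$.

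For the second equality I would apply summation by parts (Abel's identity). Writing the sum as $\sum_{i=1}^n \R[\vals](i)\alloci - \sum_{i=1}^n \R[\vals](i)\alloci[i+1]$ and reindexing the second sum via $i \mapsto i-1$ (with $\alloci[n+1]=0$ killing the boundary term), it becomes $\sum_{i=1}^n \bigl(\R[\vals](i)-\R[\vals](i-1)\bigr)\alloci$, where the convention $\R[\vals](0)=0$ justifies including the $i=1$ term uniformly. By the definition of $\VV[\vals]$, for an agent with value $\vali$ we have $\VV[\vals](\vali) = \R[\vals](i)-\R[\vals](i-1)$, giving the final form $\sum_{i=1}^n \VV[\vals](\vali)\,\alloci$. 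Both identities follow, concluding the proof.
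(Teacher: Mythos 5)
Your proof is correct and matches the paper's argument essentially line for line: both start from the payment identity of Lemma~\ref{l:ef_char}, swap the order of summation in the double sum to produce the factor $j$ and hence $\R[\vals](j)$, and then apply Abel summation with the boundary conventions $\R[\vals](0)=0$ and $\alloci[n+1]=0$ to arrive at the virtual-value form. Nothing to add.
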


\begin{proof}

The proof is by the following equalities:
\begin{align*}
\EF{\allocs}(\vals) & =  \sum\nolimits_{i=1}^n \pricei = \sum\nolimits_{i=1}^n\sum\nolimits_{j=i}^n\vali[j]\cdot(\alloci[j]-\alloci[j+1])\\
 & = \sum\nolimits_{i=1}^ni\vali\cdot(\alloci-\alloci[i+1])
  =  \sum\nolimits_{i=1}^n\R(i)\cdot(\alloci-\alloci[i+1])\\
 & =  \sum\nolimits_{i=1}^n(\R(i)-\R(i-1))\cdot \alloci
  =  \sum\nolimits_{i=1}^n\VV[\vals](\vali)\cdot \alloci. \qedhere
 \end{align*}
\end{proof}

An implication of the characterization of the envy-free revenue of a
pricing as its {\em virtual surplus}, i.e., $\sum_i \VV(\vali)
\alloci$, suggests that to maximize revenue, the allocation should
maximize virtual surplus subject to swap monotonicity (and
feasibility).  In symmetric environments with monotone virtual
valuation functions, the maximization of virtual surplus results in a
swap monotone allocation.  In general symmetric environments, the
allocation that maximizes \emph{ironed} virtual surplus is both swap
monotone and revenue optimal among all swap monotone allocations.

\begin{lemma}\label{l:swap_mon}
In a symmetric environments, the allocation that maximizes ironed
virtual surplus with ties broken randomly is swap monotone.
\end{lemma}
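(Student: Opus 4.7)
The plan is to exploit two facts: the ironed virtual values are non-increasing in the agent's rank (by concavity of $\IR[\vals]$), and the feasibility constraint is invariant under arbitrary permutations of agents (by the symmetry hypothesis). Together these let us turn any virtual-surplus maximizer into a swap monotone one by a swap/symmetrization argument.

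First I would record the basic monotonicity: writing $\alpha_i := \ivv[\vals](\vali) = \IR[\vals](i)-\IR[\vals](i-1)$, concavity of $\IR[\vals]$ gives $\alpha_1\geq\alpha_2\geq\cdots\geq\alpha_n$. Let $\allocs$ be any feasible allocation maximizing $\sum_i \alpha_i \alloci$. For a pair $i<j$ with $\alpha_i>\alpha_j$, I claim $\alloci\geq\alloci[j]$: if not, then by symmetry the allocation $\allocs'$ obtained from $\allocs$ by exchanging coordinates $i$ and $j$ is also feasible, and
\begin{equation*}
\sum_k \alpha_k \alloci[k]' - \sum_k \alpha_k \alloci[k]
= (\alpha_i-\alpha_j)(\alloci[j]-\alloci[i]) > 0,
\end{equation*}
contradicting optimality of $\allocs$. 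So strict inequalities in $\alpha$ force the corresponding inequalities in $\allocs$ in the right direction.

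Next I handle the tied agents, which is where the ``ties broken randomly'' clause does the work. Partition $\{1,\ldots,n\}$ into maximal blocks $B_1,B_2,\ldots$ on which the $\alpha_i$ are constant (these are exactly the ironed intervals of $\IR[\vals]$). Because the environment is symmetric and $\alpha$ is constant on each block, the set of optimal allocations is closed under any permutation $\pi$ that acts only within the blocks. Formally, define the randomized allocation
\begin{equation*}
\bar{\allocs} \;=\; \mathbb{E}_\pi\!\bigl[\pi(\allocs)\bigr],
\end{equation*}
where $\pi$ is drawn uniformly from the product of the symmetric groups on the blocks. Then $\bar{\allocs}$ is a convex combination of feasible, virtual-surplus-maximizing allocations (so it is itself feasible as a randomized outcome and optimal), and $\bar\alloci$ is constant on each block.

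Combining the two parts: between distinct blocks, the previous swap argument applied to $\bar{\allocs}$ (which is still a virtual-surplus maximizer) gives $\bar\alloci\geq\bar\alloci[i+1]$ whenever $i$ and $i+1$ lie in different blocks; within each block equality holds by construction. Thus $\bar{\allocs}$ is swap monotone. I expect the main subtlety to be pinning down what ``ties broken randomly'' means in the lemma statement; the plan above interprets it concretely as the symmetrization across within-block permutations, which is the minimal randomization needed to restore swap monotonicity and is enabled exactly by the symmetry of $\feasibles$. Everything else is a short swap argument plus concavity of the ironed revenue curve.
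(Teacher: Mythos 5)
Your proposal is correct and takes essentially the same approach as the paper: a swap argument shows that strictly larger ironed virtual value forces a weakly larger allocation, and random tie-breaking together with symmetry of the environment forces equality of allocations within ironed intervals. You spell out the symmetrization more explicitly than the paper does, but it is the same argument.
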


\begin{proof} 
Suppose $\ivv(\vali) > \ivv(\vali[j])$ then $\alloci \geq \alloci[j]$;
otherwise, swapping $\alloci$ for $\alloci[j]$ would have higher
ironed virtual surplus.  Suppose $\ivv(\vali) = \ivv(\vali[j])$, then
$\alloci = \alloci[j]$ because of random tie breaking and the
symmetry of the environment.
\end{proof}

\begin{theorem}
\label{t:efo}
In any symmetric environment with any valuation profile $\vals$, the
allocation $\allocs$ that maximizes ironed virtual surplus
w.r.t.~$\ivv[\vals]$ maximizes envy-free revenue among all
swap-monotone allocations. I.e., $\EFO(\vals) = \EF{\allocs}(\vals)$.
\end{theorem}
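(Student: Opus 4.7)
The plan is to combine Lemma~\ref{l:vv=rev} and Lemma~\ref{l:swap_mon} via a summation-by-parts argument that exploits the pointwise inequality $\IR \geq \R$ and the swap-monotonicity of the ironed virtual surplus maximizer. Two rewritings are the key: for any swap-monotone $\allocs$, Lemma~\ref{l:vv=rev} already gives
$\EF{\allocs}(\vals) = \sum_{i=1}^n \R[\vals](i)(\alloci - \alloci[i+1])$, and the same telescoping identity applied to $\ivv$ yields $\sum_{i=1}^n \ivv[\vals](\vali)\alloci = \sum_{i=1}^n \IR[\vals](i)(\alloci - \alloci[i+1])$. Since $\allocs$ is swap monotone, every increment $\alloci - \alloci[i+1]$ is non-negative, so the pointwise bound $\IR \geq \R$ immediately implies
$\EF{\allocs}(\vals) \leq \sum_{i=1}^n \ivv[\vals](\vali)\alloci$ for every swap-monotone $\allocs$.

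Next I let $\allocs^*$ be the allocation maximizing ironed virtual surplus with ties broken uniformly at random; by Lemma~\ref{l:swap_mon}, $\allocs^*$ is swap monotone, so the preceding inequality applies to it. The heart of the argument is showing that the inequality is actually tight for $\allocs^*$. Decompose the index set into maximal ``ironed intervals'' $[i,j]$ on which $\IR$ is a strict linear upper bound on $\R$ (equivalently, the $\ivv[\vals](\vali[k])$ are equal for $k = i,\ldots,j$) together with the complementary ``non-ironed'' indices at which $\IR(k) = \R(k)$. Outside ironed intervals the summands on the two sides already agree. On an ironed interval, the ironed virtual values are constant, so Lemma~\ref{l:swap_mon}'s symmetry/tie-breaking argument forces $\alloci^*[i] = \alloci^*[i+1] = \cdots = \alloci^*[j]$; hence $\alloci^*[k] - \alloci^*[k+1] = 0$ for $k = i,\ldots,j-1$, and the only surviving contribution on this interval comes from $k = j$, where $\R(j) = \IR(j)$ by definition of a maximal ironed interval. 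Therefore $\EF{\allocs^*}(\vals) = \sum_i \ivv[\vals](\vali)\alloci^*$.

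Chaining the pieces gives, for every swap-monotone $\allocs$,
\[
\EF{\allocs}(\vals) \;\leq\; \sum_{i=1}^n \ivv[\vals](\vali)\alloci
\;\leq\; \sum_{i=1}^n \ivv[\vals](\vali)\alloci^*
\;=\; \EF{\allocs^*}(\vals),
\]
where the middle inequality is the optimality of $\allocs^*$ for ironed virtual surplus. By Lemma~\ref{l:ef_char} every envy-free outcome arises from a swap-monotone allocation, so this proves $\EFO(\vals) = \EF{\allocs^*}(\vals)$.

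I expect the main obstacle to be the equality claim on ironed intervals: one needs that the ironed-virtual-surplus maximizer actually equalizes the $\alloci^*$ across indices lying on a common linear piece of $\IR$, which in turn relies crucially on the environment being symmetric and on randomized tie-breaking. Without symmetry one could imagine a feasibility constraint that forces distinct allocation probabilities among agents with equal ironed virtual value, and then the slack $\R(k) < \IR(k)$ would leave a gap between envy-free revenue and ironed virtual surplus. Fortunately this is precisely the content already established in Lemma~\ref{l:swap_mon}, so the theorem reduces to carefully bookkeeping the telescoping sums on each ironed interval.
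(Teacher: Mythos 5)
Your proof is correct and follows essentially the same route as the paper: the paper packages your summation-by-parts inequality $\EF{\allocs}(\vals)\leq\sum_i\ivv[\vals](\vali)\alloci$ (with tightness iff $\alloci=\alloci[i+1]$ wherever $\IR(i)>\R(i)$) into Lemma~\ref{l:ironing}, and then invokes it together with Lemma~\ref{l:swap_mon} exactly as you do. Your re-derivation of the inequality and equality condition via maximal ironed intervals is the same bookkeeping as the paper's Lemma~\ref{l:ironing}, just inlined.
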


This theorem is proved by a useful lemma that relates revenue to
ironed virtual surplus.

\begin{lemma}
\label{l:ironing}
 For any swap-monotone allocation $\allocs$ on valuation profile
 $\vals$,
$$
\EF{\allocs}(\vals) 
\leq \sum\nolimits_{i=1}^n \ivv[\vals](\vali) \cdot \alloci
=    \sum\nolimits_{i=1}^n\IR[\vals](i)\cdot(\alloci[i]-\alloci[i+1]),
$$ 
with equality holding if and only if $\alloci = \alloci[i+1]$
whenever $\IR[\vals](i) > \R[\vals](i)$.
\end{lemma}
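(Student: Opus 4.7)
The plan is to start from the identity in Lemma~\ref{l:vv=rev} that
expresses $\EF{\allocs}(\vals)$ as $\sum_i \R[\vals](i)\,(\alloci - \alloci[i+1])$, and
then compare this, term by term, with the right-hand expression
$\sum_i \IR[\vals](i)\,(\alloci - \alloci[i+1])$. By definition of the
ironed revenue curve, $\IR[\vals](i) \geq \R[\vals](i)$ for every
$i$, and by swap monotonicity of $\allocs$ (Lemma~\ref{l:ef_char}) we have
$\alloci - \alloci[i+1] \geq 0$. Hence each term on the left is at most
the corresponding term on the right, which yields the stated inequality.
The characterization of equality falls out of this term-by-term
comparison: the inequality is strict only on those $i$ where both
$\IR[\vals](i) > \R[\vals](i)$ and $\alloci - \alloci[i+1] > 0$, so
equality holds throughout if and only if $\alloci = \alloci[i+1]$ at
every $i$ where $\IR[\vals](i) > \R[\vals](i)$.

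It then remains to verify the second, purely algebraic equality
$\sum_{i=1}^n \IR[\vals](i)\,(\alloci - \alloci[i+1]) = \sum_{i=1}^n \ivv[\vals](\vali)\,\alloci$.
This is an Abel summation (index swap) exactly analogous to the one
performed in the proof of Lemma~\ref{l:vv=rev}, using the convention
$\alloci[n+1] = 0$ and $\IR[\vals](0)=0$, together with the definition
$\ivv[\vals](\vali) = \IR[\vals](i) - \IR[\vals](i-1)$. Concretely, one
expands $\alloci - \alloci[i+1]$ telescopically and regroups, or
equivalently writes $\sum_i \IR[\vals](i)(\alloci-\alloci[i+1]) = \sum_i (\IR[\vals](i)-\IR[\vals](i-1))\alloci$.

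The main (very small) obstacle is bookkeeping of boundary terms and
making sure the indexing convention for $\ivv$ on the half-open
intervals $[\vali,\vali[i-1])$ lines up with the discrete index $i$;
once the telescoping is set up with $\alloci[n+1]=0$ and
$\IR[\vals](0)=0$, everything collapses. No new idea beyond concavity
of $\IR[\vals]$, swap monotonicity of $\allocs$, and summation by
parts is needed.
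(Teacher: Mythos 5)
Your proof is correct and takes essentially the same route as the paper: both start from the identity $\EF{\allocs}(\vals)=\sum_i \R(i)(\alloci-\alloci[i+1])$ established in Lemma~\ref{l:vv=rev}, compare term-by-term against $\sum_i \IR(i)(\alloci-\alloci[i+1])$ using $\IR\geq\R$ and swap monotonicity ($\alloci\geq\alloci[i+1]$), and read off the equality condition from the non-negativity of each difference $(\IR(i)-\R(i))(\alloci-\alloci[i+1])$. The only cosmetic difference is that you spell out the Abel-summation step equating $\sum_i\ivv(\vali)\alloci$ with $\sum_i\IR(i)(\alloci-\alloci[i+1])$, which the paper leaves implicit as a repeat of the manipulation in Lemma~\ref{l:vv=rev}.
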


\begin{proof}
To show the inequality, we have:
\begin{align*}
\EF{\allocs}(\vals) &= \sum\nolimits_{i=1}^n\R(i)\cdot(\alloci[i]-\alloci[i+1])\\
 &=  \sum\nolimits_{i=1}^n\IR(i)\cdot(\alloci[i]-\alloci[i+1])\\
 &\quad -\sum\nolimits_{i=1}^n(\IR(i)-\R(i))\cdot(\alloci[i]-\alloci[i+1])\\
 &\leq  \sum\nolimits_{i=1}^n\IR(i)\cdot(\alloci[i]-\alloci[i+1]),
 \end{align*}
where we use the fact that $\IR(i)\geq \R(i)$ and $\alloci[i]\geq \alloci[i+1]$.
Clearly the equality holds if and only if  $\alloci[i]=\alloci[i+1]$
whenever $\IR(i)>\R(i)$.
\end{proof}

\begin{proof}[Proof of Theorem~\ref{t:efo}]
Consider $\allocs$ that optimizes ironed virtual surplus with random
tie breaking and also consider any other swap monotone $\allocs'$.
Note that whenever $\IR(i)>\R(i)$, we have
$\ivv[\vals](\vali)=\ivv[\vals](\vali[i+1])$ for which random
tie-breaking implies $\alloci[i]=\alloci[i+1]$.  Therefore $\allocs$
satisfies Lemma~\ref{l:ironing} with equality and it is optimal for
the summation of the equality, whereas $\allocs'$ satisfies it with
inequality and may not be optimal for the summation.  Thus
$\EF{\allocs}(\vals) \geq \EF{\allocs'}(\vals)$ and $\allocs$ is
revenue optimal.
\end{proof}

%
%
As an example of this theorem, consider the position auction
environment with weights $w_1 \geq w_2 \geq \ldots \geq w_n$.  An
ironed virtual surplus maximizer assigns agents with higher ironed
virtual values to slots with larger click probabilities, breaking ties
randomly, ignoring agents with negative ironed virtual values. The
ironed virtual surplus, and thus revenue, is $\sum_{\{i \suchthat
  \ivv(\vali) \geq 0\}} \ivv(\vali) \cdot w_i$, which can be read off
the revenue curve, e.g., Figure~\ref{fig:ironing}.

Importantly, ironed virtual surplus maximization for position auctions
is ordinal, i.e., only the order of the ironed virtual values matters.
The optimal envy-free outcome can then rephrased as follows: First,
tentatively assign the agents to slots in order of their values.
Second, randomly permute the order of each group of agents with
equal ironed virtual surplus.  In section~\ref{s:reduction} we will
discuss consequences for environments for which surplus maximization
is ordinal.

In downward-closed permutation environments, after the set system is
realized, we find the allocation that maximizes ironed virtual
surplus.  With the appropriate payments, this outcome is envy free
given the permutation.  Importantly, this optimization is not ordinal.

\section{Incentive Compatibility versus Envy Freedom}

\label{s:ic_vs_ef}

The major challenge in designing and analyzing incentive compatible
mechanisms is that the incentive constraint binds across all possible
misreports of the agents.  We therefore view a mechanism as an
allocation rule and payment rule pair where $\allocs(\vals)$ and
$\prices(\vals)$ denote the allocation and payments as a function of
the agent values.

%
%
\begin{definition}[Incentive Compatibility]
A mechanism is {\em incentive compatible} if no agent prefers the
outcome when misreporting her value to the outcome when reporting the
truth.  Formally,
\begin{align*}
\forall i, z, \vals, \quad  \vali \alloci(\vals) - \pricei(\vals)
     &\geq \vali \alloci(\drop{i}{z}) - \pricei(\drop{i}{z}),
\end{align*}
where $(\drop{i}{z})$ is obtained from $\vals$ with
$\vali$ replaced by $z$.
\end{definition}

\begin{definition}[Value Monotonicity]
An allocation rule is {\em value monotone} if the probability that an
agent is served is monotone non-decreasing in her value, i.e.,
$\alloci(\drop{i}{z})$ is non-decreasing in $z$ for all agents $i$.
\end{definition}

The following well-known theorem characterizes ex post IC mechanisms.

\begin{theorem} \citep{mye-81}
\label{t:IC_payment}
An allocation rule $\allocs(\cdot)$ admits a non-negative and individually
rational payment rule $\prices(\cdot)$ such that $(\allocs,\prices)$ is
incentive compatible if and only if $\allocs(\cdot)$ is value monotone, and
the uniquely determined payment rule is:
$$\pricei(\vals) = \vali \alloci(\vals) - \int_0^{\vali}
\alloci(\drop{i}{z}) dz.$$
\end{theorem}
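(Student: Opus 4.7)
The plan is to prove both directions of the biconditional, following the standard two-step deviation argument that parallels the envy-freedom characterization in Lemma~\ref{l:ef_char}, and then show that the payment formula is forced. Throughout I fix agent $i$ and the report profile $\vals_{-i}$ of the other agents, and write $\alpha(z) = \alloci(\drop{i}{z})$ and $\pi(z) = \pricei(\drop{i}{z})$ for shorthand.

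\textbf{IC implies value monotonicity and the payment identity.} For any two candidate reports $z, z'$, incentive compatibility says both $z\alpha(z) - \pi(z) \geq z\alpha(z') - \pi(z')$ and $z'\alpha(z') - \pi(z') \geq z'\alpha(z) - \pi(z)$. Adding these inequalities and rearranging yields $(z-z')(\alpha(z)-\alpha(z')) \geq 0$, which is exactly value monotonicity of $\alpha$. For the payment identity, rewrite the two IC inequalities as $\alpha(z')(z-z') \leq \pi(z)-\pi(z') \leq \alpha(z)(z-z')$, so that for $z' < z$ we sandwich the price difference between $\alpha(z')(z-z')$ and $\alpha(z)(z-z')$. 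Since $\alpha$ is monotone (hence Riemann-integrable), letting the mesh go to zero gives $\pi(\vali) - \pi(0) = \int_0^{\vali} \alpha(z)\,dz$. Individual rationality forces $\pi(0) \leq 0\cdot \alpha(0) = 0$, and to maximize revenue (or simply as the unique rule after also requiring non-negativity of payments for $z=0$) we take $\pi(0)=0$, which yields $\pricei(\vals) = \int_0^{\vali}\alpha(z)\,dz$; rearranging via $\util_i = \vali\alloci(\vals) - \pricei(\vals)$ gives the stated formula.

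\textbf{Value monotonicity and the payment identity imply IC, IR, and non-negative payments.} Suppose $\alpha$ is monotone non-decreasing and $\pi(z) = z\alpha(z) - \int_0^z \alpha(t)\,dt$. Then $\util_i$ at truth $\vali$ is $\int_0^{\vali}\alpha(t)\,dt$ and utility under misreport $z$ is $\vali \alpha(z) - \pi(z) = \vali\alpha(z) - z\alpha(z) + \int_0^z \alpha(t)\,dt$. The difference truth-minus-deviation equals $\int_0^{\vali}\alpha(t)\,dt - \int_0^z \alpha(t)\,dt - (\vali - z)\alpha(z) = \int_z^{\vali} (\alpha(t)-\alpha(z))\,dt$, which is non-negative by monotonicity of $\alpha$ regardless of whether $z \leq \vali$ or $z \geq \vali$ (in the latter case both the outer sign and the integrand flip). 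This establishes IC. For IR, $\util_i = \int_0^{\vali}\alpha(t)\,dt \geq 0$ since $\alpha \geq 0$. For non-negativity of the payment itself, $\pi(\vali) = \int_0^{\vali}(\alpha(\vali)-\alpha(t))\,dt \geq 0$ again by monotonicity.

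\textbf{Uniqueness.} The sandwich derived in the first step already pins down $\pi(\vali) - \pi(0)$ as the integral of $\alpha$, and both $\pi(0)\geq 0$ (non-negativity) and $\pi(0)\leq 0$ (IR at $z=0$, using $\alpha(0)\geq 0$) force $\pi(0)=0$, so $\pricei$ is uniquely determined.

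I do not anticipate a serious obstacle: the only technical subtlety is replacing the discrete summation used in Lemma~\ref{l:ef_char} with a Riemann integral, which is justified because value monotonicity makes $\alpha(\cdot)$ monotone and therefore Riemann-integrable on every bounded interval. Everything else is the same two-deviation algebra used for the envy-free characterization, now applied across profiles $\drop{i}{z}$ rather than across agents.
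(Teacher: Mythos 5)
The paper cites \citet{mye-81} for this theorem and does not include its own proof, so there is nothing in-paper to compare against; your plan is the standard one and parallels the envy-free Lemma~\ref{l:ef_char} as you intended. There is, however, a genuine error in your first direction. The two IC inequalities $z\alpha(z)-\pi(z)\geq z\alpha(z')-\pi(z')$ and $z'\alpha(z')-\pi(z')\geq z'\alpha(z)-\pi(z)$ yield the price sandwich
$$z'(\alpha(z)-\alpha(z'))\ \leq\ \pi(z)-\pi(z')\ \leq\ z(\alpha(z)-\alpha(z')),$$
\emph{not} the sandwich $\alpha(z')(z-z')\leq\pi(z)-\pi(z')\leq\alpha(z)(z-z')$ that you wrote. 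What you wrote is the sandwich on the \emph{indirect utility} $U(z)=z\alpha(z)-\pi(z)$, namely $\alpha(z')(z-z')\leq U(z)-U(z')\leq\alpha(z)(z-z')$. As a consequence your limit should read $U(\vali)-U(0)=\int_0^{\vali}\alpha(z)\,dz$, not $\pi(\vali)-\pi(0)=\int_0^{\vali}\alpha(z)\,dz$; the latter is simply false, and the remark that one recovers the stated formula ``by rearranging via $u_i=\vali\alpha(\vali)-\pi$'' does not repair it, since rearranging a claimed identity $\pi=\int\alpha$ merely produces a formula for $u_i$, not a different formula for $\pi$. The fix is small: either track $U$ throughout, conclude $U(\vali)=\int_0^{\vali}\alpha(z)\,dz$ (using $U(0)=-\pi(0)=0$), and then read off $\pi(\vali)=\vali\alpha(\vali)-U(\vali)$; or work with the correct price sandwich above and note that the two Riemann--Stieltjes sums converge to $\int_0^{\vali}z\,d\alpha(z)=\vali\alpha(\vali)-\int_0^{\vali}\alpha(z)\,dz$ by integration by parts. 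Your derivation of monotonicity, your converse direction (payment formula implies IC, IR, and non-negativity), and your uniqueness argument pinning $\pi(0)=0$ are all correct.
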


Because the payments are uniquely determined by the allocation rule,
for any allocation rule $\allocs(\cdot)$, we let $\IC{\allocs}(\vals)$
denote the IC revenue from running $\allocs(\cdot)$ over $\vals$.

We now compare envy-free revenue to incentive-compatible revenue for
ironed virtual surplus optimizers in permutation environments, i.e.,
where agents are assigned to roles in the set system via a random
permutation.  We show that these quantities are often within a factor
of two of each other.


First we lower bound IC revenue by half of the maximum envy-free
revenue under a technical condition.  In the following we use
$\IC{\ivv}_i(\vals)$ and $\EF{\ivv}_i(\vals)$ to denote the IC and EF
revenue from agent $i$ by applying the ironed virtual surplus
maximizer $\ivv$, respectively.

\begin{lemma}
\label{l:IC>EF/2}
For downward-closed permutation environments, all valuations $\vals$,
and $\ivv$, the ironed virtual valuation function corresponding to some
$\vals'$ obtained from $\vals$ by setting a subset of agents' values
to be 0, we have that $\IC{\ivv}_i(\vals) \geq \tfrac{1}{2}
\EF{\ivv}_i(\vals)$ for all $i$.
\end{lemma}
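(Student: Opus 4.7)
The plan is to compare $\pricei^{IC}$ and $\pricei^{EF}$ via the bucket decomposition induced by $\ivv$ on $\vals$. Since $\ivv$ is a step function with distinct values $\Phi_1>\Phi_2>\cdots>\Phi_m$ on value-intervals $[L_l,L_{l-1})$ (determined by $\vals'$), the agents of $\vals$ partition into buckets $\mathcal{B}_l=\{j:\ivv(\vali[j])=\Phi_l\}$ of sizes $b_l$. By the symmetry of the permutation environment together with random tie-breaking, the ironed virtual surplus maximizer gives every agent in $\mathcal{B}_l$ the same expected allocation $a_l$. Writing $l_i$ for agent $i$'s bucket, Lemma~\ref{l:ef_char} gives
\[
\pricei^{EF}=\sum\nolimits_{l\ge l_i}\vali[e_l]\,(a_l-a_{l+1}),
\]
where $\vali[e_l]$ is the smallest value in $\mathcal{B}_l$ (and $a_{m+1}=0$). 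For the IC side, agent $i$'s allocation as a function of her report $z$ is piecewise constant on the $\ivv$-intervals: when $z\in[L_l,L_{l-1})$, her virtual value equals $\Phi_l$, she effectively joins bucket $l$ of the modified profile, and by symmetry she receives a well-defined allocation $A_i(l)$. Applying Myerson's formula (Theorem~\ref{t:IC_payment}) as a Stieltjes integral yields
\[
\pricei^{IC}=\sum\nolimits_{l\ge l_i}L_l\,(A_i(l)-A_i(l+1)).
\]

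Two pointwise inequalities make these sums comparable. First, on the value side, $\vali[e_l]\ge L_l$ since $\vali[e_l]\in[L_l,L_{l-1})$. Second, on the allocation side, the downward-closed property implies that enlarging bucket $l$ from $b_l$ to $b_l+1$ (by inserting agent $i$) cannot decrease the total allocation to the bucket: the same feasible sets remain available and the maximizer only gains options. This yields $(b_l+1)A_i(l)\ge b_l a_l$, and hence
\[
A_i(l)\ge \frac{b_l}{b_l+1}\,a_l\ge \frac12\,a_l
\]
whenever $b_l\ge 1$. The hypothesis that $\vals'$ is obtained from $\vals$ by zeroing a subset is used here to guarantee that the level boundaries $L_l$ are aligned with the bucket structure of $\vals$, so the bookkeeping on both sides goes through.

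The main obstacle is the combinatorial step of turning these level-wise bounds into the global inequality $\pricei^{IC}\ge\tfrac12\pricei^{EF}$, since the jumps $a_l-a_{l+1}$ and $A_i(l)-A_i(l+1)$ do not compare directly. I would handle this by Abel summation: re-express both sums as $\sum_l c_l\,a_l$ and $\sum_l c_l'\,A_i(l)$, with coefficients given by differences of consecutive $\vali[e_l]$'s and $L_l$'s respectively, and then combine the two pointwise bounds ($\vali[e_l]\ge L_l$ and $A_i(l)\ge\tfrac12 a_l$) to obtain the claimed factor-of-$2$ comparison.
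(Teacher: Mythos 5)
Your bucket decomposition and the two sums $\pricei^{EF}=\sum_{l\ge l_i}\vali[e_l](a_l-a_{l+1})$ and $\pricei^{IC}=\sum_{l\ge l_i}L_l(A_i(l)-A_i(l+1))$ are set up correctly, and the observation $\vali[e_l]\ge L_l$ is fine. But the plan does not close, for two reasons that both concern the allocation-side bound.

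First, the pointwise bound $A_i(l)\ge\tfrac12 a_l$ points the wrong way. Carrying out the Abel summation you propose gives, with $a_{m+1}=A_i(m+1)=0$ and $A_i(l_i)=a_{l_i}$,
\begin{align*}
\pricei^{EF}&=\vali[e_{l_i}]\,a_{l_i}-\sum\nolimits_{l>l_i}\bigl(\vali[e_{l-1}]-\vali[e_l]\bigr)\,a_l,\\
\pricei^{IC}&=L_{l_i}\,a_{l_i}-\sum\nolimits_{l>l_i}\bigl(L_{l-1}-L_l\bigr)\,A_i(l),
\end{align*}
and the coefficients $\vali[e_{l-1}]-\vali[e_l]$ and $L_{l-1}-L_l$ are \emph{positive}: the $A_i(l)$'s for $l>l_i$ enter with negative sign. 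To lower-bound $\pricei^{IC}$ you therefore need an \emph{upper} bound on $A_i(l)$ (the allocation when $i$ misreports down into a lower bucket), not the lower bound $A_i(l)\ge\tfrac12 a_l$ you derive from "enlarging the bucket cannot decrease its total allocation." Moreover, the leading term requires $L_{l_i}\ge\tfrac12\vali[e_{l_i}]$, which is not implied by anything in your setup, since the bucket intervals can be arbitrarily wide. So even granting your claim $(b_l+1)A_i(l)\ge b_l a_l$, the factor-of-$2$ comparison does not follow.

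What the paper actually does is more elementary and avoids bucket aggregation entirely: it compares term by term in the index $j$ over individual agents, and reduces to a single pointwise inequality, $\alloci(\vals)-\alloci(\drop{i}{\vali[j+1]})\ge\tfrac12\bigl(\alloci(\vals)-\alloci[j+1](\vals)\bigr)$, which by the symmetry identity $\alloci(\drop{i}{\vali[j+1]})=\alloci[j+1](\drop{i}{\vali[j+1]})$ is equivalent to the statement that the \emph{combined} service probability of agents $i$ and $j+1$ cannot increase when $i$ lowers her bid to $\vali[j+1]$. This is exactly an upper bound on the misreport allocation, in line with the sign structure above, and is verified by fixing the permutation and a short case analysis (either $i$ still wins after the decrease and nothing changes, or $j+1$ wins while $i$ loses and at least one of them won before). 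You would need to replace your Step~3 lower bound by a combinatorial upper bound of this type; the "more options for the optimizer" intuition is about the wrong quantity.
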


\begin{proof} 
Let $\allocs(\cdot)$ denote the allocation rule of the ironed virtual
surplus optimizer $\ivv$.  By the assumption of the lemma, for all
$j$, $\ivv(z)$ is constant for all $z\in [\vali[j+1],\vali[j])$, and
  hence the IC allocation rule in fact maps each $z\in
  [\vali[j+1],\vali[j])$ to $\alloci[i](\drop{i}{\vali[j+1]})$.

By Lemma~\ref{t:IC_payment}, $\IC{\ivv}_i(\vals)$ is equal to
$\sum^n_{j=i}(\vali[j]-\vali[j+1])\cdot(\alloci[i](\vals)-\alloci[i](\drop{i}{\vali[j+1]}))$ which, referring to Figure~\ref{fig:ic_vs_ef}, 
equals the area above the IC curve and below the horizontal dotted
line.  On the other hand, $\EF{\ivv}_i(\vals)$ is equal to
$\sum^n_{j=i}(\vali[j]-\vali[j+1])\cdot(\alloci[i](\vals)-\alloci[j+1](\vals))$,
which similarly corresponds to the area above the EF curve and below
the horizontal dotted line.  It suffices to prove that:
$\alloci[i](\vals)-\alloci(\drop{i}{\vali[j+1]}) \geq \frac{1}{2}
\cdot (\alloci[i](\vals)-\alloci[j+1](\vals))$. 
Note that $\alloci(\drop{i}{\vali[j+1]})=\alloci[j+1](\drop{i}{\vali[j+1]})$ as
now agents $i$ and $j+1$ have the same value, this is equivalent to
$\alloci[i](\vals)+\alloci[j+1](\vals)\geq
\alloci(\drop{i}{\vali[j+1]})+\alloci[j+1](\drop{i}{\vali[j+1]})$.

The last inequality says that the total winning probability of agent $i$ and $j+1$
can only decrease if agent $i$ lowers her bid to $\vali[j+1]$.
To prove this, we fix the permutation that maps agents to roles of the set system,
and show that the number of winning 
agents from $i$ and $j+1$ can only be lower after agent $i$ decreases her value.
There are two cases to verify:
(1) Agent $i$ wins after the decrease. 
Then before the decrease, agent $i$ had higher value, 
and the optimal feasible set would be the same.
(2) Agent $j+1$ wins and agent $i$ loses after the decrease.
Then before the decrease, at least one of agents $i$ and $j+1$ would win.
\end{proof}

\begin{figure}[htbp]
	\begin{center}\vspace{-.2in}
		\includegraphics[scale=0.8]{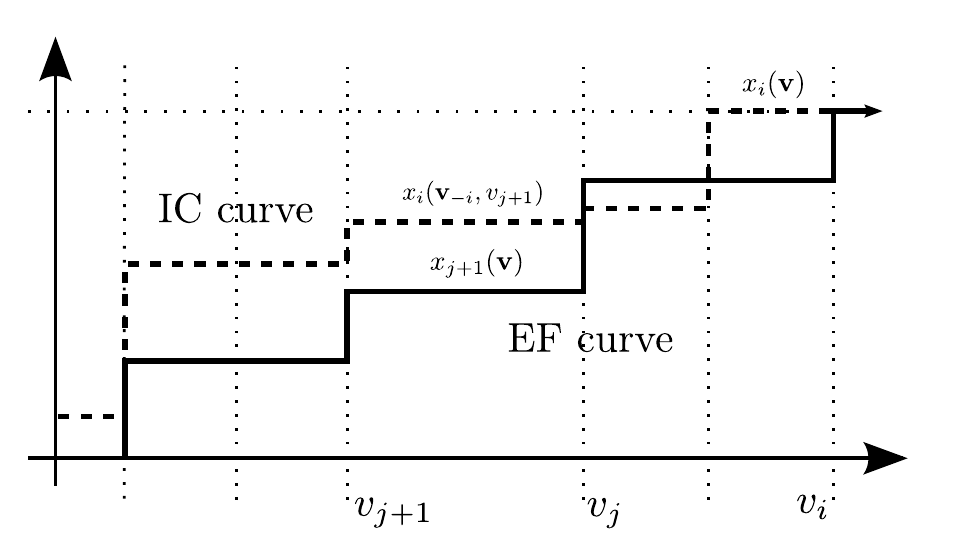}
        \end{center} \vspace{-.2in}
	\caption{Depiction of EF allocation and IC allocation rule
          from which the payments for agent $i$ are computed.  The EF
          allocation curve maps each value in $[\vali[j+1],\vali[j])$
            to $\alloci[j+1](\vals)$, and the IC allocation curve maps
            each $z$ to $\alloci[i](\drop{i}{z})$.}
	\label{fig:ic_vs_ef}
\end{figure}

In matroid permutation environments, envy-free 
revenue upper-bounds incentive-compatible revenue.

\begin{lemma}
\label{l:EF>IC}
For matroid permutation environments, all valuations $\vals$, and all ironed
virtual valuation functions $\ivv$, for all agent $i$, $\EF{\ivv}_i(\vals) \geq
\IC{\ivv}_i(\vals)$.
\end{lemma}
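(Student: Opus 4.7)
The plan is to prove the inequality term by term using the payment identities. By Lemma~\ref{l:ef_char}, the envy-free payment is $\EF{\ivv}_i(\vals)=\sum_{j=i}^{n}(\vali[j]-\vali[j+1])(\alloci[i](\vals)-\alloci[j+1](\vals))$. On the other hand, because $\ivv$ is constant on each interval $[\vali[j+1],\vali[j])$, the function $z\mapsto\alloci[i](\drop{i}{z})$ is itself constant on the same intervals, so Theorem~\ref{t:IC_payment} simplifies to $\IC{\ivv}_i(\vals)=\sum_{j=i}^{n}(\vali[j]-\vali[j+1])(\alloci[i](\vals)-\alloci[i](\drop{i}{\vali[j+1]}))$. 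Matching the two sums term by term, the lemma reduces to the inequality
$$\alloci[j+1](\vals)\ \leq\ \alloci[i](\drop{i}{\vali[j+1]})\quad\text{for each }j\geq i.$$

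Next I would use the symmetry of the permutation environment to simplify the right-hand side. In the profile $\drop{i}{\vali[j+1]}$ agents $i$ and $j+1$ have equal values and hence equal ironed virtual values, so Lemma~\ref{l:swap_mon} combined with permutation invariance yields $\alloci[i](\drop{i}{\vali[j+1]})=\alloci[j+1](\drop{i}{\vali[j+1]})$. The required bound thus becomes the single-agent monotonicity statement $\alloci[j+1](\vals)\leq\alloci[j+1](\drop{i}{\vali[j+1]})$: lowering agent $i$'s report from $\vali$ to $\vali[j+1]$ can only help agent $j+1$.

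The main obstacle, and the step that genuinely uses matroid structure, is proving this last inequality. I would work permutation by permutation: for each realization $\pi$ of the role assignment, the mechanism outputs a maximum-weight independent set of the matroid with role $\pi(k)$ carrying weight $\max(\ivv(\vali[k]),0)$, and because $\ivv$ is monotone nondecreasing, substituting $\vali[j+1]$ for $\vali$ only weakly decreases role $\pi(i)$'s weight while leaving all others fixed. I would then apply the classical matroid-monotonicity fact that decreasing one ground-set element's weight weakly increases every other element's indicator of membership in the maximum-weight independent set, equivalently that in matroid VCG each agent's threshold price is monotone non-decreasing in every other agent's report. This can be proved by strong circuit exchange: if role $\pi(j+1)$ were dropped from the new optimum $I'$, a circuit $C\subseteq I'\cup\{\pi(j+1)\}$ through $\pi(j+1)$, together with $\pi(j+1)\in I$ from the old optimum, would produce an independent set whose weight exceeds either $w(I)$ or $w'(I')$. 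Averaging over $\pi$ then yields the pointwise allocation inequality. It is precisely the augmentation axiom that drives this argument, which is why the factor-of-$1/2$ slack in Lemma~\ref{l:IC>EF/2} tightens to an outright inequality in the matroid setting.
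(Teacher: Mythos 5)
Your overall route is the same as the paper's: express the EF payment via Lemma~\ref{l:ef_char}, express the IC payment via Theorem~\ref{t:IC_payment}, reduce to the single monotonicity fact $\alloci[j+1](\vals)\leq\alloci[j+1](\drop{i}{\vali[j+1]})$ (after using symmetry to write $\alloci(\drop{i}{\vali[j+1]})=\alloci[j+1](\drop{i}{\vali[j+1]})$), and then prove that fact from matroid structure.

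One step needs adjustment. You assert that $\ivv$ is constant on each interval $[\vali[j+1],\vali[j])$ and hence the IC integral collapses exactly to the sum $\sum_{j\geq i}(\vali[j]-\vali[j+1])(\alloci(\vals)-\alloci(\drop{i}{\vali[j+1]}))$. That constancy holds in Lemma~\ref{l:IC>EF/2}, where $\ivv$ is explicitly taken to come from a subprofile of $\vals$, but the present lemma ranges over \emph{all} ironed virtual valuation functions $\ivv$, whose breakpoints need not coincide with coordinates of $\vals$. So your equality can fail. The fix is what the paper does: use monotonicity of $z\mapsto\alloci(\drop{i}{z})$ to upper-bound the integral by that Riemann sum, which is all the argument needs.

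For the matroid monotonicity fact, the paper argues directly via the greedy characterization of max-weight independent sets (lowering $i$'s weight moves $i$ later in the greedy order, so $j+1$ is at least as likely to be selected), whereas you invoke circuit exchange on the two optima. Both are valid; the greedy route is shorter and sidesteps the case analysis that a clean circuit-exchange argument requires (in particular identifying the element of the circuit with $w'$-weight strictly below $w'(\pi(j+1))$, which needs a bit more work than your sketch gives). You are also right that this is precisely where the matroid hypothesis is used and why the factor of $\tfrac{1}{2}$ from the downward-closed case is not needed here.
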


\begin{proof}
Recall that $\EF{\ivv}_i(\vals) = \sum^n_{j=i}(\vali[j]-\vali[j+1])\cdot(\alloci[i](\vals)-\alloci[j+1](\vals))$
and $\IC{\ivv}_i(\vals) = \int_0^{\vali}
(\alloci(\vals)-\alloci(\drop{i}{z})) dz$.  By the monotonicity of
$\alloci(\drop{i}{z})$ in $z$, $\IC{\ivv}_i(\vals)$ is upper-bounded
by
$\sum^n_{j=i}(\vali[j]-\vali[j+1])\cdot(\alloci[i](\vals)-\alloci(\drop{i}{\vali[j+1]}))$.
Recall that
$\alloci(\drop{i}{\vali[j+1]})=\alloci[j+1](\drop{i}{\vali[j+1]})$.
It suffices to prove that $\alloci[j+1](\vals)\leq
\alloci[j+1](\drop{i}{\vali[j+1]})$.  To see this, ironed virtual
surplus maximizers are greedy algorithms in matroid permutation
settings, and if agent $i$ decreases her bid to $\vali[j+1]$, agent
$j+1$ is less likely to be blocked by $i$ who was earlier in the
greedy order, and is hence more likely to be allocated.
\end{proof}

There are downward-closed permutation environments where the envy-free
optimal revenue does not upper-bound the incentive-compatible revenue
of all virtual surplus maximizers.  The proof is by example and can be
found in Appendix~\ref{app:IC>EF}; for this example the amount that the IC
revenue exceeds the EF revenue is a small fraction of the total EF
revenue.

\begin{lemma}\label{l:IC>EF}
There exists a downward-closed permutation environment and valuation
profile $\vals$, such that if $\ivv=\ivv^{\vals}$ is the ironed
virtual valuation function of $\vals$, then $\IC{\ivv}(\vals) >
\EF{\ivv}(\vals)$.
\end{lemma}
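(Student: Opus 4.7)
The plan is to exhibit an explicit small counter-example, since the statement is purely existential. The strategy is to construct a downward-closed permutation environment whose feasibility structure deliberately violates the matroid augmentation property that was central to Lemma~\ref{l:EF>IC}. In that lemma the decisive step was the monotonicity $\alloci[j+1](\vals)\leq \alloci[j+1](\drop{i}{\vali[j+1]})$: when agent $i$ drops her bid, the greedy ironed virtual surplus maximizer on a matroid can only make it easier for agent $j+1$ to enter. In a non-matroid downward-closed environment this can fail, because lowering $\vali$ may cause the optimizer to switch to a completely different feasible set that excludes agent $j+1$, yielding $\alloci[j+1](\vals) > \alloci[j+1](\drop{i}{\vali[j+1]})$ and hence an IC payment from $i$ that exceeds her EF payment.

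The construction I have in mind uses a permutation environment whose feasibility is the downward closure of two disjoint competing sets of different sizes, for instance of $\{\{1,2,3\},\{4,5\}\}$ over five agents. One would then pick a valuation profile in which the top-value agents sit on the larger branch, so that at $\vals$ the ironed virtual surplus maximizer prefers $\{1,2,3\}$, but in which reducing a top agent's bid just enough tips the comparison in favor of $\{4,5\}$. This is exactly the behavior that breaks the monotonicity used for matroids and drives $\IC{\ivv}_i(\vals)$ above $\EF{\ivv}_i(\vals)$ for at least one agent.

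With the environment and values pinned down, the remaining steps are mechanical: build the revenue curve $\R[\vals]$ and the ironed revenue curve $\IR[\vals]$, read off the ironed virtual values $\ivv[\vals]$, determine $\allocs(\vals')$ of the ironed virtual surplus maximizer on every profile $\vals'$ obtained from $\vals$ by replacing some $\vali$ with some $\vali[j+1]$, and then evaluate $\EF{\ivv}(\vals)$ via Lemma~\ref{l:vv=rev} and $\IC{\ivv}(\vals)$ via Theorem~\ref{t:IC_payment}. The strict inequality then follows by direct numerical comparison on the example.

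The main obstacle is calibration. The paper already flags that the IC excess over EF is only a small fraction of the total EF revenue, so the example will be tight: one must choose values (and possibly a slightly more elaborate set system than the one sketched above) so that the switch in the optimal feasible set falls at just the right threshold, and so that the net contribution summed over all agents $i$ and all value-replacement thresholds $\vali[j+1]$ comes out strictly in favor of IC rather than cancelling. I expect the bulk of the effort to lie in iterating over small candidate environments and valuation profiles until a strict gap appears, rather than in any deeper structural argument.
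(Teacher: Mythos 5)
Your diagnosis of \emph{why} Lemma~\ref{l:EF>IC} fails outside matroids is exactly right: the augmentation property is what forces $\alloci[j+1](\vals)\le\alloci[j+1](\drop{i}{\vali[j+1]})$, and in a general downward-closed system lowering $\vali$ can flip the ironed-virtual-surplus optimizer to a different maximal set that excludes $j+1$.  The set-system shape you propose---the downward closure of two disjoint feasible sets of unequal size---is also the right shape, and it is essentially what the paper uses (a singleton set versus a disjoint size-$n$ set, in the permutation environment).  So the approach is sound in spirit.

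The gap is that the proposal stops exactly where the real work begins, and your ``iterate over small candidate environments until a strict gap appears'' plan is not likely to converge, because the inequality is driven by an asymptotic effect rather than by a lucky small instance.  The paper chooses virtual values for the $n$ agents on the big set so that their sum exceeds the lone competitor's virtual value by only $\epsilon^2$, while any single agent dropping even to her next neighbor's value decreases the sum by $\ge\epsilon$.  Consequently, in the permutation branch where the ``big'' agent lands on the singleton slot (probability $1/(n+1)$), each of the $n$ agents on the large set has an IC threshold essentially equal to her own value, and those values are calibrated to be $\tfrac{n+j}{j+1}\val$, so the IC revenue in that branch sums to a harmonic-type $\Theta(n\val\log n)$.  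In the other branch (probability $n/(n+1)$) everyone pays the reserve, which is \emph{slightly below} the flat EF payment, so IC runs a deficit of order $\Theta(\val)$ there.  The strict inequality $\IC{\ivv}(\vals)>\EF{\ivv}(\vals)$ emerges because the $\Theta(\val\log n)$ excess from the rare branch eventually dominates the $\Theta(\val)$ deficit from the common branch---a comparison you only win for large $n$.  Your five-agent $\{1,2,3\}$ vs.~$\{4,5\}$ sketch has no analogue of this $\log n$ blowup; it also does not account for the fact that both sides are expectations over the random permutation (the EF revenue is computed from the permutation-averaged allocation, which is constant across agents, while the IC revenue differs branch by branch).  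Without that accounting, and without the $\epsilon$-calibrated virtual values that make every IC threshold in the rare branch near the agent's own value, you do not have a proof---you have a plausible but unverified search direction.

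To close the gap you need to (i) specify the set system as a singleton versus a disjoint large set (so that the alternative branch contains \emph{many} agents who all pay their own threshold), (ii) pin down virtual values so the margin between the two maximal sets is strictly smaller than any single step between consecutive virtual values, and (iii) compute both $\IC{\ivv}(\vals)$ and $\EF{\ivv}(\vals)$ as expectations over the random permutation and show the harmonic excess beats the per-branch deficit for $n$ large enough.  As written, the proposal identifies the mechanism but omits the construction that makes the inequality strict.
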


\section{Prior-free Mechanism Design and Benchmarks}
\label{s:benchmarks}

%
%
As discussed previously, no incentive-compatible mechanism obtains an
optimal profit point-wise on all possible valuation profiles.
Therefore, to obtain point-wise guarantees, the literature on
prior-free mechanism design looks for the incentive compatible
mechanism that minimizes, over valuation profiles, its worst-case
ratio to a given performance benchmark.  It is important to identify a
good benchmark for such an analysis to be meaningful.

%
%
If the designer had a prior distribution over the agent
valuations then she could design the mechanism that maximizes revenue in
expectation over this distribution.  This is the approach of Bayesian
optimal mechanism design as characterized by \citet{mye-81} and
refined by \citet{BR-89}.  Given a distribution
$\dist$, virtual values and revenue curves can be derived.  The
optimal mechanism is the one that maximizes ironed virtual surplus.

\begin{theorem}
\label{t:mye}
 \citep{mye-81}
When values are i.i.d.~from
distribution $\dist$ the optimal mechanism, $\ICO{\dist}$, is the
ironed virtual surplus optimizer for $\ivv$ corresponding to $\dist$.
\end{theorem}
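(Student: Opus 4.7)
The plan is to adapt the classical argument of \citet{mye-81}, which closely parallels the envy-free derivation of Section~\ref{s:ef} only with pointwise maximization over valuation profiles replaced by expectation over i.i.d.~draws from $\dist$. By Theorem~\ref{t:IC_payment}, any IC and individually rational mechanism corresponds to a value-monotone allocation rule $\allocs(\cdot)$ with uniquely determined payments, so the problem reduces to maximizing $\expect[\vals \sim \dist^n]{\sum_i \pricei(\vals)}$ over feasible value-monotone $\allocs(\cdot)$.

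The central step is to convert expected revenue into expected virtual surplus. Fixing $\valsmi$ and integrating by parts over $\vali$ against the density $\dens$, the payment identity from Theorem~\ref{t:IC_payment} yields
\[
\expect[\vali \sim \dist]{\pricei(\vals)} = \expect[\vali \sim \dist]{\VV[\dist](\vali) \cdot \alloci(\vals)},
\]
where $\VV[\dist](\val) = \val - (1-\dist(\val))/\dens(\val)$ is the distributional virtual value; this is the analog of Lemma~\ref{l:vv=rev}. If $\VV[\dist]$ happens to be monotone, the pointwise virtual-surplus maximizer is already value-monotone and hence optimal. Otherwise I invoke ironing: define the revenue curve associated with $\dist$, its concave upper envelope $\IR[\dist]$, and the ironed virtual value $\ivv[\dist]$ as the associated derivative, so that $\ivv[\dist]$ is non-decreasing by construction. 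The distributional analog of Lemma~\ref{l:ironing} then gives
\[
\expect{\sum\nolimits_i \VV[\dist](\vali)\alloci(\vals)} \leq \expect{\sum\nolimits_i \ivv[\dist](\vali)\alloci(\vals)}
\]
for every value-monotone $\allocs(\cdot)$, with equality when $\alloci$ is constant over every open interval on which $\IR[\dist]$ strictly exceeds the underlying revenue curve.

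Finally, the pointwise ironed virtual surplus maximizer $\ICO{\dist}$ with random tie-breaking is value-monotone (by the reasoning of Lemma~\ref{l:swap_mon}) and is constant on ironed intervals (since $\ivv[\dist]$ is constant there), so it achieves the upper bound with equality and pointwise maximizes the ironed integrand; hence it dominates any competitor in expected revenue. The main obstacle is carrying out the ironing argument carefully over continuous types, but this is a routine translation of Lemma~\ref{l:ironing} and Lemma~\ref{l:swap_mon} to distributions; the substantive novelty relative to the envy-free setting is the integration-by-parts identity converting expected payments into expected virtual surplus.
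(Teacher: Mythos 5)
This theorem is stated as a citation to \citet{mye-81} and is not proved in the paper, so there is no in-paper proof to compare against. Your sketch is a correct rendering of the classical Myerson argument---integration by parts converting expected payments to expected virtual surplus, ironing via the concave envelope of the quantile-space revenue curve, and the monotonicity-plus-tightness argument for the ironed maximizer---and it is precisely the argument that the paper's envy-free derivation (Lemmas~\ref{l:vv=rev}, \ref{l:ironing}, \ref{l:swap_mon}, and Theorem~\ref{t:efo}) is explicitly constructed to mirror.
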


%
%
If the agent values are indeed drawn from a prior distribution, but
the designer is unaware of the distribution, then a reasonable
objective might be to design a mechanism that is a good approximation
to the optimal mechanism for any unknown distribution.  This {\em
  prior-independent} objective is a relaxation of our
prior-free objective.

%
%
One important criterion for a prior-free benchmark is that its
approximation should imply prior-independent approximation: if a
mechanism is a constant approximation to the benchmark, then for a
relevant class of distributions, it should be a constant approximation
to the Bayesian optimal mechanism under any distribution from the class.

%
%
For matroid permutation environments, Lemma~\ref{l:EF>IC} implies that
for any values $\vals$ the optimal envy-free revenue $\EFO(\vals)$
(which is at least the envy-free revenue of any ironed virtual surplus
optimizer) is at least the incentive compatible revenue of any ironed
virtual surplus optimizer.  By Theorem~\ref{t:mye}, the Bayesian
optimal mechanism is an ironed virtual surplus optimizer so
$\EFO(\vals)$ upper-bounds its revenue.  Consequently, a prior-free
$\beta$-approximation to $\EFO$ is also a prior-independent
$\beta$-approximation for all distributions.

%
%
Unfortunately, even for simple the digital good environment it is not
possible to obtain a prior-free constant approximation to $\EFO$
\citep[see][]{G+06}.  This impossibility arises because it is not possible to
approximate the highest value $\vali[1]$.  For essentially the same
reason, it is not possible to design a prior-independent constant
approximation for all distributions.  We therefore restrict attention
to the large family of distributions with tails that are not too
irregular.  

%
%
\begin{definition}[Tail Regularity] 
A distribution $\dist$ is {\em $n$-tail regular} if in $n$-agent
1-unit environments, the expected revenue of the Vickrey auction is a
2-approximation to that of the Bayesian optimal mechanism.
\end{definition}

The definition of tail regularity is implied by Myerson's regularity
assumption via the main theorem of \citet{BK96}.  The intuition
for the definition is the following. For $n$-agent 1-unit
environments, all the action happens in the tail of the distribution,
i.e, values $\val$ for which $\dist(\val) \approx 1-1/n$; therefore,
irregularity of the rest of the distribution does not have much
consequence on revenue.  Tail regularity, then, restates the
Bulow-Klemperer consequence, as a constraint on the tail of the
distribution and leaves the rest  unconstrained.

We now define the benchmark for prior-free mechanism design.
Approximation of this benchmark guarantees prior-independent
approximation of all $n$-tail-regular distributions.

\begin{definition} \label{d:EFO2}
The {\em envy-free benchmark} is  
$\EFO\super 2(\vals) = \EFO(\vals \super 2)$ where $\vals\super 2 =
  (\vali[2],\vali[2],\vali[3],\ldots,\vali[n])$.
\end{definition}

\begin{theorem}\label{t:tail-regular}
With any $n$-agent matroid permutation environment, any
$n$-tail-regular distribution $\dist$, and any $\beta$-approximation
mechanism to $\EFO \super 2$, the expected revenue of the mechanism
with valuations $\vals$ drawn i.i.d.~from $\dist$ is a
$3\beta$-approximation to the optimal mechanism for $\dist$.
\end{theorem}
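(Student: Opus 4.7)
The plan is to establish $\expect{\mech(\vals)} \ge \tfrac{1}{3\beta}\,\expect{\ICO{\dist}(\vals)}$.  Taking expectations of the pointwise hypothesis $\mech(\vals) \ge \beta^{-1}\,\EFO\super 2(\vals)$ reduces the theorem to the inequality
\[
\expect{\EFO(\vals\super 2)} \;\ge\; \tfrac13\,\expect{\ICO{\dist}(\vals)}.
\]

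I would first assemble two pointwise lower bounds on $\EFO(\vals\super 2)$.  The first, $\EFO(\vals\super 2) \ge v_2$, follows because posting the uniform price $v_2$ and awarding a single agent is envy free and feasible in any matroid of positive rank.  For the second, let $\ivv$ denote the ironed virtual valuation function corresponding to $\dist$; Theorem~\ref{t:mye} identifies $\ICO{\dist}$ with the ironed-virtual-surplus maximizer for $\ivv$, which by Lemma~\ref{l:swap_mon} is swap monotone, and Lemma~\ref{l:EF>IC} therefore yields $\EFO(\vals\super 2) \ge \ICO{\dist}(\vals\super 2)$, where the right-hand side is the incentive-compatible revenue of running $\ICO{\dist}$ on the input profile $\vals\super 2$.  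Adding these bounds with weights $2$ and $1$ gives the pointwise inequality
\[
3\,\EFO(\vals\super 2) \;\ge\; 2\,v_2 + \ICO{\dist}(\vals\super 2).
\]

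The third step compares $\expect{\ICO{\dist}(\vals\super 2)}$ to $\expect{\ICO{\dist}(\vals)}$.  Set $\Delta(\vals) = \ICO{\dist}(\vals) - \ICO{\dist}(\vals\super 2)$ and let $r^* = \inf\{v : \ivv(v) \ge 0\}$ be the monopoly reserve (with the usual replacement by an ironing interval in the irregular case).  Analyzing the matroid greedy-by-value implementation of $\ICO{\dist}$, I would split into three cases: if $v_1 < r^*$ nobody is served in either profile and $\Delta = 0$; if $v_2 \ge r^*$ the set of above-reserve agents and the matroid structure are identical on $\vals$ and $\vals\super 2$, and a structural calculation (essentially that the sum of matroid greedy threshold payments is invariant under raising a single above-reserve value while preserving the value order) yields $\Delta = 0$; in the only remaining case $v_1 \ge r^* > v_2$, agent~$1$ is the unique above-reserve bidder on $\vals$ and pays exactly $r^*$, while nobody is served on $\vals\super 2$, giving $\Delta = r^*$.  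Consequently $\expect{\Delta(\vals)} \le r^*\,\Pr[v_1 \ge r^*]$, which is the expected revenue of simply posting price $r^*$ in a single-unit auction and hence at most the Bayesian-optimal single-unit revenue.  The $n$-tail-regularity hypothesis then yields $\expect{\Delta(\vals)} \le 2\,\expect{v_2}$.

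Putting everything together, $\expect{\ICO{\dist}(\vals)} = \expect{\Delta(\vals)} + \expect{\ICO{\dist}(\vals\super 2)} \le 2\,\expect{v_2} + \expect{\ICO{\dist}(\vals\super 2)} \le 3\,\expect{\EFO(\vals\super 2)}$, which combined with $\expect{\mech(\vals)} \ge \beta^{-1}\,\expect{\EFO(\vals\super 2)}$ gives the theorem.  The main obstacle is the $\Delta$-analysis in the case $v_2 \ge r^*$: proving that the total greedy threshold-payment revenue is insensitive to the exact value of the top agent so long as the second-highest value is already above the reserve requires a careful matroid calculation (it is not true per-agent, because tie-breaking on $\vals\super 2$ reshuffles the per-agent payments, and one has to argue that these reshuffles preserve the sum), and this argument must then be lifted from the regular-$\dist$ case with a clean reserve $r^*$ to the irregular case where the "reserve" is replaced by an ironing interval on which $\ivv$ is constant.
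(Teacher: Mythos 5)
Your proof is correct in its overall logic, but it takes a genuinely different decomposition from the paper's. The paper first reduces to $k$-unit auctions via Proposition~\ref{prop:ivv_reduction}, then splits the Bayesian-optimal revenue \emph{by agent}: it bounds agent~$1$'s payment by comparing to the optimal single-item auction (which tail regularity caps at $2\,\expect{\vali[2]}$), and it bounds the payment from agents $2,\dots,n$ pointwise by $\EFO\super 2(\vals)$ using Lemma~\ref{l:EF>IC}. You instead split \emph{by valuation profile}, writing $\ICO{\dist}(\vals)=\Delta(\vals)+\ICO{\dist}(\vals\super 2)$, bounding $\ICO{\dist}(\vals\super 2)\le\EFO(\vals\super 2)$ pointwise (again via Lemma~\ref{l:EF>IC}) and $\expect{\Delta}\le 2\,\expect{\vali[2]}$ via the same tail-regularity step. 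The two decompositions are not the same, even though both add up to $\ICO{\dist}(\vals)$ and both land on the factor $3$.

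The cost of your route is precisely the ``$\Delta=0$ when $\vali[2]\ge r^*$'' step that you flagged. This claim is in fact true, and the argument you anticipate does go through: the matroid greedy is ordinal, so for every agent $j\ge 3$ the threshold payment is the value of some $\vali[m]$ with $m\ge 3$ (or $r^*$), none of which change between $\vals$ and $\vals\super 2$; and the \emph{combined} payment of agents $1$ and $2$ is invariant --- in the case $\{1,2\}$ feasible both thresholds are $\vali[m]$ for $m\ge 3$ and unchanged, and in the case $\{1,2\}$ infeasible the pair contributes exactly $\max(r^*,\vali[2])\cdot 1=\vali[2]$ on $\vals$ and $\max(r^*,\vali[2])\cdot\tfrac12+\max(r^*,\vali[2])\cdot\tfrac12=\vali[2]$ on $\vals\super 2$ after the random tie-break. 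But this requires you to re-derive ordinality and tie-breaking invariance inside the matroid, which is exactly what the paper sidesteps by reducing to $k$-unit auctions first, where the top-$k$ allocation and its thresholds make the analogous insensitivity immediate. You'd save yourself substantial effort by invoking Proposition~\ref{prop:ivv_reduction} up front, as the paper does, after which your $\Delta$ case analysis becomes a short calculation (when $\vali[2]\ge r^*$ the set of served agents and the marginal $(k{+}1)$st threshold are unchanged). In short: the approach is sound and pleasantly symmetric, but the paper's per-agent split combined with the multi-unit reduction avoids the one nontrivial matroid lemma your route requires.
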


\begin{proof}[Proof sketch]
(The full proof in Appendix~\ref{app:tail-regular}.)  We focus on
  showing the result for $k$-unit auctions; the reduction in
  Section~\ref{s:reduction} will enable us to easily generalize this
  to matroid permutation environments.  We use tail regularity to get
  a bound on the payment from the highest agent in terms of the
  Vickrey auction revenue, $\vali[2]$.  Of course, $\EFO \super
  2(\vals)$ is at least $\vali[2]$, and so the payment from the
  highest agent is at most $2 \EFO \super 2(\vals)$ (in expectation
  over i.i.d.~draws of $\vals$ from $\dist$).  The second part of the
  argument involves bounding the total payments of agents
  $\{2,\ldots,n\}$, point-wise from above, by $\EFO \super 2(\vals)$.
  This is possible by Lemma~\ref{l:EF>IC} and detailed analysis of
  $k$-unit auction payments.
\end{proof}

%
%
It is useful to compare the EFO benchmark to ones proposed in the
literature that are based on the Vickrey-Clarke-Groves (VCG) mechanism
with the best (for the particular valuation profile $\vals$) reserve
price \citep[e.g.,][]{GHKSW-06,HR-09}.  The VCG mechanism with a reserve price
first rejects all agents whose values to not meet the reserve, then it
serves the remaining agents to maximize the surplus (sum of values).

The VCG-with-reserve benchmark can be expressed as an ironed virtual
surplus optimizer, and so by Lemma~\ref{l:EF>IC}, in matroid
permutation environments, EFO is no smaller.  For a digital good, EFO
and VCG-with-reserve are identical.  For multi-unit auctions EFO is at
most twice VCG-with-reserve.  For matroid permutation environments EFO
can be (almost) a logarithmic factor larger than VCG-with-reserve.
Therefore, for general environments the EFO-based benchmark results in
stronger approximation guarantees.

\begin{theorem}
\label{thm.benchmarks}
\label{thm:EFO<2F}
For any multi-unit environment and valuation profile $\vals$,
$\EFO(\vals) \leq 2 \VCGr(\vals)$; furthermore,
for $k$ units there exists a valuation profile $\vals$ such that
$\EFO(\vals) \geq (2-\frac{1}{k})\VCGr(\vals)$.
\end{theorem}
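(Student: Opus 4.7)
The plan is to specialize Theorem~\ref{t:efo} to the $k$-unit environment and compare the resulting expression for $\EFO(\vals)$ against an explicit formula for $\VCGr(\vals)$. In a $k$-unit auction, the ironed virtual surplus maximizer serves the top $m$ agents (with random tie-breaking inside each ironed group) for the integer $m \in \{0,\ldots,k\}$ maximizing $\IR[\vals](m)$; since $\ivv[\vals](\vali) = \IR[\vals](i)-\IR[\vals](i-1)$, telescoping gives $\EFO(\vals) = \max_{m \leq k} \IR[\vals](m)$. A short reserve-price enumeration yields $\VCGr(\vals) = \max(\max_{j \leq k} j\vali[j],\ k\vali[k+1])$. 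Since $\IR[\vals]$ is concave with $\IR[\vals](0)=0$, it is unimodal; if its peak lies at some hull vertex $m^* \leq k$ then $\EFO(\vals) = \R[\vals](m^*) \leq \VCGr(\vals)$ (witnessed by the reserve $r = \vali[m^*]$), and we are done. The interesting case is $\EFO(\vals) = \IR[\vals](k)$.

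In that case I would write $\IR[\vals](k) = \alpha\,\R[\vals](a) + (1-\alpha)\,\R[\vals](b)$ where $a \leq k \leq b$ are consecutive hull vertices of $\R[\vals]$ and $\alpha = (b-k)/(b-a)$. Since $a \leq k$, $\R[\vals](a) = a\vali[a] \leq \VCGr(\vals)$. For $\R[\vals](b)$ I split on cases: if $b \leq k$ then $\R[\vals](b) \leq \VCGr(\vals)$ and so $\IR[\vals](k) \leq \VCGr(\vals)$; if $b > k$ then $\vali[b] \leq \vali[k+1] \leq \VCGr(\vals)/k$ gives $\R[\vals](b) \leq (b/k)\VCGr(\vals)$. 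Plugging in, $\IR[\vals](k)/\VCGr(\vals) \leq \alpha + (1-\alpha)\,b/k$, and clearing denominators reduces the desired inequality $\alpha + (1-\alpha)\,b/k \leq 2$ to $a(2k-b) \leq k^2$. This holds immediately when $b \geq 2k$ (the left-hand side is nonpositive), and when $k < b < 2k$ the bound $a \leq k$ gives $a(2k-b) \leq k(2k-b) \leq k \cdot k = k^2$.

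For tightness, I would exhibit the family $\vali[1] = k$ and $\vali[2] = \cdots = \vali[n] = 1$ with $n$ arbitrarily large. Here $\R[\vals](1) = k$ and $\R[\vals](j) = j$ for $j \geq 2$, so the concave hull above $[1,n]$ is the chord from $(1,k)$ to $(n,n)$ and $\IR[\vals](k) = k + (k-1)(n-k)/(n-1)$. The reserves $r = k$ and $r = 1$ each yield revenue $k$, and no other reserve does better, so $\VCGr(\vals) = k$. The ratio $\EFO(\vals)/\VCGr(\vals) = 1 + (k-1)(n-k)/(k(n-1))$ tends to $(2k-1)/k = 2 - 1/k$ as $n \to \infty$, showing that the constant in the upper bound cannot be improved.

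The main obstacle I expect is the case analysis verifying $\alpha + (1-\alpha)\,b/k \leq 2$; the rest of the argument is bookkeeping once the two characterizations of $\EFO(\vals)$ and $\VCGr(\vals)$ are in hand.
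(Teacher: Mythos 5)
Your proof is correct and takes essentially the same approach as the paper: both characterize $\EFO(\vals)=\max_{i\le k}\IR(i)$ and $\VCGr(\vals)=\max_{i\le k}\R(i)$, reduce to the case where the maximum of $\IR$ over $\{1,\dots,k\}$ is at $k$, express $\IR(k)$ as a convex combination of $\R(a)$ and $\R(b)$ at adjacent hull vertices $a\le k<b$, bound each term, and exhibit the same tight family $\vali[1]=k$, $\vali[2]=\cdots=\vali[n]=1$. Your algebraic verification (reducing the bound to $a(2k-b)\le k^2$) is slightly different from the paper's more direct coefficient bounds $\tfrac{j-k}{j-i}\le 1$ and $\tfrac{k-i}{j-i}\le\tfrac{k}{j}$, and your expression for $\VCGr$ carries a redundant $k\vali[k+1]$ term, but these are cosmetic differences.
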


\begin{proof}
In the $k$-unit environment, the VCG-with-reserve benchmark on $\vals$
is simply $\max_{i \leq k} \R(i)$.  The envy-free benchmark is
$\EFO(\vals) = \max_{i \leq k} \IR(i)$.  Recall that $\IR$ is concave.
If $\IR$ attains its maximum at $\resi < k$ then $\R(\resi) =
\IR(\resi)$ and both benchmarks obtain the same revenue and the first
part of the lemma holds; so assume that $\IR$ attains its maximum on
$\{1,\ldots,k\}$ at $k$.

Suppose $\R$ is ironed on interval $i < k < j$.  Then $\IR(k)$ is the
convex combination of $\R(i)$ and $\R(j)$ as
\begin{align}
\label{eq:IR(k)}
\IR(k) &= \tfrac{j-k}{j-i} \R(i) + \tfrac{k-i}{j-i}\R(j)\\
\notag
       &\leq \R(i) + \tfrac{k}{j} \R(j)
       \leq \R(i) + \R(k)
       \leq 2 \VCGr(\vals).
\end{align}
The second inequality follows because $\vali[k] \geq \vali[j]$
implies that $k \vali[k] \geq \tfrac{k}{j} j \vali[j]$; the final equality follows because both $\R(i)$ and $\R(k)$ are feasible revenues for VCG with reserve.

The above bound is almost tight.  Let $\vali[1] = k$ and
$\vali[2]=\vali[3] = \cdots = \vali[n] = 1$.  The VCG-with-reserve
benchmark obtains revenue $k$ whereas $\IR(k)= \frac{n-k}{n-1} k +
\frac{k-1}{n-1} n$, from equation~\eqref{eq:IR(k)}, which approaches
$2k-1$ as $n\to\infty$.
\end{proof}


\begin{lemma}
\label{l:ICO>>VCGreserve}
There exists a distribution $\dist$ and $n$-agent matroid environment
for which VCG with any reserve price is an $\Omega(\log n / \log \log
n)$-approximation to the Bayesian optimal mechanism for $\dist$.
\end{lemma}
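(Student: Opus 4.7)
The plan is to exhibit a rank-one (single-item) uniform matroid on $n$ agents together with a product distribution whose per-agent monopoly prices are geometrically spread out, forcing any single reserve to collect only $O(1)$ in expectation while Myerson's mechanism extracts $\Omega(n)$. Take the environment to be the $1$-uniform matroid on $N=\{1,\ldots,n\}$ (serve at most one agent) and let $\dist$ be the product of independent marginals with $\Pr_{F_i}[v_i=2^i]=2^{-i}$ and $\Pr_{F_i}[v_i=0]=1-2^{-i}$.

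First, I would compute the Bayesian optimal revenue. The ironed virtual value of agent $i$ equals $2^i$ on the event $\{v_i=2^i\}$ and is strictly negative on $\{v_i=0\}$, so Myerson's mechanism serves the agent $i^*$ with the largest index for which $v_{i^*}=2^{i^*}$ (and no one if no such index exists), charging her $2^{i^*}$ by the payment identity. A direct computation yields
\[
\ICO{\dist} \;=\; \sum_{i=1}^{n} 2^i \Pr[i^*=i] \;=\; \sum_{i=1}^{n} \prod_{j=i+1}^{n}\!(1-2^{-j}) \;\geq\; c_0 \cdot n,
\]
where $c_0 = \prod_{j\geq 1}(1-2^{-j}) > 0$ is an absolute constant.

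Second, I would bound the VCG revenue under any reserve $r$. Choose $j_0$ with $r \in (2^{j_0-1},2^{j_0}]$; only agents $i\geq j_0$ can ever qualify. Writing $v_{(2)}$ for the second-highest qualifying value, the winner pays $\max(r,v_{(2)})$, so
\[
\mathrm{VCG}(r) \;\leq\; r\cdot\Pr[\text{at least one qualifies}] \;+\; \mathbb{E}\!\left[v_{(2)}\cdot\mathbf{1}[\text{at least two qualify}]\right].
\]
The first summand is at most $r\sum_{i\geq j_0}2^{-i} = 2$. For the second, the probability that two or more agents have values $\geq 2^i$ is at most $\tfrac12\bigl(\sum_{j\geq\max(i,j_0)}2^{-j}\bigr)^2 \leq 2^{1-2\max(i,j_0)}$, and summation by parts bounds the expectation by a universal constant. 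Hence $\mathrm{VCG}(r)\leq C$ for some constant $C$ independent of both $r$ and $n$.

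Combining, $\ICO{\dist}/\mathrm{VCG}(r) \geq c_0 n / C = \Omega(n)$ for every reserve $r$, which is strictly stronger than the claimed $\Omega(\log n/\log\log n)$. The main technical point is the second-term bound: although $v_{(2)}$ can in principle be as large as $2^{n-1}$, the probability of two agents simultaneously clearing a threshold $2^i$ decays like $2^{-2i}$---geometrically faster than the threshold grows---so the expected second-highest value stays bounded no matter how the reserve is chosen.
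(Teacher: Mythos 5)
Your construction is internally consistent, and the expectation bounds on the second-highest qualifying value are correct, but it solves a different problem from the one the lemma is asking about. Throughout Section~\ref{s:benchmarks} (Theorem~\ref{t:mye}, Definition of tail regularity, Theorem~\ref{t:tail-regular}) the notational convention is that the $n$ agents draw their values \emph{i.i.d.} from the single distribution $\dist$, and ``the Bayesian optimal mechanism for $\dist$'' means Myerson's mechanism for $n$ i.i.d.\ draws. Your $\dist$ is a heterogeneous product measure with $\Pr[v_i=2^i]=2^{-i}$ --- a different marginal for each agent --- so you are comparing VCG-with-reserve to the optimal mechanism for a non-i.i.d.\ market. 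That is why you obtain the suspiciously strong $\Omega(n)$ separation: with independent but non-identical agents, a single anonymous reserve obviously cannot track $n$ geometrically spaced monopoly prices, and this is a well-understood phenomenon that has nothing to do with the matroid structure (you used the trivial $1$-uniform matroid).

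The difficulty the lemma is pointing at only appears in the i.i.d.\ setting, and your choice of the single-item environment cannot produce it there: for any multi-unit (hence single-item) environment and any valuation profile $\vals$, Theorem~\ref{thm:EFO<2F} gives $\EFO(\vals)\leq 2\,\VCGr(\vals)$, and Lemma~\ref{l:EF>IC} gives $\EFO(\vals)\geq\IC{\ivv}(\vals)$ for the Myerson allocation rule; taking expectations over i.i.d.\ $\vals\sim\dist^n$ shows that the best VCG-with-reserve is within a factor of two of Myerson in any multi-unit i.i.d.\ market, regular or not. So the matroid structure is not a cosmetic generalization; it is essential. The paper's proof instead builds a partition matroid whose sectors have geometrically varying sizes $m^{3k-1}$ and capacities one, together with a single irregular ``Sydney opera house'' distribution whose revenue curve has $m\approx\log n/\log\log n$ ironed ``trenches'' at nested quantile scales. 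Because the sectors have different sizes, each sector's order statistics land in a different trench; Myerson extracts roughly equal revenue from every trench by ironing, whereas a single reserve can align with at most one trench. If you want to repair your argument, you need to reintroduce the i.i.d.\ constraint and a nontrivial matroid, at which point the $\log n/\log\log n$ bound (rather than $n$) is the right order of magnitude.
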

\begin{proof}[Proof sketch] (A full proof is in Apprendix~\ref{app:ICO>>VCGreserve})
We construct a set system and an irregular distribution with a jagged
revenue curve that has $m=\Omega(\frac{\log n}{\log\log n})$ deep
``trenches'', such that the following are true.  (1) the Bayesian
optimal mechanism for $\dist$ gets about $1/m$ fraction of its revenue
from each trench via ironing.  (2) VCG with reserve gets similar
amount of good revenue from at most one of the trenches by setting an
appropriate reserve price, but only gets low revenue from the other
trenches due to the lack of ironing.  In total, VCG with reserve only
gets about $1/m$ fraction of the Bayesian optimal revenue.
\end{proof}

We conclude this section with a summary.  For multi-unit environments
we have given Bayesian justification for the envy-free benchmark $\EFO
\super 2$.  A mechanism that approximates this prior-free benchmark
simultaniously approximates the Bayesian optimal mechanisms for most
i.i.d.~ditributions (i.e., those satisfying the relatively
unrescrictive tail-regularity assumption).  In the next section we
will give a reduction from matroid permutation and position
environments to multi-unit environments.  This reduction implies that
the envy-free benchmark is also Bayesian justified for these
environments.  While we have not given formal justification for the
envy-free benchmark in downward-closed permutation environments
(because of Lemma~\ref{l:IC>EF}); we believe its approximation is
still of interest.


\section{Multi-unit, Position, and Matroid Permutation Environments}

\label{s:reduction}

In this section we consider matroid permutation, position auction, and
multi-unit environments.  We show that for both incentive-compatible
mechanism design and envy-free outcomes, these environments are
closely related.  In fact, for either IC or EF, respectively, the
optimal mechanisms across these environments are the same and
approximation mechanisms give the same approximation factor.  As an
example, we will focus on approximating the envy-free benchmark $\EFO
\super 2$ (Definition~\ref{d:EFO2}) with a prior-free mechanism.  Our
solution will be via a two-step reduction: we reduce matroid
permutation to position auction environments, which we then reduce to
multi-unit environments.

Recall that in a multi-unit environment it is feasible to serve any
set of agents of cardinality at most some given $k$.  In position
auction environments there are weights $w_1 \geq w_2 \geq \cdots \geq
w_n$ for positions and feasible outcomes are partial assignments of
agents to positions.  In matroid permutation environments there is a
feasibility constraint given by independent sets of a matroid, but the
roles of the agents are assigned by random permutation.

The property of these three environments that enables this reduction is
that in each environment the greedy algorithm on ironed virtual values
(with ties broken randomly) obtains the maximum ironed virtual
surplus.  The greedy algorithm works as follows: order the agents by
ironed virtual value and serve each agent in this order if her ironed
virtual value is positive and if doing so is feasible given the set of
agents previously served.  Notice that the only information needed to
perform this maximization is the ordering on the agents' ironed
virtual values (but not their magnitudes).

\begin{definition}
The {\em characteristic weights} $w_1 \geq w_2 \geq \cdots \geq w_n$
of a matroid environment are as follows: choose any valuation profile
$\vals$ with all distinct values, assign the agents to elements in the
matroid via a random permutation, run the greedy algorithm
w.r.t.\ $\vals$, and define $\wali$ to be the probability that $i$th
largest valued agent is served.
\end{definition}

\subsection{Reduction for Ironed Virtual Surplus Maximizers}

We first show ironed virtual surplus optimization in the three
environments is equivalent.

\begin{proposition}
\label{prop:ivv_reduction}
The ironed virtual surplus maximizing assignment (and its virtual surplus) 
is equal in expectation in the following environments:
\begin{enumerate}\setlength{\itemsep}{0in}
\item a matroid permutation environment with characteristic weights $\wals$, 

\item a position auction environment with weights $\wals$,

\item a convex combination of multi-unit environments where $k$ units
  are available with probability $\wali[k]-\wali[k+1]$ for
  $k\in\{1,\dots,n\}$ and $w_{n+1}=0$.
\end{enumerate}
\end{proposition}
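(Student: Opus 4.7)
The plan is to show that in each of the three environments the ironed virtual surplus maximizer is the greedy algorithm on $\ivv[\vals]$ (breaking ties uniformly at random, rejecting agents with non-positive ironed virtual values), and that the expected service probability of the agent at rank $i$ of this ordering equals $\wali$ in all three settings. Since the ironed virtual valuation function depends only on $\vals$, a common expected rank-to-service map forces both the expected allocation and the expected ironed virtual surplus $\sum_i \ivv[\vals](\vali) \cdot \wali$ to coincide across the three environments.

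First I will verify that in each environment the greedy algorithm on ironed virtual values is a maximizer. For the matroid permutation environment this is the classical matroid greedy optimality result. For the position environment it follows from the rearrangement inequality applied to the bilinear objective $\sum_i \ivv(\vali)\, \wali[\pi(i)]$. For the convex combination of $k$-unit environments it is immediate in each realization. In all three cases the resulting allocation is swap monotone (as required by Lemma~\ref{l:swap_mon}), since rank order coincides with value order up to tie classes that are symmetrized by random tie-breaking.

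Second I will compute the expected service probability of the $i$-th ranked agent. In the position environment the greedy deterministically assigns rank $i$ to slot $i$, yielding probability $\wali$. In the convex combination, the $i$-th ranked agent is served in exactly those $k$-unit realizations with $k \geq i$, giving the telescoping probability $\sum_{k \geq i}(\wali[k] - \wali[k+1]) = \wali$ after setting $\wali[n+1] = 0$. For the matroid permutation environment I will invoke the \emph{ordinality} of matroid greedy -- its allocation depends only on the rank order of its inputs -- together with the definition of characteristic weights: on a reference profile with distinct values the rank-$i$ agent is served with probability $\wali$, so the same holds for any profile whose induced rank order matches the reference. Agents with tied ironed virtual values occupy a contiguous block of ranks, and random tie-breaking distributes the cumulative weight of the block evenly among them; a short telescoping calculation shows the per-agent expected probability within a tie class agrees across all three environments. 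Summing $\ivv[\vals](\vali) \cdot \wali$ over $i$ then yields the common expected ironed virtual surplus.

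The main obstacle is the matroid case, where the characteristic weights are defined only for profiles with distinct values, so transferring them to profiles with ties in $\ivv$ (which generically arise from ironing flat intervals) requires coupling the random permutation on matroid elements with the random tie-breaking used to implement ironed virtual surplus maximization. Once this coupling is set up, the ordinality of matroid greedy makes the rest essentially syntactic, and the equality of expected allocations (not merely expected surplus) falls out from the symmetry between tied agents.
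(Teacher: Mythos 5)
Your proposal follows essentially the same route as the paper's proof: fix a tie-breaking order, observe that greedy on ironed virtual values is the optimizer in all three environments, show the rank-$j$ agent is served with probability $\wali[j]$ in each (ordinality of matroid greedy plus the definition of characteristic weights, the deterministic position assignment, and the telescoping sum over $k$-unit realizations), and then average over tie-breaking orders. The extra care you devote to coupling the matroid-permutation randomness with the tie-breaking randomness, and to transferring the characteristic weights from distinct-value reference profiles to profiles with tied ironed virtual values, is exactly the content the paper compresses into its closing sentence, "Taking expectation over all tie-breaking orders."
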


\begin{proof}
Fix a tie-breaking rule, which induces an ordering on the agents.
Consider the greedy algorithm on the agents with non-negative $\ivv$
values according to this ordering.  The $j$-th agent with non-negative
$\ivv$ value in this ordering (1) gets allocated with probability
$w_j$ in the matroid permutation setting by definition of
characteristic weights, (2) gets assigned to position $j$ in the
position auction and hence gets allocated with probability $w_j$, and,
(3) gets allocated in $k$-unit auction for each $k\geq j$, and hence
has probability $\sum_{k\geq j}(\wali[k]-\wali[k+1])=\wali[j]$ of
being served in the convex combination setting.  Taking expectation
over all tie-breaking orders, agent $i$ has the same probability of
being served in the three settings.
\end{proof}

The following corollary is immediate.  

\begin{corollary} 
For any valuation profile $\vals$ and any weights $\wals$, the envy-free
 optimal revenue is the same in each of the environments of
Proposition~\ref{prop:ivv_reduction}.
\end{corollary}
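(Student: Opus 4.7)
The plan is to chain together the machinery already established. By Theorem~\ref{t:efo}, in any symmetric environment the optimal envy-free revenue $\EFO(\vals)$ equals $\EF{\allocs}(\vals)$ where $\allocs$ is the (randomized) allocation that maximizes ironed virtual surplus with respect to $\ivv[\vals]$, with ties broken randomly. Each of the three environments in Proposition~\ref{prop:ivv_reduction} is symmetric (the permutation environment by construction, position auctions by definition, and a convex combination of multi-unit environments because each $k$-unit auction is symmetric), so Theorem~\ref{t:efo} applies in each case.

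Next I would use Lemma~\ref{l:ironing} to identify $\EF{\allocs}(\vals)$ with the expected ironed virtual surplus $\sum_i \ivv[\vals](\vali)\alloci$: the ironed virtual surplus maximizer with random tie breaking satisfies $\alloci=\alloci[i+1]$ whenever $\IR[\vals](i)>\R[\vals](i)$ (this is exactly the condition used in the proof of Theorem~\ref{t:efo}, since agents in an ironed interval have equal ironed virtual values and random tie-breaking gives them the same allocation probability). Thus in each of the three environments, $\EFO(\vals)$ equals the expected ironed virtual surplus of the optimal allocation.

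Finally I would invoke Proposition~\ref{prop:ivv_reduction} directly: it says that the ironed virtual surplus maximizing assignment, together with its (expected) ironed virtual surplus, coincides across the three environments. Since the optimal envy-free revenue in each environment has been rewritten as the expected ironed virtual surplus of the maximizer, and that quantity is equal in all three environments, the optimal envy-free revenues coincide.

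There is no real obstacle here; the corollary really is a direct consequence of Proposition~\ref{prop:ivv_reduction} together with the virtual-surplus characterization of $\EFO$. The only mild subtlety to keep in mind is the tie-breaking convention: one needs the random tie-breaking for the equality in Lemma~\ref{l:ironing}, and Proposition~\ref{prop:ivv_reduction} already implicitly uses such a convention (taking an expectation over tie-breaking orders when comparing the three settings). Provided both are set up with the same randomization, the match-up is immediate.
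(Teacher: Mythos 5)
Your argument is correct and fills in the steps that the paper compresses into ``immediate'': Theorem~\ref{t:efo} identifies $\EFO$ with the envy-free revenue of the ironed virtual surplus maximizer, Lemma~\ref{l:ironing} (with the random tie-breaking equality condition) equates that revenue with the expected ironed virtual surplus, and Proposition~\ref{prop:ivv_reduction} asserts that this quantity coincides across the three environments. A marginally shorter route is available: Proposition~\ref{prop:ivv_reduction} already gives that the optimal allocation $\allocs$ itself (each agent's service probability) is the same in all three environments, and then Lemma~\ref{l:vv=rev} gives $\EF{\allocs}(\vals)=\sum_i \VV[\vals](\vali)\alloci$ directly, so identical $\allocs$ forces identical envy-free revenue without needing the ironing inequality at all; but your version is equally valid and matches the proposition's emphasis on ironed virtual surplus.
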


A basic fact about incentive compatibility is that it is closed under
convex combination, i.e., a randomization over two incentive
compatible mechanisms is incentive compatible: truthtelling is an
optimal strategy in each, and so it remains an optimal strategy in the
combination.

We now illustrate how to use Proposition~\ref{prop:ivv_reduction} to show
that an incentive compatible prior-free approximation
mechanisms for multi-unit environments can be adapted to give the
same approximation factor in position auction and matroid permutation 
environments.  Consider the following incentive compatible mechanism.

\begin{definition}
\label{d:RSEM}
The {\em Random Sampling Empirical Myerson ($\RSEM$)} mechanism does
the following: (discussion of payments omitted)
\begin{enumerate}\setlength{\itemsep}{0in}
\item randomly partition the population of agents $N=\{1,\ldots,n\}$
  into two sets by flipping a fair coin for each agent,

\item designate the set containing the highest-valued agent as the
  market $M$ and the other set as the sample $S$,

\item calculate the ironed virtual surplus function $\ivv[S]$ for the
  sample $S$, and,

\item \label{step:RSEM-serve} serve a feasible subset of $M$ to
  maximize ironed virtual surplus with respect to $\ivv[S]$ and
  reject all other agents.
\end{enumerate}
\end{definition}

\begin{lemma} 
\label{l:RSEM-IC} In any single-dimensional agent environment, 
RSEM is incentive compatible.
\end{lemma}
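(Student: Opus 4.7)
The plan is to invoke Theorem~\ref{t:IC_payment}: it suffices to exhibit a monotone allocation rule for every agent, and the requisite payment rule then exists. Since RSEM is randomized over the coin-flip partition into market $M$ and sample $S$, I will prove the stronger statement that for every realization of the coin flips and every fixed profile $\bidsmi$ of other bids, the probability that agent $i$ is served is non-decreasing in her own reported value $b$. Monotonicity then transfers to the expectation by linearity, and the payments of Theorem~\ref{t:IC_payment}, applied per coin outcome and then averaged, deliver incentive compatibility.

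Fix $i$, fix $\bidsmi$, and fix the coin-flip partition $(A,B)$ with $i\in A$. Let $h_A = \max_{j\in A\setminus\{i\}} \bidi[j]$ (with $h_A=0$ if $A=\{i\}$) and $h_B = \max_{j\in B} \bidi[j]$. The analysis splits on whether $i$'s bid can affect which side is the market.

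\textbf{Case 1: $h_A\geq h_B$.} For every $b$, the maximum of $A$ is at least $h_A\geq h_B$, so $A$ is the market and $i\in M$. The sample is $B$ and does not contain $i$, so $\ivv[S]$ is independent of $b$. As a virtual valuation function derived from an ironed (concave) revenue curve, $\ivv[S]$ is non-decreasing in its argument, so $\ivv[S](b)$ is non-decreasing in $b$; then by the standard exchange argument underlying Lemma~\ref{l:swap_mon} applied to a single agent (swapping $i$'s allocation with that of any agent of equal or lower $\ivv[S]$ value can only weakly decrease ironed virtual surplus), $\alloci(b)$ is non-decreasing. \textbf{Case 2: $h_A<h_B$.} For $b\leq h_B$, the maximum of $A$ is $h_A<h_B$, so $B$ is the market, $i$ lies in the sample, and $\alloci(b)=0$. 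For $b>h_B$, side $A$ becomes the market, $i$ lies in $M$, and $i$ is the highest-valued agent in $M$ (since $b>h_B\geq h_A$); the sample is $B$ and $\ivv[S]$ is once again independent of $b$, so on $(h_B,\infty)$ the same exchange argument shows $\alloci(b)$ is non-decreasing. At the threshold $b=h_B$ the allocation jumps from $0$ to a non-negative value, so monotonicity is preserved globally.

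The essential case, and the main obstacle, is the second one: $i$'s bid can flip which side is designated the market, and one might worry that raising $b$ could place $i$ into a role where she is rejected. The key observation is that the transition always carries $i$ from the sample (allocation $0$) into the market as the highest-valued agent, and the greedy ironed virtual surplus maximizer processes the highest ironed virtual value first, so the jump is necessarily non-negative. Note also that when $i$ lies in the sample, her bid enters into $\ivv[S]$ and thereby affects the market-side allocation and payments of other agents, but it has no effect on $i$'s own allocation, and hence is irrelevant for $i$'s incentives. With monotonicity established for every coin realization, Theorem~\ref{t:IC_payment} furnishes the payment rule that makes RSEM incentive compatible.
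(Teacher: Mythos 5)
Your proof is correct and takes essentially the same route as the paper's: fix the randomness that determines the partition, show the allocation rule is value-monotone for each realization (distinguishing whether agent $i$ ends up in the sample or the market), and then invoke Theorem~\ref{t:IC_payment} to obtain the payments. Your Case~2 threshold analysis at $b = h_B$ is exactly the paper's observation that ``an agent in $S$ loses unless she raises her bid to beat the highest-valued agent.'' One small imprecision worth fixing: you justify monotonicity on the market side by appealing to the swap-exchange argument of Lemma~\ref{l:swap_mon}, but that lemma compares two \emph{different} agents at a \emph{fixed} profile and relies on symmetry of the environment, whereas what is needed here is monotonicity of a \emph{single} agent's allocation in her \emph{own} bid, which must also hold in asymmetric downward-closed environments. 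The cleaner argument, and the one the paper uses, needs no symmetry: raising $b$ weakly increases $\ivv[S](b)$, which raises the ironed virtual surplus of every feasible set containing $i$ while leaving the surplus of every set excluding $i$ unchanged, so any (randomized tie-breaking) maximizer serves $i$ with weakly higher probability.
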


\begin{proof}
Notice that RSEM is monotone: An agent in $S$ loses unless she
raises her bid to beat the highest-valued agent (in which case the
roles of $S$ and $M$ are reversed).  An agent in $M$ wins when the
virtual surplus maximizing set contains the agent.  If she raises her
bid, she (weakly) increases her virtual value thus increasing the
virtual surplus of any set containing her, while the virtual surplus
of other sets remain the same.  Therefore, she continues to win.  By
Theorem~\ref{t:IC_payment} monotonicity implies that, with the
appropriate payments, RSEM is incentive compatible.
\end{proof}

The proof of the following theorem is technical and we defer
discussion of it to Section~\ref{s:multi-unit-rsem}.

\begin{theorem}
\label{thm.rsem}
\label{t:multi-unit-rsem}
In multi-unit environments, RSEM is a
$\RSEMmultiunitratio$-approximation to the envy-free benchmark $\EFO
\super 2 (\vals)$.
\end{theorem}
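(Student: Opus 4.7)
The plan is to analyze RSEM directly in the $k$-unit environment by exploiting the simple greedy structure of ironed virtual surplus maximization in this setting: once the sample $S$ fixes an ironed revenue curve $\IR[S]$ with argmax quantity $k^S = \arg\max_{i\leq k}\IR[S](i)$, the optimizer on the market $M$ simply serves the top $\min(k^S,|M|)$ agents of $M$, and the resulting RSEM payments coincide up to threshold effects with selling these agents a price read off of $\IR[S]$ at quantity $k^S$. Thus RSEM is essentially a random-sampling single-priced lottery with a $k$-unit cap, and I would adapt the digital-good random-sampling machinery of \citet{GHW-01,BV-03,AMS-09} to this capped setting.

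I would first decompose the benchmark. Let $k^\star\leq k$ attain $\EFO\super 2(\vals) = \IR[\vals\super 2](k^\star)$. Either this equals the unironed revenue $\R[\vals\super 2](k^\star) = k^\star\cdot \vali[k^\star]$, or it is a convex combination of $\R[\vals\super 2](i)$ and $\R[\vals\super 2](j)$ for some $i<k^\star<j$. In either case it suffices to show RSEM recovers a constant fraction of each unironed revenue $\R[\vals\super 2](r)$ for the relevant ranks $r\leq k$. The special case $r=1$ (where $\R[\vals\super 2](1) = \vali[2]$) is handled almost directly by the RSEM rule of placing the highest-valued agent into $M$: the sample then contains a ranked agent whose empirical revenue optimum forces agent~$1$ to pay at least a constant fraction of $\vali[2]$.

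For $r\geq 2$, the core step is a hypergeometric concentration argument on the random split. For each relevant rank $r$, the number of top-$r$ agents landing in $S$ is approximately $r/2$ with constant probability; conditioned on this ``balanced'' split, the empirical ironed curve $\IR[S]$ takes its optimum at a quantity whose induced single-price revenue on $M$, capped at $k$, is within a constant factor of $\tfrac{1}{2}\,\R[\vals\super 2](r)$. Converting this envy-free lower bound on the value of RSEM's allocation to its actual IC revenue costs at most another factor of two by Lemma~\ref{l:IC>EF/2}, since a multi-unit environment is a downward-closed permutation environment and the $\ivv$ employed is the sample's empirical ironed virtual valuation function (i.e., of the form required by that lemma).

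The main obstacle will be the worst-case balancing of multiplicative losses across two regimes: (i) the ironed benchmark case, where the empirical ironing on $S$ must approximately reconstruct a plateau in $\IR[\vals\super 2]$ near $k^\star$, and (ii) the small-$r$ regime, where hypergeometric concentration is weak and the ``highest-agent-in-market'' rule together with a direct charging to $\vali[2]$ must carry the argument. A rank-by-rank charging scheme that sums the worst-case losses across these regimes, optimized against the factor of two from IC-versus-EF and the factor of two from the random split, should deliver the stated constant $\RSEMmultiunitratio$.
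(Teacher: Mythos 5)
Your high-level strategy — random split, concentration of the split, converting envy-free revenue to IC revenue via Lemma~\ref{l:IC>EF/2} — is viable and closely resembles the paper's analysis of RSEM$'$ in \emph{downward-closed} environments (Section~\ref{s:random_sampling}), where balance-probability conditioning yields a constant of $\RSEMdownwardclosedratio$. But it is not the paper's proof of Theorem~\ref{t:multi-unit-rsem}, and the specific constant $\RSEMmultiunitratio$ almost certainly does not follow from it. Three concrete issues.

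First, you mischaracterize what RSEM does on $M$. It does not serve "the top $\min(k^S,|M|)$ agents of $M$ at a price read off $\IR[S]$ at quantity $k^S$." Because ironing of $\IR[S]$ puts blocks of agents of $M$ at equal ironed virtual value, the outcome is a two-price lottery: letting $\highp,\lowp$ be the supremum/infimum values whose $\ivv[S]$ level equals that of the $(k{+}1)$-st agent of $M$, with $\highi \le k < \lowi$ agents of $M$ above each, the top $\highi$ win deterministically at a shaded price below $\highp$ while the next $\lowi-\highi$ win with probability $\frac{k-\highi}{\lowi-\highi}$ at price $\lowp$. The paper's factor $\tfrac43$ comes from directly computing this lottery's IC revenue against a linear envy-free intermediary $L(\cdot)$; Lemma~\ref{l:IC>EF/2} gives you only a factor $2$ here, which is strictly looser.

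Second, and more importantly, the balance/no-balance dichotomy is inherently wasteful of constants. Conditioning on the double-sided balance event costs a multiplicative $1/\prob{\balance}$ (roughly $3$ in Lemma~\ref{l:balanced2}), and you then still pay further factors inside the balanced event. The paper avoids this entirely: it defines a continuous imbalance parameter $\imbal = \max_i \tfrac{Y_i}{i-Y_i}$ and applies the \citet{AMS-09} bound $\expect{1/X}\geq 1/\amscr$ to $X = \tfrac{\resi}{Y_\resi}\imbal$ directly, in expectation with \emph{no} conditioning. This single factor $\amscr$ absorbs both the "half the agents land in $S$" loss and the "sample is representative" loss, and it is precisely why the product $2\times\tfrac43\times\amscr \approx \RSEMmultiunitratio$ is so small. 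Your route gives at best something like $2\times 2\times 3 \approx 12$ and likely worse once the concentration bounds are made rigorous.

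Third, "a rank-by-rank charging scheme that sums the worst-case losses across these regimes" is underspecified and is a warning sign: summing losses across ranks or regimes generally does not preserve a multiplicative bound, and the paper never does this. The benchmark decomposition you want is simply Theorem~\ref{thm.benchmarks} ($\EFO\leq 2\,\VCGr = 2\max_{i\leq k}\R(i)$), after which one compares RSEM to the single quantity $\R(\resi)$ for the optimal VCG-with-reserve quantity $\resi$ — no summing over ranks is needed or wanted. Without the intermediary $L(\cdot)$, the $H(\cdot)$ upper bound on $\IR[S]$, and the \citet{AMS-09} expectation, I do not see how your plan closes the arithmetic to hit $\RSEMmultiunitratio$.
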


Notice that this mechanism can easily be generalized to other
downward-closed environments. It remains incentive compatible for
these environments because it is essentially an ironed virtual surplus
optimizer on the set $M$, and furthermore, it is incentive compatible
even if the permutation that assigns agents to the set system is
fixed.  As a final corollary of Proposition~\ref{prop:ivv_reduction},
we can view RSEM's revenue in the matroid permutation or position
auction environment as the analogous convex combination of its revenue
in multi-unit auction environments.

\begin{corollary}
\label{c:RSEM}
In matroid permutation and position environments, RSEM is a prior-free
$\RSEMmultiunitratio$-approximation to the envy-free revenue $\EFO \super
2 (\vals)$.
\end{corollary}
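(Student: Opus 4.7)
The plan is to exploit Proposition~\ref{prop:ivv_reduction} and its immediate corollary to show that both the expected revenue of RSEM and the envy-free benchmark $\EFO \super 2(\vals)$ are invariant across the three settings of the proposition---(a) the matroid permutation environment with characteristic weights $\wals$, (b) the position environment with weights $\wals$, and (c) the convex combination of $k$-unit environments where $k$ units appear with probability $\wali[k]-\wali[k+1]$---and then to apply Theorem~\ref{t:multi-unit-rsem} termwise in the convex combination.

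First I would observe that step~\ref{step:RSEM-serve} of RSEM is, conditional on the sample $S$, an ironed virtual surplus maximizer with respect to the empirical $\ivv[S]$. So Proposition~\ref{prop:ivv_reduction} applies to RSEM's allocation conditional on $S$, and one concludes that for any realization of the random partition and any bid profile, the expected allocation probabilities of RSEM coincide in (a), (b), and (c). Since RSEM is incentive compatible (Lemma~\ref{l:RSEM-IC}) with payments pinned down by the allocation rule via Theorem~\ref{t:IC_payment}, this equivalence of allocation rules lifts to an equivalence of expected revenues. The corollary immediately following Proposition~\ref{prop:ivv_reduction} gives the analogous equivalence for the envy-free benchmark $\EFO \super 2(\vals)$.

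Second, I would write the common expected RSEM revenue as the convex combination $\sum_{k=1}^{n}(\wali[k]-\wali[k+1])\,\expect{R_k(\vals)}$, where $R_k$ denotes RSEM's expected revenue on $\vals$ in the $k$-unit environment, and likewise express the envy-free benchmark as $\EFO \super 2(\vals) = \sum_{k=1}^{n}(\wali[k]-\wali[k+1])\,\EFO_k \super 2(\vals)$. Theorem~\ref{t:multi-unit-rsem} gives $\expect{R_k(\vals)} \geq \tfrac{1}{\RSEMmultiunitratio}\,\EFO_k \super 2(\vals)$ for each $k$, and summing against the weights $\wali[k]-\wali[k+1]$ delivers the desired $\RSEMmultiunitratio$-approximation simultaneously for both the position and matroid permutation environments.

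The main subtlety is the revenue equivalence in the first step. Proposition~\ref{prop:ivv_reduction} is stated for the allocation at a single bid profile, whereas the IC payment integral $\pricei(\vals) = \vali\alloci(\vals) - \int_0^{\vali}\alloci(\drop{i}{z})\,dz$ depends on the allocation rule as a function of each agent's bid. The key observation is that, conditional on $S$, RSEM's allocation depends on the bids only through the ordering induced by $\ivv[S]$-virtual values (with random tie breaking), so the proposition can be applied pointwise in the bid of each agent. This yields equivalence of the full allocation rule across (a), (b), and (c), and hence of the payment integrals.
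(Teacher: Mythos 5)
Your proposal is correct and follows essentially the same route as the paper: the paper states Corollary~\ref{c:RSEM} immediately after remarking that, as a consequence of Proposition~\ref{prop:ivv_reduction}, RSEM's revenue in the matroid-permutation or position environment is exactly the convex combination (with weights $\wali[k]-\wali[k+1]$) of its revenue in $k$-unit environments, and then applies Theorem~\ref{t:multi-unit-rsem} termwise. Your elaboration of the ``main subtlety''---that the allocation-rule equivalence of Proposition~\ref{prop:ivv_reduction} lifts pointwise in each agent's bid because, conditional on the sample $S$, RSEM depends on bids only through the $\ivv[S]$-induced ordering, and hence the IC payment integrals agree---is a correct and welcome unpacking of what the paper leaves implicit.
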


\subsection{General Reduction}

The following prior-free approximations are essentially the best known
for digital-good and multi-unit environments.  Notably, the mechanism
from Corollary~\ref{c:multi-unit-6.5} below, is not based on ironed
virtual surplus maximization and therefore Proposition~\ref{prop:ivv_reduction}
cannot be applied to a construct matroid permutation or position
auction mechanism from it.

\begin{lemma} \citep{II-10} 
\label{l:HM-05}
In the digital good environment, there is a prior-free incentive
compatible $\digitalgoodratio$-approximation to $\EFO \super 2 (\vals)$.
\end{lemma}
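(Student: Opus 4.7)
The plan is to invoke the mechanism construction of Ichiba and Iyengar, which combines two previously known prior-free designs for digital-good auctions into a single auction whose worst-case ratio to $\EFO \super 2(\vals)$ is $\digitalgoodratio$. For digital goods the benchmark simplifies drastically: since there is no inter-agent feasibility constraint, $\R$ equals $\IR$ and so $\EFO \super 2 (\vals) = \max_{i\geq 2} i \cdot \vali$, i.e., the best single-price revenue from selling to at least two agents. Hence the task reduces to approximating this single-price benchmark truthfully and without any prior.

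First I would recall the reduction-to-the-decision-problem framework of \citet{GHKSW-06} as refined by \citet{HM-05}, which gives a $3.25$-approximation. The idea is: for each candidate revenue target $T$, construct a truthful auction that extracts at least $T$ whenever $\EFO \super 2 (\vals) \geq T$ is attainable, then take a random mixture over targets drawn from a geometric distribution. Truthfulness is preserved under convex combinations, and the geometric mixture ensures that the target closest to the true benchmark is hit with constant probability, yielding a constant-factor guarantee in expectation.

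Second I would recall the statistical-estimation approach of \citet{GH-03a}, which designs an estimator of the revenue-optimal uniform price that is non-manipulable with high probability: an agent's bid is used only to help set the price offered to other agents, in a way whose output distribution is essentially insensitive to any individual's bid, and so truthfulness holds. Concentration arguments then show the induced price collects a constant fraction of the optimal single-price revenue on the full profile.

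The construction of \citet{II-10} is a convex combination of these two mechanisms with carefully optimized mixing weights. Since incentive compatibility is closed under convex combinations (the same argument used in the proof of \lref{l:RSEM-IC}), the combined mechanism is itself truthful. The analysis partitions valuation profiles by the shape of the revenue curve near its maximum: the reduction-based mechanism is tight on profiles whose near-optimal single-price interval is sharply peaked, while the estimation-based mechanism is tight when this interval is broad. The combination equalizes these two extremes. The main obstacle is the quantitative optimization of the mixing weights, which requires a delicate case analysis of the worst-case ratio as a function of where the optimal single price sits relative to the sample split; for this I would refer directly to the tight analysis in \citet{II-10} to obtain the stated constant $\digitalgoodratio$.
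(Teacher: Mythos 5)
The paper provides no proof for this lemma; it is stated as a direct citation of \citet{II-10}, whose result the related-work section already describes as a convex combination of the reduction-to-the-decision-problem approach of \citet{GHKSW-06}/\citet{HM-05} and the non-manipulable statistical-estimation approach of \citet{GH-03a}. Your outline is consistent with that description, and the one substantive step you add --- that for digital goods $\R = \IR$ on its maximizer and hence $\EFO\super 2(\vals) = \max_{i \geq 2} i\,\vali$, the standard single-price benchmark against which \citet{II-10} prove their $\digitalgoodratio$ bound --- is correct and is exactly the bridge needed to make the cited guarantee apply to the lemma as stated. (Minor caveat: the paper does not spell out the authors behind the key ``II-10'', and the fine details you attribute to the two sub-mechanisms, e.g.\ a geometric mixture over targets, are not load-bearing and would need to be checked against the actual references if you intended this as more than a pointer.)
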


We now give an approximate reduction from multi-unit auctions to
digital good auctions.  This construction and the proof that the
resulting mechanisms incentive compatibility are standard.  See, e.g.,
\citet{mye-81}, \citet{GHKSW-06}, and \citet{AH-06}.

\begin{definition}[Multi-unit Reduction] Given a $k$-agent digital good 
auction, we construct the following $k$-unit auction:
\begin{enumerate}
\item Simulate the $k$-unit Vickrey auction.
\item Simulate the $k$-agent digital good auction on the $k$ winners
  of the Vickrey auction.
\item Serve the agents who win in both stages and charge them the
  maximum of their simulation payments; reject all other agents.
\end{enumerate}
\end{definition}

\begin{theorem}
Given any digital-good auction that $\beta$-approximation to the
envy-free benchmark (resp.~Bayesian optimal auction), the multi-unit
auction from the reduction is a $2\beta$-approximation to the
envy-free benchmark (resp.~Bayesian optimal auction).
\end{theorem}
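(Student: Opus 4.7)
The plan is to lower-bound the reduced multi-unit mechanism's revenue on $\vals$ by the digital-good subroutine's revenue on the top $k$ agents, and then to use Theorem~\ref{thm:EFO<2F} to upper-bound the multi-unit envy-free benchmark by twice a uniform-price quantity that the subroutine approximates.

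First I would note that an agent is served by the reduced auction exactly when she wins both the Vickrey stage (so she is among the top $k$) and the digital-good simulation on those $k$ agents, and she is then charged at least her digital-good simulation price. Hence the (expected) multi-unit revenue on $\vals$ is at least the (expected) revenue of the digital-good auction run on the $k$-agent instance $(\vali[1],\ldots,\vali[k])$, which by the approximation hypothesis is at least $\tfrac{1}{\beta}\EFO\super 2(\vali[1],\ldots,\vali[k])$. Unfolding the definition, $\EFO\super 2(\vali[1],\ldots,\vali[k])=\EFO(\vali[2],\vali[2],\vali[3],\ldots,\vali[k])$ in the digital-good environment; since the optimal envy-free revenue there equals the maximum of the ironed revenue curve and hence dominates the un-ironed curve, this quantity is at least $i\cdot \vali$ for every $i\in\{2,\ldots,k\}$ (achieved by posting the uniform price $\vali$ to all $k$ agents). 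Therefore the subroutine's revenue is at least $\tfrac{1}{\beta}\max_{2\leq i\leq k} i\cdot \vali$.

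Applying Theorem~\ref{thm:EFO<2F} to $\vals\super 2$ in the $k$-unit environment yields $\EFO\super 2(\vals)\leq 2\,\VCGr(\vals\super 2)=2\max_{1\leq i\leq k} i\cdot \vals\super 2_i$, and the right-hand side equals $2\max_{2\leq i\leq k} i\cdot \vali$ because the $i=1$ contribution $\vali[2]$ is dominated by the $i=2$ contribution $2\vali[2]$ and for $i\geq 2$ the entries of $\vals\super 2$ coincide with those of $\vals$. Chaining the two bounds gives the reduced auction's revenue $\geq \tfrac{1}{2\beta}\EFO\super 2(\vals)$, as required, and the parenthetical Bayesian-optimal claim follows from the same template applied to expected revenues (using Lemma~\ref{l:EF>IC} to bound the Bayesian optimum pointwise by the envy-free optimum). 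The step I expect to need the most care is reconciling the two different appearances of $\EFO\super 2$---on the $k$-agent top-$k$ instance used by the digital-good subroutine versus on the $n$-agent profile $\vals$ in the $k$-unit environment---where the factor-of-two loss arises precisely from Theorem~\ref{thm:EFO<2F} funneling both quantities through the common intermediate $\max_{2\leq i\leq k}i\vali$.
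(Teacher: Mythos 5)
Your argument is correct and mirrors the paper's proof step for step: lower bound the reduced auction's revenue by the digital-good subroutine's revenue on the top $k$ agents, identify that subroutine's benchmark with $\max_{2\leq i\leq k} i\vali = \VCGr(\vals\super 2)$, and invoke Theorem~\ref{thm:EFO<2F} for the factor of two. You are in fact slightly more careful than the paper's terse proof in keeping straight that the common intermediate is the revenue curve of $\vals\super 2$ rather than of $\vals$ (so that the $i=1$ term is harmlessly $\vali[2]$ rather than $\vali[1]$); that precision is exactly the reconciliation you flagged at the end, and it is the right thing to worry about.
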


\begin{proof}
The digital good auction is a $\beta$-approximation to the envy-free
benchmark on the top $k$ agents.  The envy-free benchmark on the top
$k$ agents is equal to the VCG-with-reserve benchmark for the full set
of agents (both are equal to $\max_{i\leq k}\R(i)$).
Theorem~\ref{thm.benchmarks} states that the VCG-with-reserve
benchmark is a 2-approximation to the envy-free benchmark.  Therefore,
the multi-unit auction from the reduction is a $2\beta$-approximation
to the envy-free benchmark.
\end{proof}

\begin{corollary} 
\label{c:multi-unit-6.5}
In multi-unit environments, there is an incentive compatible
prior-free $\multiunitratio$-approximation to $\EFO \super 2 (\vals)$.
\end{corollary}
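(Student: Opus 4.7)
The plan is to obtain the claimed mechanism by plugging the digital-good auction of Lemma~\ref{l:HM-05} into the Multi-unit Reduction just introduced. Concretely, I would take the incentive-compatible, prior-free $\digitalgoodratio$-approximation to $\EFO\super{2}$ in the $k$-agent digital-good environment (guaranteed by \citet{II-10}) and use it as the inner digital good auction in the reduction's three-step construction: first run $k$-unit Vickrey to select the top $k$ agents, then simulate the Ikeda--Itoh auction on those $k$ agents, and finally serve the intersection of the two sets of winners, charging each served agent the maximum of the two simulation payments.

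The approximation factor would then follow by a one-line substitution into the theorem immediately preceding this corollary, which asserts that any $\beta$-approximation digital-good auction lifts to a $2\beta$-approximation multi-unit auction against the envy-free benchmark $\EFO\super{2}$. Taking $\beta = \digitalgoodratio$ yields a $2 \cdot \digitalgoodratio = \multiunitratio$-approximation, which is exactly the bound claimed. Incentive compatibility of the resulting multi-unit auction is inherited from the reduction: the outer Vickrey stage is monotone, the inner digital-good auction is monotone on the top $k$ agents, so the composed allocation rule is monotone in each agent's bid, and Theorem~\ref{t:IC_payment} together with the ``maximum of the two simulation payments'' rule then supplies the corresponding IC prices.

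There is really no substantive obstacle here --- the corollary is a direct composition of the two preceding results, and this is the reason it is stated as a corollary rather than a theorem. The only small point worth checking while writing is that the benchmark on the top $k$ agents approximated by the Ikeda--Itoh auction in Lemma~\ref{l:HM-05} agrees with the quantity $\max_{i \leq k} \R(i)$ that the proof of the reduction theorem exploits (it then compares this against $\EFO\super{2}$ on the full profile via Theorem~\ref{thm:EFO<2F}, which is precisely where the factor of two enters); both sides are equal to the envy-free benchmark on the top $k$ agents, so this identification is immediate.
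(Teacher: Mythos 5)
Your proposal is correct and matches the paper's intended argument exactly: the corollary is obtained by instantiating the multi-unit reduction theorem with the $\digitalgoodratio$-approximation digital-good auction of Lemma~\ref{l:HM-05}, giving $2 \cdot \digitalgoodratio = \multiunitratio$. The paper offers no separate proof since, as you observe, this is a direct composition of the two preceding results.
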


We now show how to construct, from any multi-unit auction, a position
auction and matroid permutation mechanism that has the exact same
outcome (in expectation) as a convex combination of multi-unit
auctions (as in Proposition~\ref{prop:ivv_reduction}).  The challenge
here is the distinct interfaces to the environment: in multi-unit
auctions we are given a supply constraint $k$ and we need to specify a
set of at most $k$ winners, whereas in position auctions, we are given
weights and need to output a partial assignment of agents to
positions.

\begin{definition}[Position Auction Reduction]
Given $k$-unit auction mechanisms for $k\in\{1,\dots,n\}$, we
construct the following mechanism for the position auction environment
with weights $\wals$:
\begin{enumerate}\setlength{\itemsep}{0in}

\item Introduce $n$ dummy agents and $n$ dummy positions into the
  system, indexed by $\{n+1,\dots,2n\}$.  Correspondingly, we pad
  weights $\wals$ and valuation profile $\vals$ with zeros such that
  they have dimension $2n$.

\item For each $k \in \{1,\ldots,n\}$, simulate the $k$-unit auction
  on valuation profile $\vals$, and give the unallocated leftover
  units to the dummy agents arbitrarily for free. Let the resulting
  allocation of all $2n$ agents be $\allocs \super k$.

\item Calculate the probability that each agent is served in the
  convex combination: $\alloci = \sum_{k=1}^n \alloci \super k
  (\wali[k]-\wali[k+1])$, for $i\in\{1,\dots,2n\}$.

\item Solve for a set of permutation matrices $P_t\in
  \{0,1\}^{2n\times 2n}$ and nonnegative weights $r_t$ with $\sum_t
  r_t=1$ such that $\sum_t r_t\cdot P_t\cdot \wals=\allocs$.

\item With probability $r_t$, assign agents to positions according to
  the permutation specified by $P_t$.

\item Discard dummy agents and dummy position assignments.
\end{enumerate}
\end{definition}

To justify step 4, one can verify that $\wals$ majorizes $\allocs$ in
the sense that $\sum_{i=1}^k \wali[i] \geq \sum_{i=1}^k \alloci$ for
$k\in\{1,\dots,2n\}$, with equality holding for $k=2n$.  Therefore by
a theorem of \citet{Rado-52}, the desired permutation matrices and
weights exist.  The following consequences are immediate.

\begin{lemma}
The resulting mechanism for position auction with weights $\wals$
obtained from the above reduction has the same allocation as the
convex combination of $k$-unit auctions with $(\wali[k]-\wali[k+1])$'s
as probabilities.
\end{lemma}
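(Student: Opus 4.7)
The plan is to directly verify that the probability agent $i$ is served in the constructed mechanism equals $\alloci$ as defined in step 3, which by construction is $\sum_{k=1}^n \alloci\super{k}(\wali[k]-\wali[k+1])$---precisely the allocation under the stated convex combination of $k$-unit auctions.

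First I would observe that once step 5 selects a permutation matrix $P_t$, agent $i$ is assigned to a unique position and is served with probability $(P_t \wals)_i$. Since $P_t$ is selected with probability $r_t$, linearity of expectation gives that the probability $i$ is served equals $\sum_t r_t \, (P_t \wals)_i = \bigl( \sum_t r_t P_t \wals \bigr)_i$. The defining equation $\sum_t r_t P_t \wals = \allocs$ of step 4 then identifies this quantity with $\alloci$, and discarding dummy components in step 6 does not affect this value for any real $i \leq n$. By the definition of $\alloci$ in step 3, this is exactly the convex-combination allocation, as required.

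The one substantive ingredient is the existence of the decomposition $\{(r_t, P_t)\}$ in step 4, which I would establish by verifying the majorization claim stated immediately after the definition. Because each $\alloci\super{k} \in [0,1]$ and $\sum_{i=1}^{2n} \alloci\super{k} = k$ (all $k$ units get allocated, possibly to dummies), we have $\sum_{i=1}^m \alloci\super{k} \leq \min(m,k)$ for every $m$ and $k$. Substituting into $\sum_{i=1}^m \alloci = \sum_{k=1}^n (\wali[k]-\wali[k+1])\sum_{i=1}^m \alloci\super{k}$ and telescoping yields $\sum_{i=1}^m \alloci \leq \sum_{i=1}^m \wali$ for all $m$, with equality at $m = 2n$ because $\wali[k+1] = 0$ for $k \geq n$. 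Since $\wals$ is sorted by definition and $\allocs$ is weakly decreasing (agents are indexed in decreasing value order, the $k$-unit IC mechanisms are value monotone, and the appended dummy agents have value $0$ and hence zero allocation), this is the standard majorization $\allocs \prec \wals$. Rado's theorem then delivers the desired $(r_t, P_t)$.

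The main obstacle is only the majorization verification; everything else is immediate substitution through the defining equation of step 4.
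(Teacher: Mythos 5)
Your proposal is correct and follows the same path the paper takes (the paper treats this lemma as immediate once Step~4 is justified, and devotes its one line of argument to the majorization and Rado's theorem; you make the immediate part explicit and then verify the majorization, which is the only substantive content). The linearity-of-expectation identification of the service probability with $\alloci$ from Step~3, and the prefix-sum bound $\sum_{i=1}^m \alloci\super k \leq \min(m,k)$ with the telescoping computation that yields $\sum_{i\le m}\alloci \le \sum_{i\le m}\wali$, are all correct.

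One caveat: your claim that $\allocs$ is weakly decreasing because ``the $k$-unit IC mechanisms are value monotone'' is not justified. Value monotonicity constrains how $\alloci$ responds to changes in agent $i$'s \emph{own} value; it does not force $\alloci \geq \alloci[i+1]$ for a fixed profile unless the mechanism is also anonymous (an IC $k$-unit auction that deterministically favors a fixed agent is value monotone but not allocation-sorted). Fortunately the sortedness claim is unnecessary: since $\alloci\super k \in [0,1]$ and $\sum_{i=1}^{2n}\alloci\super k = k$, the bound $\sum_{i\in A}\alloci\super k \leq \min(\abs{A},k)$ holds for \emph{every} subset $A$, not merely prefixes. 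Taking $A$ to be the $m$ agents with largest $\alloci$ gives the sorted partial-sum inequality directly, so Rado's theorem applies to $\allocs$ whether or not it is sorted. I would replace the value-monotonicity sentence with this subset-uniform observation; with that repair the proof is complete.
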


\begin{lemma}
Given an incentive compatible multi-unit auction, the mechanism from
the position auction reduction is also incentive compatible.
\end{lemma}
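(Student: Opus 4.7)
The plan is to apply Myerson's characterization (Theorem~\ref{t:IC_payment}) to the expected allocation rule produced by the reduction. In a position auction the quantity governing an agent's utility is the expected weight of her assigned position; call this $\alloci(\vals)$. With appropriately chosen payments via the Myerson payment identity, incentive compatibility for a single-dimensional agent reduces to showing that $\alloci(\vals)$ is value monotone in $\vali$.

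First I would observe that, by construction, step 4 of the reduction uses Rado's theorem to select a distribution over permutation matrices $P_t$ with $\sum_t r_t P_t \wals = \allocs$. This means the expected weight assigned to agent $i$ equals exactly the number computed in step 3, namely
\[
\alloci(\vals) \;=\; \sum\nolimits_{k=1}^n \alloci \super k(\vals)\,(\wali[k]-\wali[k+1]),
\]
where $\alloci \super k(\vals)$ is the probability that agent $i$ is served in the simulated $k$-unit auction on the padded profile. The dummy agents and positions introduced in step 1 have fixed zero values and are discarded in step 6, so they contribute nothing to the real agents' allocations beyond facilitating the padding needed for Rado's theorem.

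Next I would invoke Theorem~\ref{t:IC_payment} in the forward direction: since each $k$-unit auction is incentive compatible, each $\alloci \super k(\vals)$ is value monotone in $\vali$. Because $\wali[k] \geq \wali[k+1]$, the coefficients $(\wali[k] - \wali[k+1])$ are all non-negative, so $\alloci(\vals)$ is a non-negative linear combination of value-monotone allocation rules, and is therefore itself value monotone in $\vali$. Applying Theorem~\ref{t:IC_payment} in the reverse direction yields the existence of a non-negative, individually rational payment rule that makes the position auction mechanism incentive compatible.

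The main obstacle I anticipate is a conceptual rather than computational one: making sure that the randomization over permutation matrices in step 5 is compatible with Myerson's single-dimensional framework. The key point is that single-dimensional IC for randomized mechanisms depends only on the \emph{expected} allocation of each agent, and Rado's construction guarantees that this expectation equals $\alloci$ identically, not merely in distribution over some other randomness the agent might try to exploit. Once this is noted, the monotonicity argument above applies verbatim and no additional payment calculation is needed beyond citing the Myerson payment formula.
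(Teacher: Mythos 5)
Your proposal is correct and matches the paper's intended argument in substance. The paper treats this lemma as an immediate consequence of the preceding lemma (the resulting allocation equals the convex combination of $k$-unit auction allocations with weights $\wali[k]-\wali[k+1]$) together with the earlier observation that incentive compatibility is closed under convex combination; you reach the same conclusion by passing through value monotonicity explicitly, noting that a non-negative combination of value-monotone allocation rules is value monotone and then invoking Theorem~\ref{t:IC_payment} in both directions. These are two phrasings of the same Myerson-based argument, and your version has the virtue of making explicit the key point that single-dimensional incentive compatibility depends only on each agent's expected allocation, which is exactly what Rado's decomposition in step 4 preserves.
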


\begin{definition}[Matroid Permutation Reduction] 
Given a position auction mechanism for weights $\wals$, we construct
the following mechanism for matroid permutation environment with
characteristic weights $\wals$:
\begin{enumerate}\setlength{\itemsep}{0in}

\item We run the position auction and for $i=1,\dots,n$, let $j_i$ be
  the position assigned to agent $i$, or $j_i = \bot$ if $i$ is not
  assigned a position.

\item Reject all agents $i$ with $j_i = \bot$.

\item Run the greedy algorithm in the matroid permutation environment with
  agent $i$'s value reset to $j_i$.
\end{enumerate}
\end{definition}

The following conclusions are immediate.

\begin{lemma}
The resulting mechanism for matroid permutation environment
obtained from the above reduction has the same allocation
as the position auction.
\end{lemma}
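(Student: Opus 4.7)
The plan is to establish equality of allocations by conditioning on the realization of the position-auction output $(j_1,\ldots,j_n)$ and showing that, under this conditioning, the matroid permutation reduction serves each agent $i$ with probability equal to $w_{j_i}$ (with the convention $w_\bot:=0$), which is exactly $i$'s service probability in the position auction. The overall claim then follows by taking expectations over the internal randomness of the position auction.

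The main step applies the definition of characteristic weights $\wals$ directly. By construction, $w_j$ is the probability that, in the matroid permutation environment with $n$ distinctly-valued agents placed uniformly at random onto the matroid elements, the $j$-th highest-valued agent is served by the greedy algorithm. The matroid reduction is precisely an instance of this experiment where the priority ordering is induced by the $j_i$: after rejecting $\bot$-agents, it runs greedy in the matroid permutation environment with agent $i$ processed at the $j_i$-th step. Applying the definition of characteristic weights, a non-rejected agent $i$ with $j_i=j$ is served with probability $w_j$, matching the position auction.

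The main obstacle is handling the rejected agents (those with $j_i=\bot$) cleanly, since naively dropping them from greedy could shift the effective ranks of the remaining agents and thereby change their service probabilities. The resolution is a ``no gaps'' property of the position auctions of interest -- in particular those arising from the preceding position auction reduction, where Rado's theorem is applied with $\wals$ majorizing $\allocs$ -- which ensures that non-$\bot$ assignments fill a contiguous prefix $\{1,\ldots,\ell\}$ of priorities, so the rejected agents correspond to the zero-weight tail $\{\ell+1,\ldots,n\}$. Viewing rejected agents as phantom entries placed into this tail in the characteristic-weight experiment, each phantom lies strictly below every non-rejected agent in the priority order and hence is processed only after all non-rejected attempts. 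Consequently the phantoms do not perturb the independent set $I$ visible to any non-rejected agent during greedy, so the matroid reduction's greedy over non-rejected agents only yields the same per-agent probabilities as the full characteristic-weight experiment. This identification $w_{j_i}$ on both sides yields the lemma.
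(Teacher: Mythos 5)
The paper itself supplies no proof of this lemma (it is presented as ``immediate''), so the comparison is only against correctness. Your outline correctly identifies the central step --- condition on the position-auction output $(j_1,\ldots,j_n)$ and appeal to the definition of characteristic weights $\wals$ so that an agent processed $j$-th by greedy is served with probability $w_j$ --- and, more importantly, you correctly flag the genuine subtlety: if the set of occupied positions has a ``gap'' (say positions $\{1,3\}$ are occupied but $2$ is not), then after rejection the agent at position $3$ is processed second by greedy and is served with probability $w_2 \ge w_3$, while the position auction would serve her with probability $w_3$. One can construct a concrete $3$-element rank-$2$ matroid with $\wals = (1, 2/3, 1/3)$ where this makes the allocations genuinely differ, so ``no gaps'' really is necessary, not just a convenience.

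The weakness is in how you discharge that obstacle. You assert that ``no gaps'' holds for position auctions arising from the preceding Position Auction Reduction because Rado's theorem is invoked with $\wals$ majorizing $\allocs$. But Rado/Birkhoff--von Neumann gives only that $\allocs = \sum_t r_t P_t \wals$ for some permutation matrices; it says nothing about each $P_t$ assigning the real agents a contiguous prefix of positions. Indeed, in the padded construction dummy agents can receive strictly positive $\alloci$ from the leftover units and thus be assigned to a real (nonzero-weight) position in some $P_t$, leaving a gap. What actually saves the contiguity claim, for the mechanisms this paper plugs in (RSEM, or other ironed-virtual-surplus optimizers), is a nestedness property of the underlying $k$-unit auctions: conditioned on all shared internal randomness, the winner set for $k$ units is contained in the winner set for $k+1$ units, so $\allocs$ is a single permutation of $\wals$, the Rado decomposition is trivial, and the occupied positions form a prefix. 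That argument is about the particular $k$-unit mechanisms, not a consequence of majorization or Rado's theorem per se. As written your proof leaves the load-bearing ``no gaps'' step as an unverified assertion, which is a genuine gap; you should either verify nestedness for the $k$-unit mechanisms in scope, or add it as an explicit hypothesis of the lemma.
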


\begin{lemma}
\label{l:mpr-ic}
Given an incentive-compatible position auction, the mechanism from
the matroid permutation reduction is incentive compatible (in matroid
permutation environments).
\end{lemma}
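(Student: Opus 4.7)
The plan is to apply Theorem~\ref{t:IC_payment} in the reverse direction: to obtain non-negative individually-rational payments under which the composite matroid permutation mechanism is incentive compatible, it suffices to show that its allocation rule is value monotone. Concretely, fixing any agent $i$ and any bids $\valsmi$ of the other agents, the goal reduces to showing that the probability $i$ is served by the composite mechanism is non-decreasing in $\vali$.

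The key observation is essentially the content of the immediately preceding lemma (which asserts that the composite mechanism has the same allocation as the position auction). To see it from first principles, let $j_i$ denote the (random) position assigned to agent $i$ by the position auction in step~1. Step~3 then runs the matroid greedy on the agents with $j_i \neq \bot$, ordered by position quality. By the definition of the characteristic weights $\wals$ of the underlying matroid, the agent who ends up $j$th in greedy priority is served with probability exactly $\wali[j]$ when the agent-to-element permutation is drawn uniformly at random. Averaging over the (independent) randomness of the position auction and of the matroid permutation, agent $i$'s composite service probability equals $\expect{\wali[j_i(\vals)]}$, which is precisely her service probability in the input position auction run on $\vals$.

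By hypothesis the position auction is incentive compatible, so Theorem~\ref{t:IC_payment} applied to it implies that the map $\vali \mapsto \expect{\wali[j_i(\vals)]}$ is non-decreasing for every fixed $\valsmi$. Transferring this along the equality above, the composite allocation rule is value monotone. Applying Theorem~\ref{t:IC_payment} in the reverse direction then produces a non-negative individually-rational payment rule under which the composite mechanism is incentive compatible, as required.

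The only subtlety I anticipate is verifying the two points needed to validate the ``reset to $j_i$'' step of the construction: first, that the randomness of the position auction is independent of the uniformly random matroid permutation (so that the two stages decouple and the expectation over $j_i$ factors through), and second, that the ordering used in step~3 matches the convention underlying the characteristic weights (so that position $j$ really does translate into being ranked $j$th by greedy). Both follow directly from the construction, after which the proof collapses to the monotonicity of the input position auction plus Myerson's characterization.
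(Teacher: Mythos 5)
Your argument is correct and is essentially the one the paper has in mind: since the paper asserts the preceding lemma (allocation equivalence with the input position auction) and then declares the IC conclusion ``immediate,'' the intended reasoning is precisely that allocation equivalence transfers value monotonicity from the (IC, hence monotone) position auction to the composite mechanism, after which Theorem~\ref{t:IC_payment} supplies the payments. Your two flagged subtleties (independence of the two sources of randomness, and the ordering convention behind the characteristic weights) are exactly the right points to check and both hold by construction.
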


\begin{theorem}
\label{t:x_reduction}
The factor $\beta$ to which there is a prior-free incentive-compatible
approximation of $\EFO\super 2(\vals)$ is the same for multi-unit,
position, and matroid permutation environments.
\end{theorem}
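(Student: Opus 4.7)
Let $\beta_M$, $\beta_P$, and $\beta_{MP}$ denote the smallest prior-free incentive-compatible approximation factors against $\EFO\super 2$ achievable in multi-unit, position auction, and matroid permutation environments, respectively. The plan is to prove $\beta_M = \beta_P = \beta_{MP}$ by establishing inequalities in both directions and chaining them.

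The easy direction is by specialization. A $k$-unit environment is realized as a position auction with weights $\wali[1] = \cdots = \wali[k] = 1$ and $\wali[k+1] = \cdots = \wali[n] = 0$, and also as the matroid permutation environment for the $k$-uniform matroid, whose characteristic weights take the same form. The corollary following Proposition~\ref{prop:ivv_reduction} ensures that $\EFO\super 2(\vals)$ agrees across the three environments under this correspondence, so any incentive-compatible $\beta$-approximation in position or matroid permutation environments specializes to one in multi-unit with the same factor; this gives $\beta_M \leq \beta_P$ and $\beta_M \leq \beta_{MP}$.

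For the reverse direction I would chain the two explicit reductions built in this section. Given multi-unit mechanisms, one for each $k \in \{1,\ldots,n\}$, each a $\beta_M$-approximation to $\EFO\super 2$, the Position Auction Reduction with any target weight vector $\wals$ outputs an incentive-compatible position auction mechanism whose allocation is, by construction, the convex combination of the $k$-unit allocations with probabilities $\wali[k] - \wali[k+1]$. Since Theorem~\ref{t:IC_payment} pins payments down from the allocation, the expected revenue is the same convex combination of the $k$-unit revenues; by the corollary after Proposition~\ref{prop:ivv_reduction}, $\EFO\super 2(\vals)$ in the position environment is the corresponding convex combination of the $k$-unit benchmarks. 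A point-wise convex combination of $\beta_M$-approximations against the convex-combination benchmark remains a $\beta_M$-approximation by linearity in the mixture coin, so $\beta_P \leq \beta_M$. Feeding the resulting position auction mechanism into the Matroid Permutation Reduction with $\wals$ taken as the target characteristic weights then produces a matroid permutation mechanism with the same allocation, and hence the same revenue, incentive compatible by Lemma~\ref{l:mpr-ic}; matching envy-free benchmarks give $\beta_{MP} \leq \beta_P$.

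Chaining the inequalities yields $\beta_M \leq \beta_{MP} \leq \beta_P \leq \beta_M$, so all three factors coincide. The real content already lives in Proposition~\ref{prop:ivv_reduction} together with the allocation-preservation and incentive-compatibility lemmas for the two reductions, so no serious technical obstacle remains. The only subtle point is the convex-combination step, which is a one-line check of linearity of both revenue and benchmark in the mixture weights, together with the observation that the Position Auction Reduction treats the family of multi-unit mechanisms as a black box so that their common approximation guarantee carries through intact.
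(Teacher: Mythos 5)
Your proof is correct and takes essentially the same approach as the paper: the theorem is a direct consequence of the Position Auction and Matroid Permutation reductions together with the benchmark equivalence in the corollary to Proposition~\ref{prop:ivv_reduction}, and your chain of inequalities simply makes explicit what the paper leaves implicit (including the easy specialization direction $\beta_M \leq \beta_P,\ \beta_M \leq \beta_{MP}$, which the paper omits but which is needed for the stated equality).
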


\begin{corollary}
There is a prior-free incentive-compatible $\multiunitratio$-approximation to
$\EFO\super 2(\vals)$ in matroid permutation and position
environments.
\end{corollary}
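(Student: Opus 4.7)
The plan is to derive this corollary directly from Theorem~\ref{t:x_reduction} combined with Corollary~\ref{c:multi-unit-6.5}. Since the $\multiunitratio$-approximate multi-unit mechanism of Corollary~\ref{c:multi-unit-6.5} is parametric in the supply $k$, it in fact furnishes a $\multiunitratio$-approximate incentive-compatible mechanism for every $k$-unit environment simultaneously, $k \in \{1,\ldots,n\}$. Feeding this family of mechanisms through the two reductions just constructed will produce the desired position auction and matroid permutation mechanisms.

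Concretely, for a position auction environment with weights $\wals$, I would first invoke the Position Auction Reduction on the family of $\multiunitratio$-approximate $k$-unit mechanisms. The allocation-preservation lemma for that reduction guarantees that the expected allocation of the resulting mechanism equals the convex combination (with coefficients $\wali[k] - \wali[k+1]$) of the $k$-unit allocations, and the accompanying IC-preservation lemma yields incentive compatibility. To bound the approximation factor, I would apply the corollary of Proposition~\ref{prop:ivv_reduction}: the envy-free optimum in the position auction environment decomposes as the same convex combination of the envy-free optima in the $k$-unit environments, and this identity applies verbatim to the benchmark $\EFO\super 2$ evaluated at $\vals$. Since each $k$-unit subroutine attains expected revenue at least $\EFO\super 2(\vals)/\multiunitratio$ in its environment, linearity in the coefficients $(\wali[k] - \wali[k+1])$ propagates the same ratio to the composite position auction mechanism.

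For matroid permutation environments, I would then feed the position auction mechanism just built, instantiated with $\wals$ equal to the characteristic weights of the matroid, into the Matroid Permutation Reduction. The corresponding allocation-preservation lemma shows that the final mechanism produces the same expected allocation, and Lemma~\ref{l:mpr-ic} preserves incentive compatibility. Invoking Proposition~\ref{prop:ivv_reduction} once more equates the envy-free optimum in the matroid permutation environment with that of the position auction environment, so the $\multiunitratio$ ratio is inherited unchanged.

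The only subtle point (rather than an obstacle) is to verify that the approximation guarantee genuinely survives the convex-combination step of the position auction reduction. This follows because both the mechanism's expected revenue and the benchmark $\EFO\super 2(\vals)$ are linear in the coefficients $(\wali[k] - \wali[k+1])$, and each summand is individually a $\multiunitratio$-approximation on the common valuation profile $\vals$; hence the worst-case ratio is preserved term by term. Beyond this observation, the argument is simply a chaining of the reduction lemmas already established in this section.
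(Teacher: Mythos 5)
Your proposal is correct and takes the same route the paper intends: the corollary follows from Theorem~\ref{t:x_reduction} combined with Corollary~\ref{c:multi-unit-6.5}, and your write-up simply unpacks the content of Theorem~\ref{t:x_reduction} by chaining the Position Auction Reduction and the Matroid Permutation Reduction together with their allocation- and IC-preservation lemmas and the invariance of the envy-free benchmark from Proposition~\ref{prop:ivv_reduction}'s corollary. The linearity observation you flag as the only subtle point is exactly the justification the paper relies on implicitly, so there is no gap.
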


There are two weakness in the reductions implied by
Theorem~\ref{t:x_reduction} in comparison to those implied by
Proposition~\ref{prop:ivv_reduction}.  Recall that for the latter, ironed
virtual surplus maximizations are via the greedy algorithm, and so the
reductions were algorithmically trivial.  In contrast,
Theorem~\ref{t:x_reduction} requires knowledge of the characteristic
weights to run the construction, these weights may be hard to compute.  In
addition the mechanism that results from the matroid permutation
reduction is only incentive compatible if the agents are assigned to
roles in the matroid via a random permutation as suggested in the
model.  In contrast, RSEM in matroid environments is incentive
compatible without any random permutation (Lemma~\ref{l:RSEM-IC}).

\subsection{Multi-unit Analysis of RSEM}
\label{s:multi-unit-rsem}

\newcommand{\lowprice}{\lowp}
\newcommand{\highprice}{\highp}
\newcommand{\lownumber}{\lowi}
\newcommand{\highnumber}{\highi}
\newcommand{\market}{M}
\newcommand{\sample}{S}
\newcommand{\numbers}{k} 
\newcommand{\lunder}{\underline{L}}
\newcommand{\Real}{\mathbf{R}}

\newtheorem{claim}[theorem]{Claim}


In this section we prove Theorem~\ref{t:multi-unit-rsem} which shows
that in multi-unit environments RSEM (Definition~\ref{d:RSEM}) is a
$\RSEMmultiunitratio$-approximation to the envy-free benchmark.  We
further assume that the number of units available is $k\geq 2$.  This
is without loss of generality because with $k=1$ RSEM produces the
same outcome as the Vickrey auction which is optimal with respect to
the envy-free benchmark $\EFO \super 2(\vals) = \vali[2]$.

The approximation factor is the product of three terms: $2 \times
\frac43 \times \amscr \approx \RSEMmultiunitratio$.  Roughly, these
three terms come from the following steps:
\begin{enumerate}
\item \label{step:2F>EFO} We bound the performance relative to the
  VCG-with-reserve benchmark (i.e., $\max_{i \leq k} \R(i)$).  By
  Theorem~\ref{thm.benchmarks} this benchmark can be at most a factor of two
  below the envy-free benchmark.
\item \label{step:L>4/3RSEM} We use an intermediary envy-free-like
  revenue which is (usually) at most $\frac43$ of RSEM's
  incentive-compatible revenue.
\item \label{step:AMS} Finally we employ a result due to
  \citet{AMS-09} to show that the expected relative imbalance between
  the sample and the market is $\amscr$; in particular our
  intermediary envy-free-like revenue is a $\amscr$ fraction of the
  VCG-with-reserve benchmark in expectation.
\end{enumerate}

To lay out the quantities we will be discussing, recall that RSEM
partitions the agents into a sample $S$ and market $\market$ where the
highest valued agent, 1, is conditioned to be in $\market$.  The
(ironed) virtual surplus for $\ivv[S]$ on $\market$ is then optimized
by selling the $k$ units to the agents with the $k$ highest (positive)
virtual values.  Conditioned on the partitioning, let $\highp$ and
$\lowp$ denote the supremum and infimum values that correspond to the
virtual value of the $k+1$st highest agent in $\market$.  Notice that
if an agent in $\market$ bids above $\highp$ she always wins, if she
bids below $\lowp$ she always loses, and if she bids on
$(\lowp,\highp)$ she wins with some probability.  Let $\highi$ and
$\lowi$ be the number of agents in the market that are above $\highp$
and $\lowp$, respectively, and observe that $\highi \leq k < \lowi$.
This sort of outcome was termed by \citet{HR-09} as a
$\highp$-$\lowp$-lottery.  

Notice that if we had enough units it would be envy free (for
$\market$) to sell to the top $\lowi$ agents at price $\lowp$ for
revenue $L(\lowi) = \lowp \lowi$ or to sell to the top $\highi$ agents
at price $\highp$ for revenue $L(\highi) = \highp \highi$.  Envy-free
revenue is linear and therefore with only $k$ items we can linearly
interpolate between these two revenues to obtain revenue (cf.~Section~\ref{s:benchmarks}, equation \eqref{eq:IR(k)}):
\[
L(k) = \tfrac{\lowi - k}{\lowi - \highi} L(\highi) + \tfrac{k -
  \highi}{\lowi-\highi} L(\lowi).
\]
This is the envy-free intermediary we referred to in
Step~\ref{step:L>4/3RSEM}, above.\footnote{This envy-free revenue is
  not the revenue-maximal one specified by the payment identity of
  Lemma~\ref{l:ef_char}.}

The IC revenue of this $\highp$-$\lowp$-lottery, and consequently of
RSEM, is lower than the envy-free revenue given by $L(k)$.  The top
$\highnumber$ agents each receive a unit and pay $\highprice -
(\highprice- \lowprice)\frac{(k-\highnumber+1)}{\lownumber -
  \highnumber +1}$ (by Theorem~\ref{t:IC_payment}).  The next
$\lownumber - \highnumber$ agents each receive a unit with probability
$ \frac{ k- \highnumber}{\lownumber - \highnumber}$ and pay
$\lowprice$.  The revenue of RSEM for selling $k$ units to $M$ is
thus,
\[ \RSEM(k) = (k-\highnumber)\lowprice + \highnumber \left( \highprice - (\highprice-\lowprice )\tfrac{(k-\highnumber+1)}{\lownumber - \highnumber +1} \right ) .\] 
This is also a linear function of $k$ and, while we require by
assumption that $\highi \leq k < \lowi$, formulaically $\RSEM(\lowi) =
L(\lowi)$.

We now derive a linear function that upper bounds $\IR[\sample](k)$,
the ironed revenue curve for the sample.  This upper bound will be
parameterized by the degree of imbalance between the sample and
market.

\begin{definition}
\label{d:imbalance}
The {\em imbalance} $\imbal$ of a partitioning $(S,M)$ of agents
$N=\{1,\dots,n\}$ is $\imbal = \max_{i\in N} \frac{Y_i}{i-Y_i}$ where
$Y_i=\left|\{1,2,\ldots,i\}\cap S\right|$ is the number of the highest
$i$ agents in $S$.
\end{definition}

For a given imbalance $\imbal$ the number of agents in $S$ with value
at least $\highp$ (resp.~$\lowp$) is at most $\highi \imbal$
(resp.~$\lowi\imbal$).  Define $H(\lowi) = \lowp \lowi \imbal$,
$H(\highi) = \min(\highp \highi \imbal,H(\lowi))$, and $H(k)$ as the
linear interpolation between points $(\highi,H(\highi))$ and
$(\lowi,H(\lowi))$.  The reason we take the minimum in the second step
is that since $\IR[\sample](k)$ is monotone, an upper bound on $\IR[\sample](\lowi)$ is an upper bound for all $k \in [i,j]$.  
For this definition, $H(k) \geq \IR[\sample](k)$; we defer proof of this lemma for later.

\begin{lemma}
\label{l:H>EF[S]}
For $H(k)$ defined above, $H(k)$ is at least the envy-free revenue of
the sample, $\IR[\sample](k)$.
\end{lemma}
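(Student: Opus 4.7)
The plan is to upper-bound $\IR[\sample](k)$ by exhibiting a concave majorant of the sample's revenue curve $\R[\sample]$ whose value at $k\in[\highi,\lowi]$ is at most $H(k)$. Since $\IR[\sample]$ is by definition the minimum concave majorant of $\R[\sample]$, the bound follows.

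First I translate the imbalance $\imbal$ into cardinality bounds on the sample. Letting $t$ be the rank in $N$ of the lowest-valued agent with value at least $\highp$, the definition of $\highi$ gives $t-Y_t=\highi$, and the imbalance inequality $Y_t/(t-Y_t)\leq\imbal$ yields $Y_t\leq\imbal\highi$. Hence at most $\imbal\highi$ sample agents have value at least $\highp$, and the analogous argument at the $\lowp$ boundary gives at most $\imbal\lowi$ sample agents with value at least $\lowp$. Consequently, the single-price revenue from $S$ at prices $\highp$ and $\lowp$ is bounded by $\imbal\highi\highp$ and $\imbal\lowi\lowp$, which are precisely the two quantities that appear in the definition of $H$.

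Next I construct a piecewise-linear concave majorant $F$ of $\R[\sample]$ on $\{0,1,\ldots,|S|+1\}$: $F$ starts at $(0,0)$, rises steeply enough through small indices to dominate $\R[\sample](i)=i\,v^S_i$ (where a stray high-value sample agent can make the revenue curve large), transitions through the two bounding points encoding the single-price bounds from Step~1, and finally descends to $(|S|+1,0)$. Because $\IR[\sample]$ is the minimum concave majorant of $\R[\sample]$, we get $\IR[\sample](k)\leq F(k)\leq H(k)$. The min in the definition of $H(\highi)$ enforces that $F$ remains concave when a naive linear extension from $(\highi,\imbal\highi\highp)$ to $(\lowi,\imbal\lowi\lowp)$ would exceed the bound at the left endpoint; this matches the paper's observation that $\IR[\sample](k)$ is monotone on the relevant range, so an upper bound at $\lowi$ already controls all earlier $k$.

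The hard part is making $F$ dominate $\R[\sample]$ at small $i$, where a single high-value sample agent can inflate $\R[\sample](i)$ well above the linear guide set by the single-price bounds. Concavity from the origin gives the key leverage: since $F(0)=0$, $F$'s initial slope can be as steep as the tangent line from $(0,0)$ to the relevant extreme point of the concave hull of $\R[\sample]$, which covers those high-value contributions, and the subsequent flatter segments (with slopes forced to decrease by concavity) then lie below $H$ on $[\highi,\lowi]$ by the single-price bounds. A brief case split on whether $\imbal\highi\highp$ exceeds $\imbal\lowi\lowp$ closes the argument and justifies the min in the definition of $H(\highi)$.
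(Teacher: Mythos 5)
Your first step (translating imbalance into the cardinality bounds that at most $\imbal\highi$ sample agents have value $\geq\highp$ and at most $\imbal\lowi$ have value $\geq\lowp$) is correct and is exactly the ingredient the paper uses, as is your observation that the $\min$ in the definition of $H(\highi)$ corresponds to handling the case where the segment would otherwise be decreasing. However, the core of your argument---that one can construct a concave majorant $F$ of $\R[\sample]$ passing through the two bounding points and then lying below $H$ on $[\highi,\lowi]$---is asserted rather than established, and I don't see how to make it work from the cardinality bounds alone. The worry you yourself flag (a stray high-value sample agent making the concave hull steep at small $i$) is not actually resolved by ``concavity from the origin gives the key leverage''; that remark explains why $F$ \emph{can} be steep initially, but not why the flatter continuation of $F$ is guaranteed to drop below the line $H$ on $[\highi,\lowi]$. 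Since $\IR[\sample]$ is itself the minimum concave majorant, exhibiting such an $F$ is essentially equivalent to proving $\IR[\sample]\leq H$ on that range, so the construction step is where the content of the lemma actually lives and it is left unproved.

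What closes this gap in the paper's proof is a structural fact you never invoke: by construction, $\highp$ and $\lowp$ are the supremum and infimum of an interval of values on which $\ivv[\sample]$ is constant, so $\IR[\sample]$ is \emph{ironed} (linear) on the corresponding quantity range and, crucially, its two endpoints lie on the rays $y=\highp x$ and $y=\lowp x$ through the origin. The imbalance cardinality bounds then say those endpoints are closer to the origin than $(\imbal\highi,\imbal\highi\highp)$ and $(\imbal\lowi,\imbal\lowi\lowp)$, so the segment joining the latter two points dominates $\IR[\sample]$ on the ironed range; shifting both endpoints left to $x=\highi$ and $x=\lowi$ preserves dominance because $\IR[\sample]$ is non-decreasing there. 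Without using the fact that the ironed curve is exactly a line between points on those two rays, the argument does not go through, since the concave hull near $\highi$ could a priori sit far above $\highp\cdot\highi\cdot\imbal$ due to a large sample value, and nothing in your proposal rules that out.
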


Now we have three linear functions defined on $[\highi,\lowi]$:
$H(\cdot)$, the upper bound on the envy-free revenue from the sample;
$L(\cdot)$, the envy-free intermediary for the market; and
$\RSEM(\cdot)$, the incentive compatible revenue of RSEM on the
market.  Simply, when $L(\cdot)$ is increasing (so $H(\cdot)$ is
parallel to it), the maximal ratio between $\RSEM$ and $H$ occurs at
$k = \highi$.  The following lemma shows that this ratio is $\frac43
\imbal$ which follows from the ratio of $\RSEM(\highi)$ to $L(\highi)$
being $\frac43$; the proof is deferred to later.
\begin{lemma}
\label{l:increasing}
With imbalance $\imbal$ and increasing $L(\cdot)$, the ratio of
$\RSEM$ to $H$ is at most $\frac43 \imbal$.
\end{lemma}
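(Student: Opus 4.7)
The plan is to reduce bounding $H/\RSEM$ on $[\highi,\lowi]$ to a direct comparison $L(k)/\RSEM(k)\le 4/3$, exploiting the hypothesis that $L$ is increasing.

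First I would show that, under the assumption $L(\highi)\le L(\lowi)$, we have $H(k)=\imbal\,L(k)$ throughout $[\highi,\lowi]$. The hypothesis reads $\highp\highi\le\lowp\lowi$, so the minimum in the definition of $H(\highi)$ resolves to $\highp\highi\,\imbal=\imbal\,L(\highi)$, while $H(\lowi)=\lowp\lowi\,\imbal=\imbal\,L(\lowi)$ by definition. Since both $H$ and $\imbal L$ are linear on $[\highi,\lowi]$ and coincide at the two endpoints, they agree on the whole interval. Consequently, proving $H(k)/\RSEM(k)\le\tfrac{4}{3}\imbal$ reduces to showing $L(k)/\RSEM(k)\le\tfrac{4}{3}$ on $[\highi,\lowi]$.

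Next I would locate the worst case. Plugging $k=\lowi$ into the closed form for $\RSEM$ collapses all lottery probabilities to $1$ and yields $\RSEM(\lowi)=\lowi\lowp=L(\lowi)$. Since both $L$ and $\RSEM$ are linear on $[\highi,\lowi]$ with $L\ge\RSEM$ (IC revenue is at most the envy-free revenue of the same lottery), the ratio of these linear functions has derivative of constant sign, hence is monotone; as its value at $k=\lowi$ is $1$ and its value at $k=\highi$ is at least $1$, it is non-increasing and its maximum on the interval occurs at $k=\highi$. From the payment formula, $\RSEM(\highi)=\highi(d\highp+\lowp)/(d+1)$ with $d=\lowi-\highi$, while $L(\highi)=\highp\highi$, so
\[
\frac{L(\highi)}{\RSEM(\highi)}=\frac{(d+1)\highp}{d\highp+\lowp}.
\]

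The heart of the proof is bounding this quantity by $4/3$. I would combine two constraints: (a) $L$ increasing gives $\lowp\lowi\ge\highp\highi$, equivalently $\lowp/\highp\ge\highi/\lowi$; and (b) $\highi\ge 1$ (the case $\highi=0$ is vacuous, since then $L\equiv\RSEM$ on $[0,\lowi]$), so $\lowi=\highi+d\ge d+1$ and hence $\lowp/\highp\ge 1/(d+1)$. Substituting,
\[
\frac{(d+1)\highp}{d\highp+\lowp}\le\frac{(d+1)^2}{d^2+d+1},
\]
which is decreasing in integer $d\ge 1$ and equals $4/3$ at $d=1$. The main obstacle is precisely this chain: the tight constant $4/3$ emerges only by combining the direction of $L$, the integrality bound $\highi\ge 1$, and the arithmetic optimum $d=1$, with the extremal case $(\highi,\lowi,\lowp/\highp)=(1,2,1/2)$ verifying tightness. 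Once this is in hand, $H(k)/\RSEM(k)=\imbal\,L(k)/\RSEM(k)\le (4/3)\imbal$ follows immediately.
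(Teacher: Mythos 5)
Your proof is correct and follows the same core route as the paper's: reduce to bounding $L(\highi)/\RSEM(\highi)$, then maximize that ratio subject to the constraint $\highp\highi\le\lowp\lowi$ forced by $L$ being increasing. The differences are ones of completeness rather than strategy. The paper restricts to $\highi\geq 2$, where it obtains the stronger bound $6/5$, and relegates $\highi=1$ to ``a similar analysis''; your parametrization by $d=\lowi-\highi$ with the chain $\lowp/\highp\geq\highi/\lowi\geq 1/(d+1)$ handles all $\highi\geq 1$ uniformly and lands exactly on $4/3$ at $d=1$. You also spell out two steps that the paper's proof leaves implicit (they are only gestured at in the surrounding prose): that $H=\imbal L$ on $[\highi,\lowi]$ under the hypothesis, and that the ratio of two linear functions agreeing at $k=\lowi$ is monotone, so the extremum sits at $k=\highi$. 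One small correction: your asserted tight configuration $(\highi,\lowi,\lowp/\highp)=(1,2,1/2)$ is infeasible under the standing assumption $k\geq 2$, which forces $\lowi\geq 3$; under the full set of constraints the extremal ratio is $9/7$, attained at $(\highi,\lowi)=(1,3)$ with $\lowp/\highp=1/3$. This is immaterial to the correctness of the $4/3$ bound you establish.
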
 
When $L(\cdot)$ is decreasing then $H(\cdot)$ is constant and the
ratio between $\RSEM$ and $H$ is maximized at $k \in
\{\highi,\lowi\}$.  Because the ratio of $H(\lowi)$ to $L(\lowi) =
\RSEM(\lowi)$ is $\imbal$ by definition; we must only show that the
$\RSEM(\highi) \geq \frac34 L(\lowi)$.  We do this in the following
lemma, proof of which is deferred to later.
\begin{lemma}
\label{l:decreasing}
With imbalance $\imbal$ and decreasing $L(\cdot)$, the ratio of $\RSEM$ to
$H$ is at most $\frac43 \imbal$.
\end{lemma}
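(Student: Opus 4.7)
The plan is to bound the ratio between $H(k)$ and $\RSEM(k)$ on $k \in [\highi,\lowi]$ by $\tfrac{4}{3}\imbal$; concretely I will establish $H(k) \leq \tfrac{4}{3}\imbal\,\RSEM(k)$ throughout the interval, which is the content of the stated bound on their ratio. The first observation is that, because $L$ is decreasing, $\highi\highp \geq \lowp\lowi$, so the definition $H(\highi) = \min(\highp\highi\imbal,\,\lowp\lowi\imbal)$ collapses to $H(\highi) = \lowp\lowi\imbal = H(\lowi)$, making $H$ the constant function $\lowp\lowi\imbal$ on the interval.

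Since $\RSEM(\cdot)$ is affine and strictly positive on $[\highi,\lowi]$ while $H$ is constant, the ratio $H(k)/\RSEM(k)$ is monotone in $k$ and is therefore maximized at an endpoint. At $k = \lowi$ the excerpt already records $\RSEM(\lowi) = L(\lowi) = \lowp\lowi$, giving $H(\lowi)/\RSEM(\lowi) = \imbal \leq \tfrac{4}{3}\imbal$. The only remaining work is to show $\RSEM(\highi) \geq \tfrac{3}{4}\lowp\lowi$; substituting the closed form
\[
\RSEM(\highi) \;=\; \highi\highp - \tfrac{\highi(\highp-\lowp)}{\lowi-\highi+1} \;=\; \tfrac{(\lowi-\highi)\,\highi\highp + \highi\lowp}{\lowi - \highi + 1}
\]
and replacing $\highi\highp$ by its lower bound $\lowp\lowi$ (permitted by the decreasing hypothesis) reduces the target to the elementary inequality $\lowi(\lowi - \highi) + \highi \geq \tfrac{3}{4}\lowi(\lowi - \highi + 1)$, equivalently $\lowi(\lowi - \highi - 3) + 4\highi \geq 0$.

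For $\lowi - \highi \geq 3$ this is immediately nonnegative, so the only real work is the corner case $\lowi - \highi = 2$, where the inequality reduces to $\lowi \leq 4\highi$, i.e.\ $\highi \geq 1$. This is the one place the decreasing hypothesis genuinely gets used a second time: $\highi\highp \geq \lowp\lowi > 0$ forces $\highi \geq 1$, since $\highi = 0$ would make the left side vanish. I expect this endpoint/corner bookkeeping to be the only nontrivial step---the structural picture (constant $H$ versus affine $\RSEM$) makes the reduction to two endpoints essentially automatic.
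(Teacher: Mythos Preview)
Your overall structure matches the paper's exactly: once $L$ is decreasing the upper bound $H$ collapses to the constant $\lowp\lowi\imbal$, the ratio $H/\RSEM$ is monotone on $[\highi,\lowi]$ because $\RSEM$ is affine and positive, and so everything reduces to showing $\RSEM(\highi)\ge \tfrac34 L(\lowi)=\tfrac34\lowp\lowi$. Where you and the paper part ways is in how you verify that last inequality. The paper splits on the sign of the slope of $\RSEM$: if $\highi\highp>(\lowi+1)\lowp$ then $\RSEM$ is decreasing, so $\RSEM(\highi)\ge \RSEM(\lowi)=L(\lowi)$ with no loss; otherwise a short calculation gives $L(\lowi)/\RSEM(\highi)\le (\lowi+1)/\lowi$, which is at most $4/3$ because $k\ge 2$ forces $\lowi\ge 3$. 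Your route---substituting the crude bound $\highi\highp\ge\lowp\lowi$ and reducing to the integer inequality $\lowi(\lowi-\highi-3)+4\highi\ge 0$---is equally valid and arguably more elementary, though it yields only the target $4/3$ rather than the sharper $(\lowi+1)/\lowi$.

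There is one small oversight in your case analysis: you handle $\lowi-\highi\ge 3$ and $\lowi-\highi=2$ but skip $\lowi-\highi=1$, which is perfectly possible (it occurs when $\highi=k$ and $\lowi=k+1$). In that case your inequality becomes $-2\lowi+4\highi=2\highi-2\ge 0$, which again just needs $\highi\ge 1$, and your argument that the decreasing hypothesis $\highi\highp\ge\lowp\lowi>0$ forces $\highi\ge 1$ covers it. So the gap is trivially filled, but you should include the case.
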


We conclude by decomposing Step~\ref{step:AMS} into two parts. The
first is the ratio between the VCG-with-reserve benchmark on all
agents $N$ and the envy-free benchmark on the sample $S$.  The second
is the ratio between our upper bound $H(\cdot)$ on the envy-free
revenue of the sample $S$ to the revenue of $\RSEM(\cdot)$.  The
latter is, by Lemmas~\ref{l:increasing} and~\ref{l:decreasing}, at
most $\frac43 \imbal$.  To calculate the former, the VCG-with-reserve
benchmark sells to $\resi \leq k$ agents at price $\vali[\resi]$ for a
total revenue of $\R(\resi) = \resi\vali[\resi]$.  Of course, on the
sample $S$ it is envy-free to post price $\vali[\resi]$ as well;
such a price is accepted by $Y_\resi$ (from the definition of
imbalance, Definition~\ref{d:imbalance}) agents for a total revenue of
$\vali[\resi] Y_\resi$.  The optimal envy-free revenue for the sample
is no smaller.  We conclude:

\begin{lemma}\label{lem.sideA} The envy-free revenue of the sample satisfies
 $\IR^\sample (k) \geq \tfrac{Y_\resi}{\resi}\cdot \VCGr(k)$.
\end{lemma}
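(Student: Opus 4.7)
The plan is to exhibit a concrete envy-free, individually rational outcome on the sample whose revenue already matches the desired lower bound, and then observe that $\IR^S(k)$ dominates it. No clever ironing or concavity argument is needed: this is essentially unpacking the definition of imbalance together with the observation used just before the statement of the lemma.

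First, I would pick the optimizer of the VCG-with-reserve benchmark: let $r = \argmax_{i\leq k} \R(i)$, so that $\VCGr(k) = r\cdot \vali[r]$. By construction $r\leq k$, and by Definition~\ref{d:imbalance}, exactly $Y_r$ of the top $r$ (highest-valued) agents overall belong to the sample $S$. Each of these $Y_r$ sample agents has value at least $\vali[r]$.

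Next, I would build an envy-free outcome on $S$ by posting a single price $\vali[r]$: it is accepted by precisely those $Y_r$ agents, yielding total revenue $Y_r \cdot \vali[r]$. A uniform posted price is envy-free and individually rational by Lemma~\ref{l:ef_char} (its allocation is swap monotone and all winners pay the same amount). Feasibility in the $k$-unit environment holds because the number served is $Y_r \leq r \leq k$.

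Finally, since $\IR^S(k)$ is the optimal envy-free revenue on the sample $S$ in the $k$-unit environment (Theorem~\ref{t:efo}; recall $\R^S(Y_r) \geq Y_r \vali[r]$ so the concave envelope is at least as large), it dominates the revenue of the posted-price outcome just exhibited, giving
\[
\IR^S(k) \;\geq\; Y_r \cdot \vali[r] \;=\; \tfrac{Y_r}{r}\cdot r\,\vali[r] \;=\; \tfrac{Y_r}{r}\cdot \VCGr(k).
\]
The one potential pitfall is ensuring that the $Y_r$-winner posted price outcome is legitimately feasible and envy-free in the $k$-unit environment on $S$ rather than on all of $N$; but this is immediate from $Y_r \leq k$ and from the posted-price form of the outcome. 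There is no real technical obstacle beyond this bookkeeping.
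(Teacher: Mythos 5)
Your proposal is correct and follows exactly the argument the paper gives in the paragraph immediately preceding the lemma: post the VCG-with-reserve price $\vali[\resi]$ to the sample, observe that $Y_\resi$ agents accept it (by the definition of imbalance), note this is envy-free and feasible since $Y_\resi \leq \resi \leq k$, and conclude that the optimal envy-free revenue of the sample is no smaller. Your version simply spells out the feasibility and envy-freedom bookkeeping that the paper leaves implicit.
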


Combining the ratios from these two parts, $\frac{\resi}{Y_\resi}$ and
$\frac43 \imbal$ we conclude that our revenue is governed by the
expectation of random variable $X = \frac{\resi}{Y_\resi} \imbal$.  Its 
expectation is bounded by the following lemma:

\begin{lemma}\label{lem.AMS}\citep{AMS-09}
For all positive integers $\resi$, random variable $X =
\frac{\resi}{Y_\resi} \imbal$, for $Y_i$ and imbalance $\imbal$ as
defined in Definition~\ref{d:imbalance} satisfies:
$$\expect{{1/X}} \geq {1/\amscr}.$$
\end{lemma}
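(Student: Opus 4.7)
The plan is to reduce this lemma to the analysis of the random sampling optimal price (RSOP) auction performed by \citet{AMS-09} in digital-good environments. Their theorem says that RSOP is a $\amscr$-approximation to the optimal single-price envy-free revenue, and the bottleneck quantity in their argument has exactly the form $\mathbb{E}[1/X]$ for $X$ as defined here. So our job is to phrase the statement in a way that matches their setup and then quote the numerical bound.

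First I would unpack the probabilistic model. The random partition of $N$ places agent $1$ in the market $M$ deterministically and assigns each of agents $2,\dots,n$ to $S$ independently with probability $1/2$. Hence $Y_1 = 0$, and for $i \geq 2$ the increment $Y_i - Y_{i-1}$ is an i.i.d.\ Bernoulli$(1/2)$ random variable. Rewriting, $1/X = \frac{Y_\resi}{\resi}\cdot\min_{i\ge 1}\frac{i-Y_i}{Y_i}$, where the minimum is precisely the reciprocal of the imbalance $\imbal$. This is the same random variable that governs the ratio between the envy-free benchmark and the RSOP revenue when one selects the best single price on the sample and offers it to the market: the factor $Y_\resi/\resi$ tracks how many of the ``target'' top $\resi$ agents actually sit in the sample, and the minimum tracks the worst prefix imbalance that can force RSOP to post a price lower than intended.

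The heart of the argument is a concentration bound on the simple random walk $2Y_i - i$ starting from $2Y_1-1 = -1$. For a typical partition, $Y_i/i$ is $1/2 + O(1/\sqrt{i})$, so $\lambda \approx 1$ and $Y_\resi/\resi \approx 1/2$, making $1/X$ close to $1/2$ and hence $\mathbb{E}[1/X]$ comfortably bounded below. The delicate part, and the place where the universal constant $\amscr$ comes from, is controlling the small-$i$ contribution and the tail event that some prefix of the random walk is very unbalanced; naive Chernoff bounds give a qualitatively correct but much weaker constant. The calculation in \citet{AMS-09} handles this by (i) decomposing the expectation into contributions from each value of $Y_\resi$, each of which is a binomial probability, (ii) bounding the conditional expectation of $\min_i(i-Y_i)/Y_i$ given $Y_\resi$ by a reflection/ballot-style argument on the walk, and (iii) numerically optimizing the resulting sum, which is dominated by the worst choice of $\resi$ (small $\resi$).

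The main obstacle, therefore, is not conceptual but quantitative: extracting the sharp constant $\amscr$ requires the careful AMS computation combining exact enumeration of binomial tails for small $\resi$ with an integral estimate for large $\resi$, together with a proof that taking the supremum of the bound over $\resi$ is attained at a specific finite value. Since our $X$ is literally the ratio appearing in their Theorem, no new probabilistic ingredient is required: we simply translate the statement and invoke their bound, obtaining $\mathbb{E}[1/X] \geq 1/\amscr$ as claimed.
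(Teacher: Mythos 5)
The paper provides no proof of this lemma---it is cited as a black box from \citet{AMS-09}---and your plan does exactly the same thing: identify $X$ with the imbalance ratio governing RSOP in their analysis and invoke their constant $\amscr$. Your rewriting $1/X = \frac{Y_\resi}{\resi}\cdot\min_i\frac{i-Y_i}{Y_i}$ and the random-walk intuition are consistent with how the lemma is used in Section~\ref{s:multi-unit-rsem}, so the approach matches; the internals of the AMS computation you sketch cannot be checked against the paper since it gives none.
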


Lemmas~\ref{l:H>EF[S]}--\ref{lem.AMS} and Theorem~\ref{thm.benchmarks}
combine to prove Theorem~\ref{t:multi-unit-rsem}.  We conclude with
the proofs of the three deferred lemmas.

\begin{proof}[Proof of Lemma~\ref{l:H>EF[S]}]
Assume $H(k)$ is increasing.  We show that $H(k)$ is at least the
envy-free revenue of the sample, $\IR[\sample](k)$.  $\IR[S]$ is
ironed between values $\highp$ and $\lowp$; therefore, it has a line
segment that connects two points on the lines from the origin with
slope $\highp$ and $\lowp$.  These respective points are closer to the
origin than $(\highi \imbal, \highp \highi \imbal)$ and $(\lowi
\imbal, \lowp \lowi \imbal)$, by the definition of the imbalance
parameter $\imbal$.  Therefore, these latter points are strictly above
the revenue curve $\IR[S](\cdot)$ and furthermore so is the line
segment that connects them.  Since the revenue curve is monotone we
can shift these points to the left and they (and the line segment
between them) remain above the revenue curve.  The line $H(\cdot)$ is
constructed in this manner.  If, on the other hand, $H(k)$ not
increasing (then by definition it is constant) then all we need is the
end point $(\lowi,H(\lowi))$ to be above the ironed revenue curve for
$S$ which is implicit in the argument above.
\end{proof}

\begin{proof}[Proof of Lemma~\ref{l:increasing}]
We show that if $L(\cdot)$ is increasing then $RSEM(\highnumber) \geq
\tfrac34 L(\highnumber)$.  We assume that $\highnumber \geq 2$
(otherwise, a similar analysis for $\highnumber = 1$ and $k=2$ can be
employed).  To do this we give an upper bound on
\[ \tfrac{L(\highnumber) - \RSEM (\highnumber) } { L(\highnumber)} 
= \tfrac{\highprice - \lowprice}{\highprice ( \lownumber -\highnumber
  +1)} \leq \tfrac 1 6  .\] The lemma follows easily from this upper bound.  (In fact this gives a bound of $\tfrac 5 6$; the ratio $\tfrac 3 4 $ is tight for a different case.) Consider maximizing the above quantity subject to the constraints
\begin{itemize}
\item $2 \leq \highnumber < \lownumber$.  ($\highnumber$ and $\lownumber $ are integers.)
\item $\highprice > \lowprice > 0$. 
\item $\highprice \highnumber \leq \lowprice \lownumber$. 
\end{itemize} 
Rewriting the last of these as $\lowprice/\highprice \geq \highnumber /\lownumber  $, we get that 
\[ \left(1 - \tfrac{\lowprice}{\highprice} \right)\tfrac{1}{ \lownumber -\highnumber +1} \leq  
\left(1 - \tfrac{\highnumber}{\lownumber}\right) \tfrac{1}{ \lownumber -\highnumber +1} 
 = \tfrac{\lownumber - \highnumber} {\lownumber( \lownumber -\highnumber +1 )} .\] 
For any fixed $\lownumber$, the above quantity is maximized at the smallest $\highnumber$, which is 2. 
So we need to maximize 
\[ \tfrac{\lownumber - 2} {\lownumber( \lownumber -1 )} .\] 
over all integers $\lownumber \geq 3$.  It is easy to see that it is
maximized at $\lownumber = 3$ with value $1/6$.
\end{proof}

\begin{proof}[Proof of Lemma~\ref{l:decreasing}]
When $L(\cdot)$ is decreasing we show that $RSEM(\highnumber) \geq
\tfrac34 L(\lownumber)$.  Suppose that actually $L (\highnumber) >
(\lownumber +1)\lowprice.$ (This is slightly stronger than the
hypothesis in the lemma, since $L (\lownumber) =\lownumber\lowprice.$)
$\RSEM$ as a function of $k$ is linear and has a negative slope, so it
decreases in the interval $[\highnumber, \lownumber)$.  If the linear
  function were to be extended to $\lownumber$, then it would be
  exactly $L (\lownumber)$.  Thus $\RSEM(i) \geq L
  (\lownumber)$. (There is no loss of a factor.)

The other case is where $L(\lownumber) < L (\highnumber) < (\lownumber
+1)\lowprice.$ In this case one can compare $\RSEM(\highnumber)$ with
$L(\lownumber)$ and show that they are within a factor of
$(\lownumber+1)/\lownumber$ which is at most 4/3, since $\lownumber \geq 3$.
\end{proof}

\newcommand{\Xcomment}[1]{}
\Xcomment{

\begin{lemma}
For random partitioning $(S,M)$ with $1 \in M$ and $X = \imbal$ is
$Y_i=\left|\{1,2,\ldots,i\}\cap \sample\right|$
\end{lemma}

is the number of bidders in $\sample$ among the top $i$ bidders from $N$.
Let $ z = \min_i \{(i-Y_i)/Y_i\}\cup\{1\}$. 
$z$ captures the ``balance factor", that is,  it gives a lower bound 
on the ratio of the number of bidders above a certain price in $\market$ to the number of 
bidders above the same price in $\sample$.



The proof of this theorem is broken down into Lemmas \ref{lem.swap},  \ref{lem.sideA}
and Lemma \ref{lem.AMS} (together with an application of Theorem \ref{thm.benchmarks}).
Lemma \ref{lem.swap} relates the revenue from $\market$ to the ironed revenue curve 
of $\sample$.  

This relation is in terms of a random variable, $z$, which we now define. 

Recall that $Y_i=\left|\{1,2,\ldots,i\}\cap \sample\right|$

is the number of bidders in $\sample$ among the top $i$ bidders from $N$.
Let $ z = \min_i \{(i-Y_i)/Y_i\}\cup\{1\}$. 
$z$ captures the ``balance factor", that is,  it gives a lower bound 
on the ratio of the number of bidders above a certain price in $\market$ to the number of 
bidders above the same price in $\sample$. 
\begin{lemma}\label{lem.swap}
For all $k \geq 2$, $\RSEM(k) \geq z\IR^S(k) / \tfrac{4}{3}.$
\end{lemma}

Lemma \ref{lem.sideA} relates the ironed revenue curve of $\sample$ to
$\VCGr(\vals_N).$ We use $\ftwo(k)$ as a short-hand for
$\VCGr(\vals_N)$ for the rest of this section.  This relation is in
terms of $\resi$, the number of units sold by $\ftwo(k)$.  I.e.
$\ftwo(k) = \resi \valith[\resi]$.

Let $ X = z \frac{Y_\resi}{\resi}$. 
A lower bound on the expectation of $X$ for all possible values of $\resi$ is given in 
Lemma \ref{lem.AMS} (and was proved originally in \citet{AMS-09}).

\begin{proof} (of Theorem \ref{thm.rsem}.)
If $k =1$ then the theorem  is trivial. So assume $k\geq2$. 

Putting together  Lemmas \ref{lem.swap} and \ref{lem.sideA}  we get 
\[ \RSEM(k) \geq \ftwo(k) z \tfrac{Y_\resi}{\resi} / \tfrac{4}{3}  
= \ftwo(k) X / \tfrac{4}{3}.  \]
Lemma \ref{lem.AMS} gives a lower bound of $ \frac{1}{\amscr}$ on 
$ \expect[]{X}$. 
Finally, Theorem \ref{thm.benchmarks} shows that $\ftwo(k)$ is within a factor of two of  $\EFO^{(2)}$. 
\end{proof}

We still need to prove Lemma \ref{lem.swap}. We need some new definitions before we 
do that. 
 
The RSEM auction when thought of as an auction run only on  $\market$, 
 is either a $\highprice,\lowprice $ lottery or Vickrey with reserve price. 
The case when it is Vickrey with reserve price is easy, so we assume that it is a $\highprice,\lowprice $ lottery. 
Further, $\highprice$ and $\lowprice$ are consecutive bids from $\sample$. 
Let  $\highnumber$ be the number of bidders in $\market$ with valuation at least $\highprice$. 
Similarly let $\lownumber$ be the number of bidders in $\market$ with valuation at least $\lowprice$. 
$k$ is such that $\highnumber \leq k < \lownumber$. 
The top $\highnumber$  bidders get 1 unit with probability 1 and pay 
$\highprice  - \frac{(\highprice- \lowprice)(k-\highnumber+1)}{\lownumber - \highnumber +1}$. 
The next $\lownumber - \highnumber$ bidders get 1 unit with probability 
$ \frac{ k- \highnumber}{\lownumber - \highnumber}$ and pay $\lowprice$.  

The revenue, $\RSEM(k) $  is 
\[ (k-\highnumber)\lowprice + \highnumber \left( \highprice - \tfrac{(\highprice-\lowprice )(k-\highnumber+1)}{\lownumber - \highnumber +1} \right ) .\] 

Consider the line in 2 dimensional plane joining $(\highnumber, \highnumber \highprice)$ 
and $(\lownumber, \lownumber \lowprice)$. We think of this as a linear function on the interval 
$[ \highnumber, \lownumber]$, say $L$. 
$L( \highnumber ) = \highnumber \highprice$, 
$L( \lownumber ) = \lownumber \lowprice$
and $L$ is linearly interpolated in between. 

Define function $\runder(z\numbers) : = z \IR^\sample(\numbers) $ for all $\numbers$ in the interval 
in which $\IR^\sample$ is increasing. Extend it to all of $\Real_+$ by keeping it constant 
for the rest of $\Real_+$, 
that is, so that $\runder$ is monotonically non-decreasing. 
Since $\IR^S$ is concave, so is $\runder$. 

\begin{lemma}\label{claim.rundergerbar}
$\runder(\numbers) \geq z \IR^\sample(\numbers)$ for all $\numbers \in \Real_+$. 
\end{lemma}
\begin{proof}
\[ \runder(\numbers) \geq \runder(z \numbers) = z \IR^\sample(\numbers).\]
The first inequality is by monotonicity of $\runder$. The equality is by definition. 
\end{proof} 

\begin{lemma}\label{claim.lgerunder}
$ L(\numbers) \geq \runder(\numbers)$ for all $\numbers \in [\highnumber,\lownumber]$. 
\end{lemma}
\begin{proof}
By definition, $\runder(z \highnumber') = z \highnumber' \highprice $ where
$\highnumber'$ is the number of bidders in $\sample$ that are above $\highprice$. 
Similarly $\runder(z \lownumber') = z \lownumber' \lowprice $ where
$\lownumber'$ is the number of bidders in $\sample$ that are above $\lowprice$. 
From the definition of $z$, $ z \highnumber' \leq \highnumber$ and 
$ z \lownumber' \leq \lownumber$. 
Define a function $\lunder$ so that it is linear and agrees with $\runder$ in the interval 
$[z \highnumber', z\lownumber']$. 
It is enough to show that $\lunder (\lownumber) \leq L(\lownumber)$. 
\[ \lunder(\lownumber) =  \tfrac{1}{\lownumber' -\highnumber'} 
\left[ \highnumber'\lownumber' z (\highprice - \lowprice) + \lownumber ( \lownumber'\lowprice -  \highnumber'\highprice) \right] \] 
\[ \leq \tfrac{\lownumber}{\lownumber' -\highnumber'} 
\left[  \highnumber' (\highprice - \lowprice) +   \lownumber'\lowprice -  \highnumber'\highprice \right]  = \lownumber \lowprice = L(\lownumber). \] 
\end{proof}

\begin{proof} (of Lemma \ref{lem.swap}.)
By Lemma \ref{claim.rundergerbar} it is enough to show that 
$\RSEM(k) \geq 3\runder(k)/4$. 
In Lemmas \ref{lem.increasing} and \ref{lem.decreasing}, we show the following two  statements respectively. 
\begin{itemize}
\item If $L(\highnumber) \leq L(\lownumber)$ then $RSEM(\highnumber) \geq 3 L(\highnumber)/4$. 
\item If $L(\highnumber) > L(\lownumber)$ then $RSEM(\highnumber) \geq 3 L(\lownumber)/4$.
\end{itemize}
We now argue that these along with Lemma \ref{claim.lgerunder}  are sufficient. 
First of all, if $\RSEM$ is extended to $\lownumber$ by linear interpolation, its value at $\lownumber$ 
would be exactly $L(\lownumber) = \lownumber \lowprice$. 
We consider 3 cases. 

\noindent {\bf Case 1:} $\RSEM$ is decreasing in $[\highnumber, \lownumber)$. 
Note that $\runder$ is always non-decreasing. So the worst ratio of $\RSEM$ to $\runder$ 
occurs at $l-1$. But since $\RSEM(\lownumber) = L(\lownumber)$, 
 $\RSEM(\lownumber-1) \geq \runder(\lownumber-1)$. (There is no loss of a factor here.)

\noindent {\bf Case 2:} Both $\RSEM$ and $L$ are increasing in $[\highnumber, \lownumber)$. 
In this case, we actually show $\RSEM \geq 3L/4$, and we then use Lemma \ref{claim.lgerunder}. 
Since both $\RSEM$ and $L$ are linear and they coincide at $l$, the worst ratio happens  at $h$. 
This case is covered by Lemma \ref{lem.increasing} .

\noindent {\bf Case 3:} $\RSEM$ is increasing and $L$ is decreasing in $[\highnumber, \lownumber)$. 
In this case we compare the lowest point of $\RSEM$ which is $\RSEM(h)$ with the highest 
point of $\runder$, which is $\runder(l)$. This is covered by Lemma \ref{lem.decreasing}. 
\end{proof}

}

\section{Downward-closed Permutation Environments}

\label{s:random_sampling}

In this section, we will show that a variant of RSEM (recall
Definition~\ref{d:RSEM}) approximates the envy-free benchmark by a
constant factor.

\begin{definition}[RSEM$'$]
\label{d:RSEM'}
The variant RSEM$'$ is identical to RSEM except Step~\ref{step:RSEM-serve}:
\begin{enumerate}\setlength{\itemsep}{0in}
\item[\ref{step:RSEM-serve}$'$.] find the feasible subset $W$ of $N$
  (the full set of agents) to maximize ironed virtual surplus with
  respect to $\ivv[S]$, serve agents in $M \cap W$ (the winners
  from the market $M$) only, and reject all other agents.
\end{enumerate}
\end{definition}

The proof we give that RSEM$'$ is a good approximation to the
envy-free benchmark is based on the fact that with large probability
the sample and market satisfy a natural balanced condition.  This
condition requires that, for all prefixes of the agents sorted by value,
a good fraction of these agents are in each of the market and
sample.  The proof then has three main steps: show the probability of
balance is high, show balance implies that the IC revenue of RSEM$'$ (on
the market $M$) is close to the optimal EF revenue for the sample $S$,
and show the expected optimal EF revenue of the sample is close to the
envy-free benchmark.

\begin{theorem}
\label{t:IC-RESM-dc}
For downward-closed permutation environments, $\expect{\IC{\RSEM'}(\vals)} \geq
\frac{1}{\RSEMdownwardclosedratio} \EFO \super 2 (\vals)$  for all $\vals$.
\end{theorem}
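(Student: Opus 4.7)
The plan is to carry out the three-step strategy outlined in the paragraph preceding the statement. Define the \emph{balance event} $B$: for every $i \in \{1,\dots,n\}$, the number $Y_i$ of top-$i$ agents landing in the sample $S$ lies in $[\alpha i, (1-\alpha) i]$ for a suitably small absolute constant $\alpha$. Since each agent (conditional on agent $1$ being forced into $M$) lands in $S$ independently with probability $1/2$, a Chernoff bound gives $\Pr[Y_i \notin [\alpha i, (1-\alpha) i]] \leq 2 e^{-c_\alpha i}$; a union bound over a geometric sequence of $i$'s, combined with a monotonicity argument to fill in the intermediate $i$, yields $\Pr[B] \geq \gamma$ for an absolute constant $\gamma$.

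Second, conditioned on $B$, I would argue that $\IC{\RSEM'}(\vals)$ is at least a constant fraction $\delta$ of $\EF{\ivv[S]}(\vals)$, the envy-free revenue extracted from all of $N$ by the ironed virtual surplus maximizer $W$ with respect to $\ivv[S]$. By Lemma~\ref{l:vv=rev}, the envy-free revenue of $W$ on $\vals$ decomposes as $\sum_i \R[\vals](i)\cdot(\alloci - \alloci[i+1])$, with breakpoints occurring at sample values. Balance ensures that at every such breakpoint $v$ the number of agents in $M \cap W$ with value above $v$ is within a constant factor of the number of sample agents above $v$, so the IC payments collected from $M \cap W$ by RSEM$'$ are within a constant factor of this envy-free revenue. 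The analysis is structurally analogous to the multi-unit argument of Section~\ref{s:multi-unit-rsem}, but applied price-by-price along the allocation staircase rather than at a single $(\highp,\lowp)$-lottery.

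Third, I would lower bound $\EF{\ivv[S]}(\vals)$ by a constant fraction $\eta$ of $\EFO \super 2(\vals)$ conditioned on $B$. Using Theorem~\ref{thm:EFO<2F}, $\EFO \super 2(\vals) \leq 2 \max_{i\geq 2} \R[\vals](i)$; let $i^{*}$ achieve this maximum. Under balance the sample revenue curve $\R[S]$ at price $\vali[i^{*}]$ is within a constant factor of $\R[\vals](i^{*})$, and Theorem~\ref{t:efo} guarantees that the ironed virtual surplus maximizer with respect to $\ivv[S]$ achieves envy-free revenue comparable to this benchmark when evaluated on $\vals$. Multiplying the constants $\gamma$, $\delta$, and $\eta$ from the three steps yields the final approximation factor $\RSEMdownwardclosedratio$.

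The main obstacle is the compatibility between the ironing operations on $\R[\vals]$ and on $\R[S]$: a value may be ironed on one curve but not the other, so the breakpoints of the allocation staircase induced by $\ivv[S]$ need not align with those on the ironed curve of $\vals$. One must show that the horizontal distortion of the revenue curve controlled by $B$ produces only a constant-factor vertical distortion of the least-concave majorant---a concavity fact about ironed envelopes---because the resulting loss propagates through all three steps and ultimately determines the constant $\RSEMdownwardclosedratio = 189$.
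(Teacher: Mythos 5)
Your three-step decomposition (probability of balance, market revenue versus a sample-based envy-free quantity, sample quantity versus benchmark) matches the paper's skeleton, and Step~1 is essentially fine even if the paper's specific constant $0.339$ comes from a gambler's-ruin analysis rather than Chernoff. But Steps~2 and~3 have real problems.

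In Step~2 you try to go directly from $\IC{\RSEM'}(\vals)$ to $\EF{\ivv[S]}(\vals)$ by a ``price-by-price'' comparison. The paper instead factors this cleanly: first $\IC{S}_M(\vals_N)\geq \tfrac12 \EF{S}_M(\vals_N)$ by Lemma~\ref{l:IC>EF/2} (an IC-versus-EF comparison on the \emph{same} population $M$, a fixed factor $2$), then $\EF{S}_M(\vals_N)\geq\tfrac14\EF{S}_N(\vals_N)$ by Lemma~\ref{l:dominance} (a monotone-prefix-sum fact exploiting balance, factor $4$). Your argument conflates these two unrelated losses, and the stated justification—comparing the number of agents in $M\cap W$ above a breakpoint to the number of \emph{sample} agents above that breakpoint—is not the comparison one actually needs for either part. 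A careful price-by-price version could probably be made to work, but what you wrote does not yet contain the two distinct ideas.

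Step~3 is the genuine gap. You invoke Theorem~\ref{thm:EFO<2F}, but that theorem is proved only for multi-unit environments; there is no analogous claim in the paper for general downward-closed permutation environments, and the proof relies on the multi-unit structure of $\IR(k)$. More importantly, you then assert that, under balance, the ironed virtual surplus maximizer for $\ivv[S]$ has envy-free revenue on $\vals$ comparable to the benchmark and cite Theorem~\ref{t:efo}—but Theorem~\ref{t:efo} gives optimality only when the maximizer is ironed with respect to the \emph{same} profile it is evaluated on. Here the allocation is driven by $\ivv[S]$ while the revenue is evaluated on $\vals$, and reconciling these mismatched ironings is exactly the central technical difficulty. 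You correctly identify it as ``the main obstacle'' in your closing paragraph, but you do not resolve it. The paper handles it with dedicated machinery: the effective revenue curve $\ER$ (Lemma~\ref{l:effective}), the perceived revenue curve $\Rhat$ and the change-of-variables trick $\alloci^S(\vals)=\alloci^{\vhats}(\vhats)$ (Lemma~\ref{l:v_hat}), the dominance $\ER(i)\geq\IRs(i)$ (Lemma~\ref{l:r_dominance}), and the balance bound $\IRs(i)\geq\tfrac14\Rhat(i)$ (Lemma~\ref{l:vhat_vs_v_S}), all combined in Lemma~\ref{l:key} to get $\EF{\ivv[S]}(\vals_N)\geq\tfrac14\EFO(\vals_S)$. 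Finally, the paper gets from $\EFO(\vals_S)$ to $\EFO\super2(\vals)$ not via VCG-with-reserve but via the subadditivity Lemma~\ref{l:subadd} plus symmetry (Lemma~\ref{l:EFO_S-2}), which gives the clean factor~$2$ in expectation. Until you supply an actual argument controlling the ironing mismatch, the proposal does not establish the theorem.
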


\def\balance{{\cal B}}
\begin{proof}
Let $\balance$ denote the event that the market and sample are
balanced.  Lemma~\ref{l:balanced2} states that the probability that
the partition is balanced is at least:
\begin{align*}
\prob{\balance} &\geq 0.339.\\
\intertext{The expected IC revenue of RSEM$'$ is at least its revenue
  conditioned on the partitioning into sample and market being
  balanced (denoted as event $\balance$).  I.e.,}
\expect{\IC{\RSEM'}(\vals)} & \geq  \prob{\balance}\expect{\IC{\RSEM'}(\vals) \given
  \balance}.\\ 
\intertext{Lemma~\ref{l:balance-32} states that for any $\vals$ that
  balance implies the IC revenue of RSEM$'$ is at least a $\frac{1}{32}$
  fraction of the EF optimal revenue on the sample.  Taking
  expectations,}
\expect{\IC{\RSEM'}(\vals) \given \balance} & \geq \tfrac{1}{32} \expect{\EFO(\vals_S) \given \balance}.\\ 
\intertext{Lemma~\ref{l:EFO_S-2} states that the EF optimal revenue
  on the sample is at least half the envy-free benchmark, in
  expectation and conditioned on a balanced partitioning.  I.e.,}
\expect{\EFO(\vals_S) \given \balance} & \geq \tfrac{1}{2} \EFO\super 2(\vals).\\
\intertext{Combining the above inequalities we conclude that the IC revenue of RSEM$'$ is a least a $\RSEMdownwardclosedratio$-approximation to the envy-free benchmark.  I.e.,}
\expect{\IC{\RSEM'}(\vals)} &\geq \tfrac{1}{\RSEMdownwardclosedratio} \EFO \super 2 (\vals). \qedhere
\end{align*}
\end{proof}

\subsection{Balanced Partitioning}

We now show that with high probability the partitioning of the agents
into the market and sample satisfies a natural balanced property.
Recall that, by definition, agent 1 is in $M$.  This balanced property
is a double-sided version of the balanced property introduced by
\citet{FFHK-05}.

\begin{definition}
A partitioning $(S,M)$ of agents $N=\{1,\dots,n\}$ is {\em balanced}
if $1 \in M$ and $2 \in S$ and for any set of three or more of the
highest valued agents both the market and sample contain at least a
quarter of agents in the set.  I.e., for $i \geq 3$, $\setsize{S \cap
  \{1,\ldots,i\}} \geq i/4$ and $\setsize{M \cap \{1,\ldots,i\}} \geq
i/4$.
\end{definition}


\begin{lemma}
\label{l:balanced2}
Conditioning on $1\in M$, a random partitioning $(S,M)$ of $N$ is
balanced with probability at least $0.339$.
\end{lemma}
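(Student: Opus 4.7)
The plan is to condition on $1 \in M$, factor out the independent event $\{2 \in S\}$ (which has probability $1/2$), and reduce the rest to a one-dimensional first-passage calculation for a biased random walk with positive drift. Setting $T_i = |S \cap \{1,\dots,i\}|$, conditioning on $1 \in M$ makes the indicators $\mathbf{1}[j \in S]$ for $j \geq 2$ i.i.d.\ Bernoulli$(1/2)$, and conditioning further on $\{2 \in S\}$ gives $T_2 = 1$. The balance requirement $i/4 \leq T_i \leq 3i/4$ for $3 \leq i \leq n$ then rewrites as $L_i \geq 0$ and $U_i \geq 0$ for all such $i$, where $L_i := 4T_i - i$ and $U_i := 3i - 4T_i$ are both random walks starting at $L_2 = U_2 = 2$ with i.i.d.\ increments uniform on $\{-1, +3\}$ and drift $+1$.

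By the union bound, it suffices to upper-bound the probability that a single such walk started at $2$ ever reaches $-1$; relaxing the horizon from $n$ to $\infty$ can only increase this probability. I would compute the limiting hitting probability via the strong Markov property. Let $q$ denote the probability that the walk started at $0$ eventually reaches $-1$. Conditioning on the first step gives
\[
q \;=\; \tfrac{1}{2} \;+\; \tfrac{1}{2}\,q^{4},
\]
since after a $+3$ step the walk sits at $+3$ and must descend four integer levels, and each unit descent is an independent copy because downward jumps have size exactly $1$. The polynomial $q^{4} - 2q + 1$ factors as $(q-1)(q^{3} + q^{2} + q - 1)$, and the unique root of the cubic in $(0,1)$ is $q \approx 0.5437$.

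By the same independence argument, the probability of hitting $-1$ from $2$ equals $q^{3} \approx 0.1607$. A union bound over $L$ and $U$ gives a conditional probability of balance (given both $1 \in M$ and $2 \in S$) of at least $1 - 2q^{3} > 0.678$, and multiplying by $\Pr[2 \in S \mid 1 \in M] = 1/2$ yields $\Pr[(S,M)\text{ balanced} \mid 1 \in M] \geq \tfrac{1}{2}(1 - 2q^{3}) \geq 0.339$. The only delicate point of the argument is the recursion for $q$, which relies on the strong Markov property together with the fact that the walk cannot skip integer levels going down, so the hitting times of successive integer levels compose as independent copies; the cubic-root estimate can then be verified by a few bisection steps. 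Everything else is elementary.
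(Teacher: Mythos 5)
Your argument is essentially identical to the paper's: it conditions on $1 \in M$ and $2 \in S$, encodes the balance condition as a pair of $\{-1,+3\}$-increment random walks started at $2$, derives the first-passage recursion $q = \tfrac12 + \tfrac12 q^4$ (with root $q \approx 0.544$ of the cubic factor), upper-bounds each one-sided failure by $q^3 \approx 0.161$, applies the union bound to get $\geq 0.678$, and multiplies by $\Pr[2 \in S \mid 1 \in M] = \tfrac12$ to land at $0.339$. The only cosmetic difference is that you phrase both failure events as walks $L_i, U_i$ on the sample counts $T_i$, whereas the paper treats the second as the symmetric walk for $M$; these are the same calculation.
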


\begin{proof}
Conditioning on $1 \in M$ and $2 \in S$, the probability that either
part is imbalanced can be calculated to be at most $0.161$ by a simple
probability of ruin analysis which comes from \citet{FFHK-05} (details
given below).  By the union bound, both parts are balanced with
probability at least $0.678$.  Agent 2 is in $S$ with probability
$1/2$ so the probability of balance conditioned on agent 1 in $M$ is
at least $0.339$.

The following analysis from \citet{FFHK-05} shows that the probability
that $S$ is imbalanced is at most $0.161$.  Consider the random
variable $Z_i = 4\setsize{S \cap \{1,\ldots,i\}} - i$; the balanced
condition is equivalent to $Z_i \geq 0$ for all $i \geq 3$.  By the
conditioning $i=2$ and $S \cap \{1,2\} = \{2\}$ imply that $Z_2 = 2$.
View $Z_i$ as the positions of a random walk on the integers that
starts from position two and takes three steps forward (at step $i$ with $i
\in S$) or one step back (at step $i$ with $i\not\in S$), each with
probability one half.  If this random walk ever arrives at position
$-1$ the partitioning is imbalanced.  This probability $r$ of ever
taking one step back in such a random walk can be calculated as the
root of $r^4-2r+1$ on interval $(0,1)$ which is about $0.544$.  The
probability of imbalance is then $r^3 \leq 0.161$ (i.e., if we ever take
three steps back when starting from position two).  By symmetry, the
probability of imbalance in the market $M$ is also at most $0.161$.
\end{proof}

\subsection{Market revenue versus sample revenue}

We now show that conditioned on a balanced partitioning of the agents
into a market and sample, that the revenue of RSEM$'$ from the market
is close to the envy-free optimal revenue from the sample.  The
revenue of RSEM$'$ is precisely $\IC{S}_M(\vals_N)$, i.e., the revenue
we get from the agents in $M$ when using the virtual value functions
from $S$ and optimizing virtual values on the over the full set of
agents $N$.  We wish to compare this revenue to the envy-free optimal
revenue on the sample, $\EFO(\vals_S)$.  Our bound follows from three
steps: the IC revenue of RSEM$'$ from $M$ is close to its EF revenue
from $M$, the EF revenue of RSEM$'$ from $M$ is close to its EF
revenue from the full set of agents $N$, and this EF revenue from the
full set of agents is close to the optimal envy-free revenue from the
sample.

\begin{lemma} 
\label{l:balance-32}
Given a balanced partitioning $(M,S)$, $\IC{\RSEM'}(\vals) \geq \tfrac{1}{32} \EFO(\vals_S)$.
\end{lemma}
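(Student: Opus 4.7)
The plan is to follow the three-step chain outlined in the paragraph immediately preceding the lemma: go from IC revenue of RSEM$'$ to EF revenue on the market (losing a factor of $2$), then from EF revenue on the market to EF revenue on all of $\vals_N$ (losing a factor of $4$ via balance of $M$), then from that to $\EFO(\vals_S)$ (losing a factor of $4$ via balance of $S$), for a total of $2\cdot 4\cdot 4=32$.

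First, I would apply Lemma~\ref{l:IC>EF/2} to every agent $i\in M$. To fit its hypothesis, I take $\vals'$ to be $\vals_N$ with every coordinate in $M$ zeroed out; after the natural extension to values above the sample maximum, the ironed virtual valuation function of $\vals'$ coincides with $\ivv[S]$. Summing the per-agent bound gives
\[
\IC{\RSEM'}(\vals) \;=\; \sum_{i\in M}\IC{\ivv[S]}_i(\vals_N) \;\geq\; \tfrac12\sum_{i\in M}\EF{\ivv[S]}_i(\vals_N).
\]

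Second, I would use the envy-free price identity of Lemma~\ref{l:ef_char} to unpack the right-hand side. Writing $\alloci$ for the allocation produced by maximizing ironed virtual surplus with respect to $\ivv[S]$ on $\vals_N$, we have $\pricei=\sum_{j\geq i}\vali[j](\alloci[j]-\alloci[j+1])$. Interchanging the order of summation,
\[
\sum_{i\in M}\pricei \;=\; \sum_{j}|M\cap[j]|\cdot \vali[j](\alloci[j]-\alloci[j+1]).
\]
Swap monotonicity of $\allocs$ makes each factor $\alloci[j]-\alloci[j+1]$ nonnegative, and the balance condition guarantees $|M\cap[j]|\geq j/4$ for every $j\geq 1$ (using that $1\in M$ and $2\in S$ handle the small indices). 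Hence the sum dominates $\tfrac14\sum_j \R[\vals_N](j)(\alloci[j]-\alloci[j+1])$, which by Lemma~\ref{l:vv=rev} is $\tfrac14$ of the envy-free revenue of $\allocs$ on all of $\vals_N$.

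Third, I would bound this full envy-free revenue below by $\tfrac14\EFO(\vals_S)$ using the symmetric balance condition on $S$. The key pointwise comparison is $\R[\vals_S](|S\cap[i]|)\geq \tfrac14\R[\vals_N](i)$, which follows because $|S\cap[i]|\geq i/4$ and, since $S\cap[i]$ is exactly the set of top-$|S\cap[i]|$ sample agents, the $|S\cap[i]|$-th largest value in $S$ is at least $\vali[i]$. Passing to concave majorants gives an analogous comparison between $\IR[\vals_S]$ and $\IR[\vals_N]$; combined with the fact that any feasible set on $S$ extends by zeros to a feasible set on $N$, and that $\allocs$ is the $\ivv[S]$-surplus maximizer on $N$, this yields $\EF{\ivv[S]}_N(\vals_N)\geq \tfrac14\EFO(\vals_S)$. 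Multiplying the three factors gives $\tfrac12\cdot\tfrac14\cdot\tfrac14=\tfrac1{32}$, as desired. I expect the main obstacle to be step three: the balance condition gives a clean pointwise bound between the raw revenue curves, but converting this into a comparison between the envy-free revenue of the $\ivv[S]$-optimizer on $\vals_N$ (which is the Lemma~\ref{l:vv=rev} sum against $\R[\vals_N]$, not against $\ivv[S]$) and $\EFO(\vals_S)$ in a downward-closed permutation environment requires careful accounting of how feasibility and ironing interact, and of the edge case at $i=1$ where agent $1$ lies in $M$.
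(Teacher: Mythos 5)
Your three-step decomposition ($2\times 4\times 4 = 32$) matches the paper's, and your first two steps are correct. Step 1 is exactly the paper's appeal to Lemma~\ref{l:IC>EF/2} with $\vals'=\vals_S$. Step 2 is the paper's Lemma~\ref{l:dominance} re-derived inline by Abel summation (swap the order of summation, then use $|M\cap[j]|\geq j/4$ together with swap-monotonicity of the $\ivv[S]$-optimal allocation); this is equivalent to plugging the non-increasing envy-free prices into Lemma~\ref{l:dominance}.

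The genuine gap is step 3, which corresponds to the paper's Lemma~\ref{l:key}: showing $\EF{\ivv[S]}_N(\vals_N)\geq \tfrac14\EFO(\vals_S)$. You correctly flag this as the obstacle, and your proposed route does not close it. Your pointwise bound $\R[\vals_S](|S\cap[i]|)\geq \tfrac14 \R[\vals_N](i)$ is a true inequality, but it compares the sample revenue curve at a shifted index against the full revenue curve at $i$, and it is not clear how "passing to concave majorants" turns that into a comparison of the two envy-free revenues, which are sums against \emph{different} revenue curves with \emph{different} allocations. The core difficulty your sketch elides: the envy-free revenue of the $\ivv[S]$-optimizer on $\vals_N$ is measured against $\R[\vals_N]$, while the optimizer is maximizing a different quantity (the $\ivv[S]$-surplus). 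The paper bridges this mismatch with two auxiliary curves. The effective revenue curve $\ER$ coincides with $\R[\vals_N]$ on the indices where the allocation actually jumps, giving $\EF{\ivv[S]}(\vals_N)=\sum_i \ER(i)\,(\alloci^S(\vals_N)-\alloci[i+1]^S(\vals_N))$ (Lemma~\ref{l:effective}). The perceived revenue curve $\Rhat(i)=\sum_{j\leq i}\ivv[S](\vali[j])$ encodes the objective that the $\ivv[S]$-optimizer genuinely maximizes on $\vals_N$ (Lemma~\ref{l:v_hat}, via $\vhats$). The chain is then $\ER\geq\IRs\geq\tfrac14\Rhat$ (Lemmas~\ref{l:r_dominance} and~\ref{l:vhat_vs_v_S}; the factor 4 is the balance of $S$ applied to the \emph{non-increasing sequence of $\ivv[S]$-values}, not to raw revenue curves), after which one invokes optimality of the $\vhats$-allocation to swap in the sample-optimal allocation and finishes with $\Rhat\geq\IRs$. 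Finally, your remark that "any feasible set on $S$ extends by zeros to a feasible set on $N$" is a red herring: in a downward-closed permutation environment the feasibility constraint is the same when evaluating $\vals_N$ and $\vals_S$; only the objective coefficients change. Without the $\ER$/$\Rhat$ machinery (or an equivalent replacement), step 3 does not go through.
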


\begin{proof} The proof is given by the following sequence of inequalities:
\begin{align*}
\IC{\RSEM'}(\vals)
   &= \IC{S}_M(\vals_N)                   &\text{(Definition~\ref{d:RSEM'})}\\
   &\geq \tfrac{1}{2} \EF{S}_M(\vals_N)    &	\text{(Lemma \ref{l:IC>EF/2})}\\
   &\geq \tfrac{1}{8} \EF{S}_N(\vals_N)    &	\text{(below, Lemma \ref{l:dominance})}\\
   &\geq \tfrac{1}{32} \EF{S}_S(\vals_S)   & 	\text{(below, Lemma \ref{l:key})}\\
   &= \tfrac{1}{32} \EFO(\vals_S). &&\qedhere
\end{align*}
\end{proof}

The first step (Lemma~\ref{l:IC>EF/2}) of Lemma~\ref{l:balance-32} was
proven in Section~\ref{s:ic_vs_ef}.  To see the second step notice
that the envy-free payment of the agents $i \in N$ in
$\EF{S}(\vals_N)$ form a non-increasing sequence.  This non-increasing
sequence of payments can be plugged into the following lemma
(Lemma~\ref{l:dominance}) as the $a_i$s.  The proof of the lemma
follows directly from its statement.

\begin{lemma}
\label{l:dominance}
Given a balanced partitioning $(S,M)$, for every non-increasing
sequence $a_1,\ldots,a_n$ of nonnegative reals and all $i \in N$,
$\sum_{j\in M\cap\{1,\ldots, i\}}a_j\geq \frac{1}{4}\sum_{j\in
  \{1,\ldots, i\}}a_j$.
\end{lemma}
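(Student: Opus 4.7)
The plan is to reduce the claim for an arbitrary non-increasing sequence to the pointwise balance condition via an Abel summation (summation by parts) argument. Using $a_{n+1}=0$, write $a_j = \sum_{k \geq j}(a_k - a_{k+1})$, which is a nonnegative decomposition because the sequence is non-increasing. Plugging this into both sides and interchanging the order of summation, I get
\begin{align*}
\sum_{j\in M \cap \{1,\ldots,i\}} a_j &= \sum_{k=1}^n (a_k - a_{k+1})\,\bigl|M \cap \{1,\ldots,\min(i,k)\}\bigr|,\\
\sum_{j\in \{1,\ldots,i\}} a_j &= \sum_{k=1}^n (a_k - a_{k+1})\,\min(i,k).
\end{align*}
Since the weights $(a_k - a_{k+1})$ are all nonnegative, it suffices to prove, term by term, that $|M \cap \{1,\ldots,m\}| \geq m/4$ for every positive integer $m = \min(i,k)$.

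For $m \geq 3$ this is exactly the balance condition in the statement of the lemma. The only case requiring a separate check is $m \in \{1,2\}$. When $m = 1$, the partition has $1 \in M$, so $|M \cap \{1\}| = 1 \geq 1/4$. When $m = 2$, again $1 \in M$ gives $|M \cap \{1,2\}| \geq 1 \geq 2/4$. (The stipulation $2 \in S$ in the definition of a balanced partition is not needed here, only in the probability calculation of Lemma~\ref{l:balanced2}.) Combining these three cases gives the desired pointwise inequality for every $m$.

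The argument is short and essentially bookkeeping; the only mild subtlety is recognizing that the nonincreasing hypothesis lets us decompose $a_j$ as a nonnegative combination of indicator-like quantities, after which the per-prefix balance condition immediately lifts to the weighted statement. No technical obstacle arises beyond the small-$m$ boundary cases just handled.
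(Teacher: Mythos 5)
Your proof is correct, and since the paper declares the lemma to ``follow directly from its statement,'' your Abel-summation (layer-cake) reduction to the pointwise balance condition is exactly the rigorous argument the paper leaves implicit. Your handling of the boundary cases $m\in\{1,2\}$ via $1\in M$ alone is accurate, and your observation that the clause $2\in S$ in the definition of balance plays no role in this particular lemma (as stated, for $M$) is a valid remark.
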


To show the third step, we need to give a detailed analysis of what
happens in terms of envy-free revenue when we optimize for the wrong
virtual values.  To do that we will define and consider the {\em
  effective revenue curve}, $\ER$, and {\em perceived revenue curve},
$\Rhat$.  Intuitively, $\Rhat$ corresponds to the revenue we think we
get when optimizing $\ivv[S]$ on $\vals$, and $\ER$ corresponds to the
revenue curve we actually end up with.



\begin{definition}[Effective revenue curve $\ER$]
For values $\vals$ and ironed virtual valuations $\ivv^{S}$ for $S$:
group agents with equal nonnegative $\ivv^{S}$ values into consecutive
classes $\{1,\ldots,n_1\}$, $\{n_1+1,\ldots,n_2\}$,
\ldots, $\{n_{t-1}+1,\ldots,n_t\}$ and define the {\em effective revenue
  curve} $\ER$ from $\R=\R[\vals]$ by connecting the points
$(0,0)$, $(n_1,\R(n_1))$, \ldots, $(n_t,\R(n_t))$ and then extending
horizontally to $(n,\R(n_t))$, i.e., ironing the values in each class.
\end{definition}

\begin{figure}[t]
\begin{center}
\subfigure[Effective Ironing]{%
        \label{fig:effective_ironing}
	\includegraphics[scale=0.45]{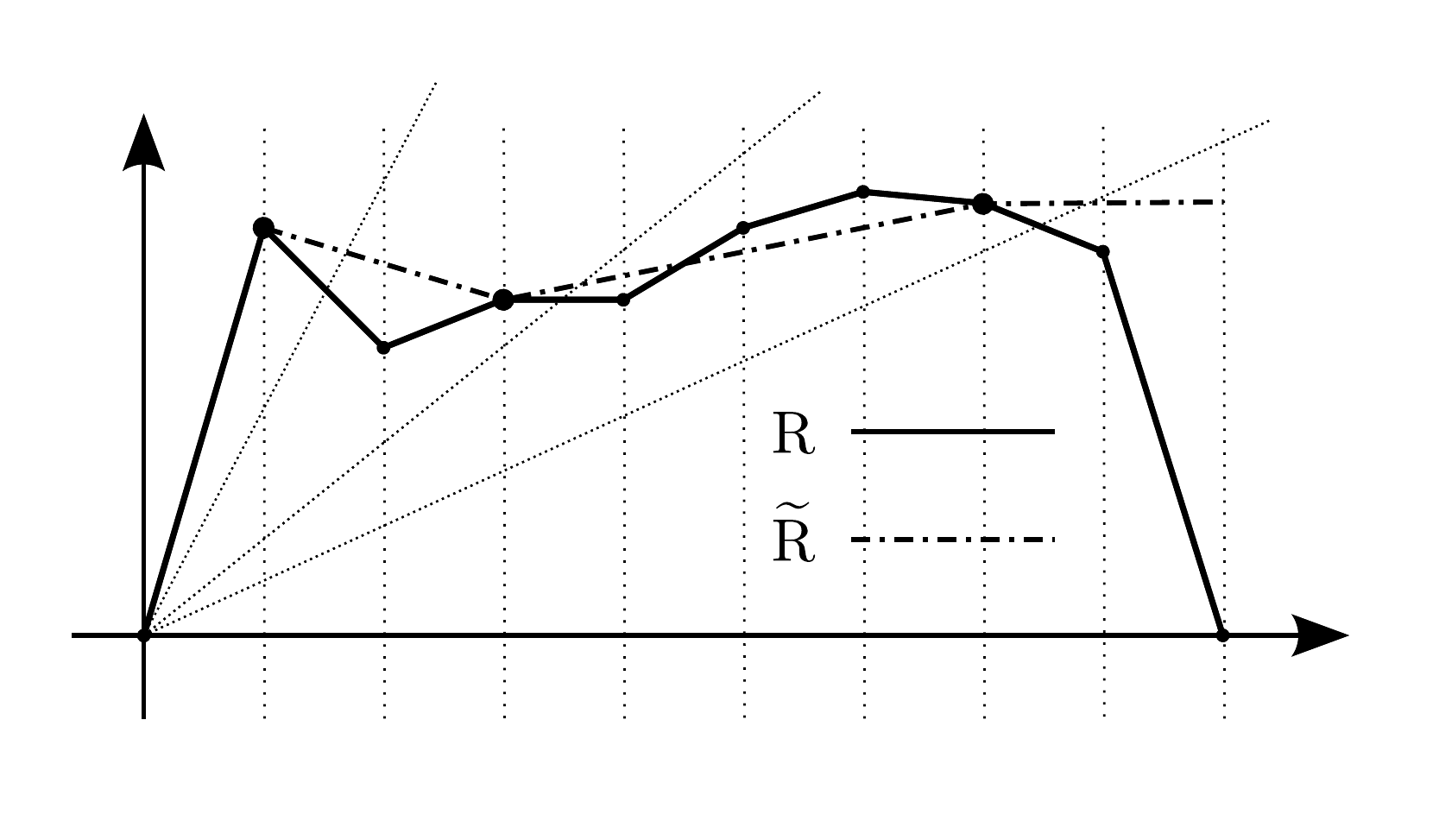}
        }
\subfigure[Revenue dominance]{%
\label{fig:curves}
	\includegraphics[scale=0.45]{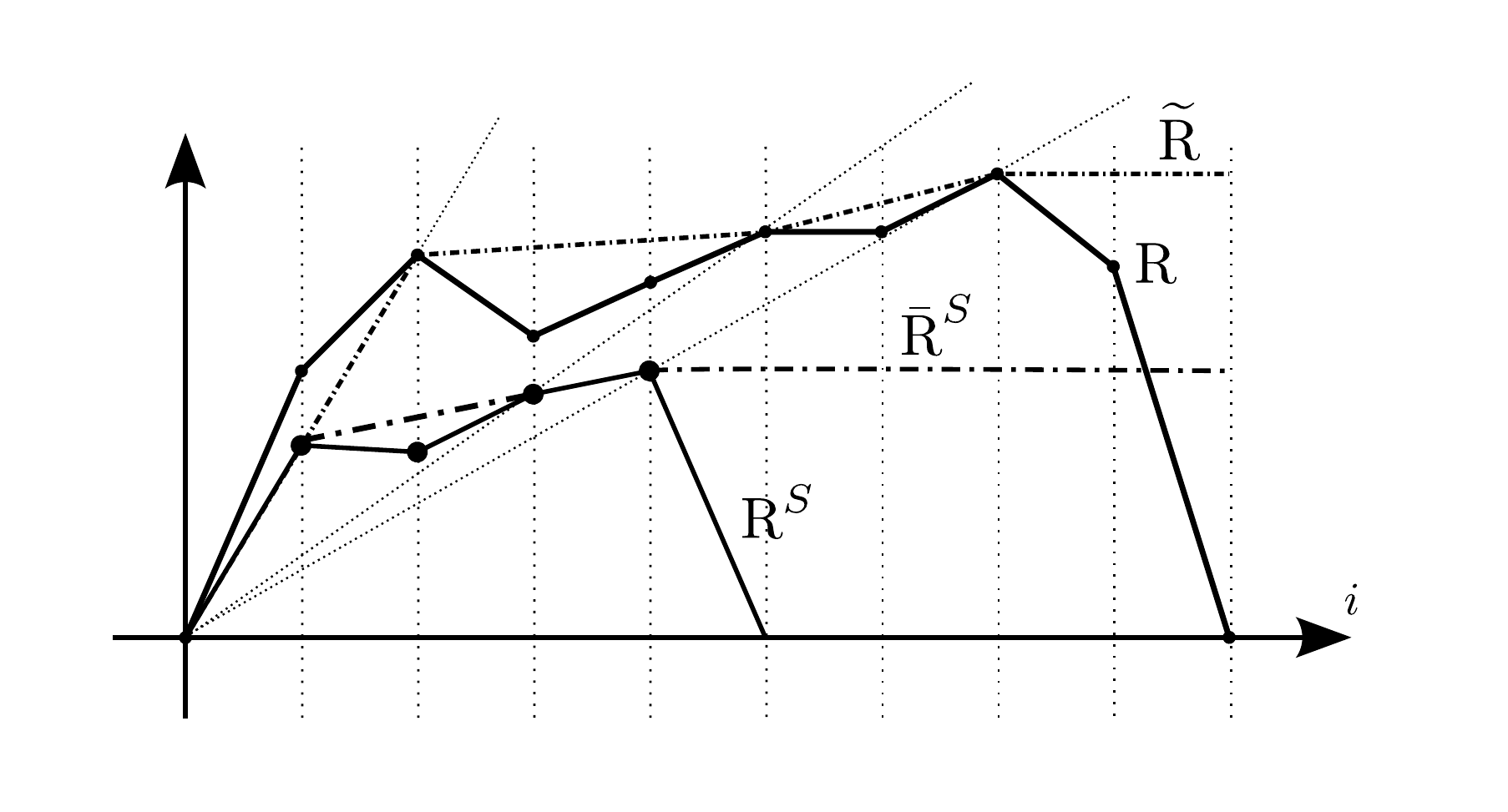}
        }
\end{center}
\caption{Effective revenue curves and revenue dominance depicted.}
\end{figure}

Figure~\ref{fig:effective_ironing} depicts an example of the effective
revenue curve.  The three rays from the origin, which correspond to
values at which $\ivv[S]$ makes a piece-wise jump, divide the first
orthant into four regions.  For every region, every point $(i, \R(i))$
in the region (which corresponds to value $\vali$) has the same
$\ivv^{S}$ value.  In each region these points get ``ironed'', 
and hence the line segment in $\ER$. 
\begin{lemma}
\label{l:effective}
$\EF{\ivv[S]}(\vals)=\sum\nolimits_{i=1}^n \ER(i)
\cdot (\alloci^{S}(\vals)
-\alloci[i+1]^{S}(\vals))$.
\end{lemma}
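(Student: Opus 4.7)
The plan is to invoke Lemma~\ref{l:vv=rev} together with the structural property of ironed virtual surplus maximizers established in Lemma~\ref{l:swap_mon}, and then observe that within each ``class'' of agents with equal $\ivv^S$ values the effective revenue curve $\ER$ agrees with $\R$ at exactly the points where allocation differences can be nonzero.

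First, I would note that the allocation $\allocs^S(\vals)$ maximizes ironed virtual surplus with respect to $\ivv^S$ on the symmetric environment, so by Lemma~\ref{l:swap_mon} it is swap monotone. Applying Lemma~\ref{l:vv=rev} then gives
\begin{equation*}
\EF{\ivv[S]}(\vals) \;=\; \sum\nolimits_{i=1}^n \R(i)\cdot\bigl(\alloci^S(\vals) - \alloci[i+1]^S(\vals)\bigr).
\end{equation*}
Thus it suffices to replace $\R(i)$ with $\ER(i)$ inside this sum.

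The key observation is that the difference $\alloci^S(\vals) - \alloci[i+1]^S(\vals)$ vanishes for all $i$ except at the right endpoints of the classes $\{n_1,\ldots,n_t\}$. Indeed, for any $i$ strictly interior to a class $\{n_{j-1}+1,\ldots,n_j\}$, agents $i$ and $i+1$ share the same (nonnegative) $\ivv^S$ value, and so by the random tie-breaking and symmetry argument of Lemma~\ref{l:swap_mon} they are served with equal probability, making the difference zero. For $i > n_t$, both $\alloci^S$ and $\alloci[i+1]^S$ are zero since those agents have nonpositive $\ivv^S$ values and are never selected by the ironed virtual surplus maximizer. Hence both sums collapse to contributions only at the indices $i = n_1,\ldots,n_t$.

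Finally, by the very construction of $\ER$, the curves $\R$ and $\ER$ coincide at each endpoint $n_j$, i.e., $\ER(n_j) = \R(n_j)$ for every $j \leq t$. Therefore the two sums $\sum_i \R(i)(\alloci^S - \alloci[i+1]^S)$ and $\sum_i \ER(i)(\alloci^S - \alloci[i+1]^S)$ are equal term by term at the only indices where either difference is nonzero, which yields the claimed identity. There is no real obstacle; the only thing one must be careful about is handling the tail $i > n_t$ (where agents get zero allocation and $\ER$ is extended horizontally) and confirming that random tie-breaking is what guarantees equal allocations inside a class, both of which are immediate from the definitions and Lemma~\ref{l:swap_mon}.
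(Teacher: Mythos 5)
Your proof is correct and takes essentially the same approach as the paper's: apply Lemma~\ref{l:vv=rev} to express $\EF{\ivv[S]}(\vals)$ in terms of $\R$, then observe that the allocation differences vanish except at the class endpoints $n_1,\ldots,n_t$, where $\ER$ and $\R$ coincide by construction. (You are slightly more careful than the paper: the paper's first equality is attributed to Lemma~\ref{l:ironing}, but that lemma is an inequality; the intended citation is Lemma~\ref{l:vv=rev}, which you cite correctly, and you also explicitly invoke Lemma~\ref{l:swap_mon} to justify equal allocations for equal ironed virtual values.)
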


\begin{proof}
\begin{align*}
\EF{\ivv[S]}(\vals)
&=\sum\nolimits_{i=1}^n \R(i)
\cdot (\alloci^{S}(\vals)
-\alloci[i+1]^{S}(\vals))\\
&=\sum\nolimits_{i=1}^n \ER(i)
\cdot (\alloci^{S}(\vals)
-\alloci[i+1]^{S}(\vals))
\end{align*}
Here the first equality is by Lemma~\ref{l:ironing}.
To justify the second equality, 
note that whenever $\ER(i)\neq \R(i)$, there are two cases:
(1) $i$ is in $\{n_{j-1}+1,\ldots,n_j-1\}$ for some $j$,
and so $\vali$ and $\vali[i+1]$ have the same $\ivv^S$ value,
and hence $\alloci^{S}(\vals)=\alloci[i+1]^{S}(\vals)$; and
(2) $i$ is bigger than $n_t$, and so $\vali$ and $\vali[i+1]$ both have negative $\ivv^S$ value, and hence 
$\alloci^{S}(\vals)=\alloci[i+1]^{S}(\vals)=0$.
\end{proof}

For a set of agents $S$, let $\vals_S$ denote $(\vals_S,\mathbf{0}_{N
  - S})$, i.e., the valuation profile (of $n$ agents) obtained from
$\vals$ by decreasing the values of agents outside $S$ to $0$. Note
that $\vals=\vals_N$.  Let $\R^S$ and $\IRs$ be the revenue curve and
ironed revenue curve of the valuation profile $\vals_S$ respectively.
\begin{lemma}
\label{l:r_dominance}
For all $1\leq i\leq n$,
$\ER(i)\geq \IRs(i)$.
\end{lemma}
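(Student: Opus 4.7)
The plan is a two-stage argument: first verify $\R(n_j)\geq\IRs(n_j)$ at every knot $n_j$ of $\ER$, and then use piecewise linearity of both $\ER$ and $\IRs$ to extend the inequality to all $i\in[0,n]$. For local notation let $0=\beta_0<\beta_1<\cdots$ be the breakpoints of $\IRs$, write $v^S_1\geq v^S_2\geq\cdots$ for the sorted values of $\vals_S$ (so that $\IRs(\beta_k)=\beta_k v^S_{\beta_k}$), and let $s_{k+1}$ denote the slope of $\IRs$ on $[\beta_k,\beta_{k+1}]$. For the first stage, I use that $\vals_S$ is obtained from $\vals$ by zeroing out entries, so the sorted profile $\vals$ pointwise dominates the sorted $\vals_S$ and $\R(i)\geq\R^S(i)$ everywhere. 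Suppose $n_j\in[\beta_k,\beta_{k+1}]$; then $\vali[n_j]$ lies in the ironing interval of $\IRs$ defining class $j$, forcing $\vali[n_j]\geq v^S_{\beta_k}$, while $n_j\geq\beta_k$ since both count, in $\vals$ and $\vals_S$ respectively, agents whose value exceeds a common threshold. Substituting into $\IRs(n_j)=\beta_k v^S_{\beta_k}+(n_j-\beta_k)s_{k+1}$ reduces the desired inequality to $v^S_{\beta_k}\geq s_{k+1}$, which (by expanding $s_{k+1}$) is equivalent to $v^S_{\beta_k}\geq v^S_{\beta_{k+1}}$ and hence immediate.

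For the second stage, fix $j$ and $i\in[n_{j-1},n_j]$. Since $\ER$ is linear on this interval and $\IRs$ is concave piecewise linear, it suffices to check $\ER\geq\IRs$ at the endpoints of each sub-interval cut out by the $\IRs$-breakpoints lying inside $[n_{j-1},n_j]$. The outer endpoints are handled by stage one. For an interior $\IRs$-breakpoint $\beta\in(n_{j-1},n_j)$, I would write $\ER(\beta)=\tfrac{n_j-\beta}{n_j-n_{j-1}}\R(n_{j-1})+\tfrac{\beta-n_{j-1}}{n_j-n_{j-1}}\R(n_j)$, substitute the bounds $\R(n_{j-1})\geq n_{j-1}v^S_\beta$ and $\R(n_j)\geq n_j v^S_\beta$, and observe that the convex combination then simplifies algebraically to exactly $\beta v^S_\beta=\IRs(\beta)$. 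Both lower bounds are valid because any $\IRs$-breakpoint lying between the ironing intervals of classes $j-1$ and $j$ corresponds to a gap class empty in $\vals$, so must sit at position $\leq n_{j-1}$; excluding this case forces the index of $\beta$ to be at or beyond that of class $j$. Finally, for $i>n_t$ the curve $\IRs$ has only nonpositive slopes past $n_t$, so $\IRs(i)\leq\IRs(n_t)\leq\R(n_t)=\ER(i)$.

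The main obstacle is that $\ER$ is not concave in general: a singleton high-value class followed by a wider low-value class produces slopes that are not monotonically decreasing, ruling out the clean argument of showing $\ER$ is a concave upper bound of $\R^S$ and invoking that $\IRs$ is the minimum such bound. This forces the second-stage interpolation, whose genuinely subtle point is that the knots of $\ER$ and of $\IRs$ are interleaved, so the inequality must be verified precisely at those $\IRs$-breakpoints that sit strictly inside an $\ER$-piece. The convex-combination identity above, together with the observation that $\vals$-empty gap classes keep stray $\IRs$-breakpoints from lying in $(n_{j-1},n_j)$, is what finally closes the argument.
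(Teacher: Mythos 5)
Your proof is correct and takes a genuinely different route from the paper. The paper gives only a short geometric sketch based on ``ray dominance'': for every slope $t$, the ray $y=tx$ hits $\R$ no earlier than $\R^S$ (since $\vals_S$ is a zeroed-out copy of $\vals$); applying the $\ivv[S]$-transformation to both curves preserves this dominance, sending $\R$ to $\ER$ and $\R^S$ essentially to $\IRs$; and ray dominance together with the monotonicity of $\ER(x)/x$ and $\IRs(x)/x$ then yields vertical dominance. Note the paper's route does not actually require $\ER$ to be concave, only that $\ER(x)/x$ be nonincreasing, which holds because $\ER(n_j)/n_j = \vali[n_j]$ at each knot; so the non-concavity of $\ER$ that you flag as the main obstacle is an obstacle only for the naive argument you correctly reject (treating $\ER$ as a concave upper bound of $\R^S$), not for the paper's sketch. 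Your approach instead verifies the inequality directly at all breakpoints of both piecewise-linear curves, which is more explicit, more elementary, and fills in details the paper leaves implicit. Two minor exposition points: in Stage~1, $\vali[n_j]\geq v^S_{\beta_k}$ does not follow merely from $\vali[n_j]$ lying in the ironing interval defining class $j$; one should argue that the index of that interval is at most $k$ (via $\vals_S\subseteq\vals$ and the counting bound on $n_j$), and the edge case $n_j=\beta_{k+1}$ needs to be treated by reassigning $k$. In Stage~2, the ``gap class'' case is in fact vacuous --- since $\vals_S\subseteq\vals$, every $\ivv[S]$-level contains a $\vals_S$-agent and hence a $\vals$-agent, so there are no gap classes --- though including it does no harm.
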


\begin{proof}[Proof sketch] Figure~\ref{fig:curves} depicts the relationship between the revenue curves.
Observe that revenue curve $\R$ dominates $\R^S$ in the sense that for
every slope $t$, the intersection of the ray $y=tx$ with $\R$ is
farther away from the origin than its intersection with
$\R^S$. Transforming $\R$ and $\R^S$ to the effective revenue curves
using the same ironed virtual valuation function $\ivv[S]$ do not
change such dominance relationship, and moreover, because $\IRs$ is
non-decreasing and concave, it follows that vertical dominance also
holds, i.e., $\ER(i)\geq \IRs(i)$ for all $i$.
\end{proof}

\begin{definition}[Perceived revenue curve $\Rhat$]
The {\em perceived revenue curve} for $\ivv[S]$ on $\vals$ is given by $\Rhat(i)=\sum_{j=1}^i \ivv[S](\vali)$ for $i \in N$.
\end{definition}

Let $\vhats$ be the valuation profile corresponding to $\Rhat$, i.e.,
$\vhati=\Rhat(i)/i$, and let $\allocs^{\vhats}$ be the ironed virtual
surplus maximizer for $\ivv^{\vhats}$.
\begin{lemma}
\label{l:v_hat}
$\alloci^S(\vals)=\alloci^{\vhats}(\vhats)$.
\end{lemma}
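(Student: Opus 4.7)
The plan is to show that the construction of $\vhats$ from $\Rhat$ is essentially tautological: $\Rhat$ is itself already a concave revenue curve, so it equals its own ironing, and the ironed virtual values of $\vhats$ reproduce $\ivv[S]$ on a per-agent basis.

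First, I would compute the actual revenue curve of $\vhats$: by definition $\R^{\vhats}(i) = i \cdot \vhati = i\cdot\Rhat(i)/i = \Rhat(i)$, so $\R^{\vhats} = \Rhat$.

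Second, I would verify that $\Rhat$ is concave with $\Rhat(0)=0$. This is because its consecutive increments are $\Rhat(i) - \Rhat(i-1) = \ivv[S](\vali)$, and since $\ivv[S]$ is a (monotone non-decreasing) ironed virtual valuation function and the agents are indexed so that $\vali \geq \vali[i+1]$, the sequence $\ivv[S](\vali[1]) \geq \ivv[S](\vali[2]) \geq \cdots$ is non-increasing, giving concavity. Concavity of $\Rhat$ with $\Rhat(0)=0$ implies $\Rhat(i)/i$ is non-increasing in $i$, confirming that $\vhats$ is a valid (sorted) valuation profile. Since $\R^{\vhats}=\Rhat$ is already concave, $\IR^{\vhats}=\R^{\vhats}=\Rhat$, and therefore
\begin{equation*}
\ivv^{\vhats}(\vhati) \;=\; \IR^{\vhats}(i) - \IR^{\vhats}(i-1) \;=\; \Rhat(i) - \Rhat(i-1) \;=\; \ivv[S](\vali).
\end{equation*}

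Third, I would conclude the lemma from Lemma~\ref{l:swap_mon} and Theorem~\ref{t:efo}. Both allocations $\allocs^S(\vals)$ and $\allocs^{\vhats}(\vhats)$ are, by definition, the ironed-virtual-surplus maximizers (with random tie-breaking) in the same symmetric environment. Since for every agent $i$ the virtual value under the two regimes is the same number, $\ivv[S](\vali) = \ivv^{\vhats}(\vhati)$, the linear objective $\sum_i (\text{virt. val.}_i)\,\alloci$ being maximized is literally the same function of $\allocs$, and random tie-breaking over equal virtual values coincides. Hence the two optimizers agree, giving $\alloci^S(\vals) = \alloci^{\vhats}(\vhats)$.

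The only delicate point is really the sanity-check that $\vhats$ is sorted (so that the convention $\vhati \geq \vhati[i+1]$ used in invoking Lemma~\ref{l:swap_mon} applies), which as noted follows from concavity of $\Rhat$ through the origin. Once that is in hand, the remainder is definitional bookkeeping — the identity $\ivv^{\vhats}(\vhati) = \ivv[S](\vali)$ is the whole content, and equality of allocations is then immediate from uniqueness (up to the tie-breaking rule) of the ironed virtual surplus maximizer.
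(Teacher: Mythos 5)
Your proposal is correct and follows essentially the same approach as the paper's (very terse) proof: both rest on the observation that the ironed virtual value of agent $i$ is $\ivv[S](\vali)$ under both regimes, so the two ironed virtual surplus maximizers solve the identical optimization. What you add is the justification the paper leaves implicit — namely, that $\R^{\vhats} = \Rhat$, that $\Rhat$ is concave (because its increments $\ivv[S](\vali)$ are non-increasing in $i$, since $\ivv[S]$ is non-decreasing in value and the $\vali$ are sorted decreasing), hence $\IR^{\vhats}=\Rhat$ and $\ivv^{\vhats}(\vhati)=\Rhat(i)-\Rhat(i-1)=\ivv[S](\vali)$. Filling in these steps, especially the sanity check that $\vhats$ is itself a valid sorted profile, makes the argument more self-contained without changing its essence.
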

\begin{proof}
Compare running the ironed virtual surplus maximizer $\allocs^S$ for $\ivv^S$ on $\vals$ with running  $\allocs^{\vhats}$ for $\ivv^{\vhats}$ on $\vhats$, 
the ironed virtual valuation of agent $i$ in either case is equal to $\ivv^S(\vali)$. 
Therefore these two ironed virtual surplus optimizers will choose the same allocation, 
and the lemma follows.
\end{proof}


\begin{lemma}
\label{l:vhat_vs_v_S}
Given a balanced partitioning $(S,M)$, then $\IRs(i) \geq
\frac{1}{4}\Rhat(i)\geq \frac{1}{4}\IRs(i) $ for all $1\leq i\leq n$.
\end{lemma}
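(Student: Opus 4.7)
The plan is to compare $\Rhat(i)$ and $\IRs(i)$ by writing each as a weighted sum against the slopes $s_\ell := \IRs(\ell) - \IRs(\ell-1)$ of the concave curve $\IRs$, then control the two weight sequences via the balance condition. Let $f(i) := |S \cap \{1,\dots,i\}|$ and let $j_1 < j_2 < \cdots$ be the indices of $S$ listed in increasing order (equivalently, in decreasing-value order). A case analysis on whether $j \in S$ shows that for each $j \in \{1,\dots,i\}$ the virtual value $\ivv[S](\vali[j])$ equals $s_\ell$ for the unique $\ell$ satisfying $j_{\ell-1} < j \leq j_\ell$ (with the conventions $j_0 := 0$ and $j_{f(i)+1} := i$). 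Grouping by $\ell$ yields $\Rhat(i) = \sum_\ell w_\ell s_\ell$ with $w_\ell = j_\ell - j_{\ell-1}$ for $\ell \leq f(i)$ and $w_{f(i)+1} = i - j_{f(i)}$; the partial sums are $W_k := \sum_{\ell \leq k} w_\ell = \min(j_k, i)$. Writing $\IRs(i) = \sum_\ell \mathbf{1}[\ell \leq i]\, s_\ell$ with partial sums $\tilde W_k = \min(k, i)$, observe that both weight sequences sum to $i$.

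For the lower bound $\Rhat(i) \geq \IRs(i)$, Abel summation gives $\Rhat(i) - \IRs(i) = \sum_\ell (W_\ell - \tilde W_\ell)(s_\ell - s_{\ell+1})$. Each factor $s_\ell - s_{\ell+1} \geq 0$ by concavity of $\IRs$, and $W_k \geq \tilde W_k$ follows from the trivial inequality $j_k \geq k$ (the indices of $S$ are a strictly increasing sequence of positive integers), so every term is nonnegative. For the upper bound $\IRs(i) \geq \Rhat(i)/4$, balance enters: since $2 \in S$ we have $j_1 = 2$, and for $j_k \geq 3$ the relation $f(j_k) = k \geq j_k/4$ yields $j_k \leq 4k$, whence $W_k \leq 4 \tilde W_k$. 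The analogous Abel comparison of $4\IRs(i)$ against $\Rhat(i)$ then gives $4\IRs(i) - \Rhat(i) = \sum_\ell (4\tilde W_\ell - W_\ell)(s_\ell - s_{\ell+1}) \geq 0$, modulo the boundary discussion below.

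The one technical point that will require the most care is the boundary of the Abel identity. In the lower bound the two weight sequences share total $i$, so the endpoint contribution $W_n s_n = \tilde W_n s_n = i s_n$ cancels cleanly, leaving the purely interior concave-weighted sum. For the upper bound the totals $i$ and $4i$ differ by a factor of four, leaving a residual proportional to the final slope $s_n$ of $\IRs$; when $s_n < 0$—which occurs once the ironing of $\vals_S$ has extended past the last positive slope of $\IRs$—this residual is negative and must be absorbed by the strictly positive interior sum. This is tight in balanced uniform examples, where the interior sum meets the negative residual with exactly zero slack at $i$ equal to the peak of $\Rhat$, and is the main obstacle; the bound for the remaining indices that actually feed into Lemma~\ref{l:balance-32} (those with $\alloci^S(\vals) > \alloci[i+1]^S(\vals)$, where the perceived and effective revenue curves coincide) will follow by an explicit accounting that compares the interior sum against the $3 i s_n$ deficit.
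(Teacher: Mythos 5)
Your decomposition — writing both $\Rhat(i)$ and $\IRs(i)$ as slope-weighted sums against the increments $s_\ell = \IRs(\ell)-\IRs(\ell-1)$, and comparing the two weight sequences by Abel summation — is a genuine reframing of the paper's argument. The paper instead argues directly: $\Rhat(i)$ is a sum of the $i$ largest ironed virtual values taken over $\vals$, while $\IRs(i)$ is the analogous sum over $\vals_S$, so the first inequality is value-dominance, and the second is an application of the prefix-sum dominance in Lemma~\ref{l:dominance}. Your version makes the concavity dependence visible and the lower bound $\Rhat(i)\geq\IRs(i)$ is cleanly correct (the boundary terms $W_n s_n$ and $\tilde W_n s_n$ do cancel, and $j_\ell \geq \ell$ gives the interior nonnegativity).

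The obstacle you flag for the upper bound, however, is not a technicality to be ``absorbed by explicit accounting'' — it is a genuine failure of the lemma as stated. Take $n=4$, $\vals=(10,8,6,4)$, $S=\{2,4\}$, $M=\{1,3\}$ (which is balanced). Then $\vals_S$ sorts to $(8,4,0,0)$, the revenue curve is $(8,8,0,0)$, and its concave hull (down to the convention point $\R(n+1)=0$) gives $\IRs(4)=8/3$. Meanwhile the ironed virtual values $\ivv^S$ applied to $\vals$ are $(8,8,0,0)$, so $\Rhat(4)=16$ and $\tfrac14\Rhat(4)=4>8/3=\IRs(4)$. In your framework this is exactly the case where $s_n<0$ and the residual $(4\tilde W_n - W_n)s_n = 3i\,s_n$ overwhelms the interior sum. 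Your proposed restriction to indices with $\alloci^S(\vals)>\alloci[i+1]^S(\vals)$ does not rescue things either: in a digital-good environment, $i=4$ is precisely such an index here. The paper's own proof has the same silent gap — Lemma~\ref{l:dominance} actually delivers $\IRs\bigl(\lvert S\cap\{1,\dots,i\}\rvert\bigr)\geq\tfrac14\Rhat(i)$, and replacing the argument $\lvert S\cap\{1,\dots,i\}\rvert$ by the larger $i$ is only sound where $\IRs$ is still non-decreasing. What Lemma~\ref{l:key} truly needs is $\ER(i)\geq\tfrac14\Rhat(i)$, and routing this through $\IRs(i)$ is lossy once $\IRs$ has started declining (in the example above $\ER(4)=16\geq 4$ holds directly even though the intermediate step fails). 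So the correct repair is to bound $\ER$ against $\Rhat$ without passing through $\IRs$, not to sharpen the accounting in your Abel sum.
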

\begin{proof}
For each $i$, $\Rhat(i)$ is the sum of $i$ largest ironed virtual
values in $N$ with respect to $\ivv[S]$ while $\IRs(i)$ is the sum of
the $i$ largest with respect to $S$.  Therefore $\Rhat\geq\IRs(i)$.
Since $(S,M)$ is double-side balanced, applying
Lemma~\ref{l:dominance}, we also have that for all $i$, $\IRs(i)\geq
\frac{1}{4}\Rhat(i)$.
\end{proof}

Now we are ready prove the following key lemma:

\begin{lemma}
\label{l:key}
For any downward-closed permutation environments, any valuation
profile $\vals$, and balanced partitioning $(S,M)$,
$\EF{\ivv[S]}(\vals_N) \geq \frac{1}{4}
\EF{\ivv[S]}(\vals_S)=\frac{1}{4}\EFO(\vals_S)$.
\end{lemma}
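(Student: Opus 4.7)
My plan is to chain together the four preceding lemmas of this subsection, routing through the ``perceived'' profile $\vhats$. The key observation is that by Lemma \ref{l:v_hat}, the allocation $\alpha_i := \alloci^S(\vals_N)$ coincides with the ironed virtual surplus-maximizing allocation for $\vhats$. Moreover $\Rhat$ is already concave (its successive differences equal the nonincreasing sequence $\ivv[S](\vali[i])$), so $\R^{\vhats} = \IR^{\vhats} = \Rhat$, and consequently $\EFO(\vhats) = \sum_i \Rhat(i)(\alpha_i - \alpha_{i+1})$ by Lemma \ref{l:vv=rev}. This identity will be my bridge between the left- and right-hand sides of the target inequality.

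The main chain is then short. Starting from Lemma \ref{l:effective} I would write $\EF{\ivv[S]}(\vals_N) = \sum_i \ER(i)(\alpha_i - \alpha_{i+1})$, then use $\ER(i) \geq \IRs(i)$ from Lemma \ref{l:r_dominance} together with the swap monotonicity of $\alpha$ to lower-bound this by $\sum_i \IRs(i)(\alpha_i - \alpha_{i+1})$. Next I would apply the inequality $\IRs(i) \geq \tfrac14 \Rhat(i)$ from Lemma \ref{l:vhat_vs_v_S}, which is the only place the balanced hypothesis enters and is where the factor $\tfrac14$ is paid. By the bridge identity, the resulting sum equals $\tfrac14 \EFO(\vhats)$.

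To finish, I would show $\EFO(\vhats) \geq \EFO(\vals_S) = \EF{\ivv[S]}(\vals_S)$ by a revenue-curve-dominance argument. Let $\beta := \alloci^S(\vals_S)$ be the optimal allocation for $\vals_S$; the equality case of Lemma \ref{l:ironing} gives $\EFO(\vals_S) = \sum_i \IRs(i)(\beta_i - \beta_{i+1})$. The sequence $\beta$ is feasible and nonincreasing, hence also swap monotone on $\vhats$, so by Lemma \ref{l:vv=rev} its envy-free revenue on $\vhats$ is exactly $\sum_i \Rhat(i)(\beta_i - \beta_{i+1})$, which is at least $\sum_i \IRs(i)(\beta_i - \beta_{i+1}) = \EFO(\vals_S)$ using the other half of Lemma \ref{l:vhat_vs_v_S}, namely $\Rhat(i) \geq \IRs(i)$. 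The main thing I expect to have to handle carefully is the bookkeeping around ``position'' versus ``agent'' indices when transporting allocation sequences between the differently-sorted profiles $\vals_N$, $\vals_S$, and $\vhats$; this is where I rely on the symmetry of the permutation environment so that feasibility depends only on the sorted allocation probabilities and not on which specific agent sits at each position.
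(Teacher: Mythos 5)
Your proposal is essentially the paper's proof, with the same chain of lemmas used in the same roles: Lemma~\ref{l:effective} to rewrite the left side, Lemma~\ref{l:r_dominance} and Lemma~\ref{l:vhat_vs_v_S} to pay the factor $\tfrac14$, Lemma~\ref{l:v_hat} plus the optimality of the $\vhats$-ironed-virtual-surplus maximizer to pass from $\allocs^{S}(\vals_N)$ to $\allocs^{S}(\vals_S)$, and the $\Rhat\ge\IRs$ half of Lemma~\ref{l:vhat_vs_v_S} at the end. The only cosmetic difference is that you crystallize the middle of the chain as the intermediate quantity $\EFO(\vhats)$ (noting $\Rhat$ is concave so $\Rhat=\IR^{\vhats}$) rather than carrying the displayed sum through; the logical content and the deployment of the balanced hypothesis are identical.
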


\begin{proof}
Let $\allocs^S$ and $\allocs^{\vhats}$ be short-hands for
 the ironed virtual surplus optimizers with ironed virtual valuation functions 
 defined for $\vals_S$ and $\vhats$, respectively.
The proof is by the following inequalities:
\begin{align*}
\EF{\ivv[S]}(\vals_N)
&=\sum\nolimits_i \ER(i)
  \cdot(\alloci^{S}(\vals_N)
  -\alloci[i+1]^{S}(\vals_N))\\
&=\sum\nolimits_i \ER(i)
  \cdot(\alloci^{\vhats}(\vhats)
  -\alloci[i+1]^{\vhats}(\vhats))\\
&\geq \tfrac{1}{4}\cdot \sum\nolimits_i \Rhat(i)
  \cdot(\alloci^{\vhats}(\vhats)
  -\alloci[i+1]^{\vhats}(\vhats))\\
&\geq \tfrac{1}{4}\cdot \sum\nolimits_i \Rhat(i)
  \cdot(\alloci^{S}(\vals_S)
  -\alloci[i+1]^{S}(\vals_S))\\
&\geq \tfrac{1}{4}\cdot \sum\nolimits_i \IRs(i)
  \cdot(\alloci^{S}(\vals_S)
  -\alloci[i+1]^{S}(\vals_S)).
\end{align*}
Here the first two equalities are guaranteed by our definitions of
$\ER$ and $\Rhat$.  The first inequality is by
Lemma~\ref{l:r_dominance} and Lemma~\ref{l:vhat_vs_v_S}, the second
inequality is by the optimality of $\allocs^{\vhats}$ for $\vhats$,
and the third inequality is by Lemma~\ref{l:vhat_vs_v_S} again.
\end{proof}

\subsection{Expected sample revenue versus the envy-free benchmark}

We now show that the expected envy-free revenue of the sample compares
favorably with the envy-free benchmark; this is the last ingredient in
the proof of Theorem~\ref{t:IC-RESM-dc}.  We will make this argument
conditioned on a balanced partitioning; however, the result is true
for any symmetric conditioning (including none at all).

\begin{lemma}
\label{l:subadd}
For a partitioning $(S,M)$ of $N$, we have that
$\EFO(\vals_S) + \EFO(\vals_M) \geq \EFO(\vals_{N})$.
\end{lemma}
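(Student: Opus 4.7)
The plan is a direct construction: take the envy-free optimal outcome on $\vals_N$ and ``split'' it into two envy-free outcomes, one on $\vals_S$ and one on $\vals_M$, whose revenues sum exactly to $\EFO(\vals_N)$. I would start with the envy-free optimal outcome $(\allocs^*, \prices^*)$ on $\vals_N$ guaranteed by Theorem~\ref{t:efo}, and define restricted outcomes by zeroing out the non-relevant coordinates: set $\alloci^S = \alloci^*$ and $\pricei^S = \pricei^*$ for $i \in S$, and $\alloci^S = \pricei^S = 0$ for $i \in M$; define $(\allocs^M, \prices^M)$ analogously by swapping the roles of $S$ and $M$.

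The first step is to verify feasibility of $\allocs^S$. Writing $\allocs^*$ as a convex combination $\sum_T \lambda_T \chi_T$ over characteristic vectors of feasible sets $T \in \feasibles$ (for the realized permutation), we get $\allocs^S = \sum_T \lambda_T \chi_{T \cap S}$; each $T \cap S$ is feasible by downward-closedness, so $\allocs^S$ is a feasible allocation. The second step is to verify envy-freeness of $(\allocs^S, \prices^S)$ on $\vals_S$, which proceeds by case analysis on whether $i, j$ lie in $S$ or $M$. If $i, j \in S$, the constraint collapses to the original envy-freeness of $(\allocs^*, \prices^*)$; if $i \in S$, $j \in M$, it reduces to the individual-rationality condition $\vali \alloci^* \geq \pricei^*$; if $i \in M$, $j \in S$, it becomes the non-negativity condition $\pricei[j]^* \geq 0$; and the $i, j \in M$ case is trivial.

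With feasibility and envy-freeness confirmed, $\EFO(\vals_S)$ is at least the revenue of this outcome, namely $\sum_{i \in S} \pricei^*$; the symmetric argument gives $\EFO(\vals_M) \geq \sum_{i \in M} \pricei^*$. Summing these two inequalities yields $\EFO(\vals_S) + \EFO(\vals_M) \geq \sum_i \pricei^* = \EFO(\vals_N)$. In permutation environments, the argument runs point-wise in the realized permutation, so taking expectations preserves the inequality.

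The only nontrivial step is the feasibility check, which is exactly where downward-closedness is used: without it, $T \cap S$ need not be feasible and the restriction could be infeasible. The envy-freeness verification is purely syntactic once the original outcome is known to be individually rational with non-negative payments.
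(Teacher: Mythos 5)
Your proof is correct and takes essentially the same approach as the paper's: decompose the optimal envy-free outcome on $\vals_N$ into its $S$-side and $M$-side revenue contributions and show each is a lower bound on the corresponding $\EFO$. The paper phrases the second step more loosely as ``removing envy-freedom constraints can only help,'' while you make the restriction construction, the downward-closedness feasibility check, and the four-way case analysis of the envy constraints explicit — a cleaner writeup of the same argument.
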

\begin{proof}
$\EFO(\vals_N)=\EFO(\vals_{S\cup M})$ is the maximum revenue we can
  get from $S\cup M$ subject to the envy free constraints.  Let agents
  in $M$ contribute total revenue $R$ to $\EFO(\vals_{N})$.  By
  setting the agents in $S$ to have zero valuations to obtain
  valuation profile $\vals_S$, we basically removed envy-freedom
  constraints between agents in $S$ and agents in $M$.  With fewer
  envy-freedom constraints, the maximum envy-free revenue we can get
  from $M$, i.e., $\EFO(\vals_M)$, can only be larger.  Similarly, the
  total revenue that $S$ contributes to $\EFO(\vals_{N})$ is at most
  $\EFO(\vals_S)$, and our lemma follows.
\end{proof}

\begin{lemma} 
\label{l:EFO_S-2}
$\expect{\EFO(\vals_S) \given \balance} \geq \tfrac{1}{2} \EFO \super
  2(\vals)$ where $\balance$ is the event that $S$ and $M$ are
 balanced.
\end{lemma}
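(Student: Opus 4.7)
The strategy is to apply the subadditivity of Lemma~\ref{l:subadd} to the modified profile $\vals\super 2$ and then use a symmetry argument to equate the expected envy-free contributions from the sample and the market. The desired inequality then falls out by dividing by two.

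\textbf{Step 1 (Reduce to $\vals\super 2$).} Under $\balance$ we have $1 \in M$, so $1 \notin S$. The profiles $\vals_S$ and $\vals\super 2_S$ therefore agree coordinate-wise: at index $1$ both are zero (since $1 \notin S$) and at every other index $\vali\super 2 = \vali$. Hence $\EFO(\vals_S) = \EFO(\vals\super 2_S)$ pointwise on $\balance$. It suffices to lower bound $\expect{\EFO(\vals\super 2_S) \given \balance}$.

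\textbf{Step 2 (Subadditivity + symmetry).} Lemma~\ref{l:subadd} applied to $(S,M)$ on profile $\vals\super 2$ gives
\[
\EFO(\vals\super 2_S) + \EFO(\vals\super 2_M) \;\geq\; \EFO(\vals\super 2) \;=\; \EFO\super 2(\vals).
\]
Taking expectations conditioned on $\balance$ and combining with the symmetry claim
\[
\expect{\EFO(\vals\super 2_S) \given \balance} \;=\; \expect{\EFO(\vals\super 2_M) \given \balance}
\]
immediately yields $2 \expect{\EFO(\vals\super 2_S) \given \balance} \geq \EFO\super 2(\vals)$, which together with Step 1 proves the lemma.

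\textbf{Step 3 (The symmetry, which is the main step).} To prove the symmetry, exhibit the involution $\phi$ on balanced partitions defined by
\[
\phi\bigl((S,M)\bigr) \;=\; \bigl((M\setminus\{1\})\cup\{2\},\ (S\setminus\{2\})\cup\{1\}\bigr).
\]
Writing $(S',M') = \phi(S,M)$, one checks: $1 \in M'$ and $2 \in S'$; for every $i \geq 3$, $|S' \cap \{1,\ldots,i\}| = |M \cap \{1,\ldots,i\}|$ and symmetrically for $M'$, so the quarter-balance condition transfers from $(S,M)$ to $(S',M')$; and $\phi\circ\phi$ is the identity. Thus $\phi$ is a measure-preserving bijection of the uniform distribution on $\balance$ (which is the induced distribution since each agent other than $1$ lies in $S$ or $M$ independently with probability $1/2$). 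Crucially, because $\vali\super 2[1] = \vali\super 2[2] = \vali[2]$, the multiset of nonzero entries of $\vals\super 2_{S'}$ equals that of $\vals\super 2_M$ (both consist of $\vali[2]$ together with $\{\vali[i] : i \in M \cap \{3,\ldots,n\}\}$). Since $\EFO$ in a symmetric permutation environment depends only on the multiset of values, $\EFO(\vals\super 2_{S'}) = \EFO(\vals\super 2_M)$, and the symmetry follows by change of variables along $\phi$.

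\textbf{Main obstacle.} The only delicate point is verifying that the balance condition is preserved by $\phi$; this is where the double-sided nature of balance (requiring a quarter in \emph{both} $S$ and $M$ on every prefix) and the symmetric role of agents $1$ and $2$ in $\vals\super 2$ conspire to let us swap sample and market. Once the involution is in hand, everything else is bookkeeping.
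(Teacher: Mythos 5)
Your proof is correct and follows the same route the paper intends: the paper's proof is the one-liner ``follows from Lemma~\ref{l:subadd}, symmetry, and the fact that agent 1 is always in $M$,'' and your argument fills in exactly the details behind each of these three ingredients. In particular, your Step~1 observation that $\vals_S = \vals\super 2_S$ once $1\notin S$, and your explicit involution $\phi$ (swapping the roles of agents $1$ and $2$ together with the roles of sample and market) making precise why $\expect{\EFO(\vals\super 2_S)\given\balance}=\expect{\EFO(\vals\super 2_M)\given\balance}$, are the content the paper compresses into the word ``symmetry.''
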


\begin{proof} The lemma follows from Lemma~\ref{l:subadd}, symmetry, and the fact that agent 1 is always in $M$.
\end{proof}

\bibliographystyle{apalike}
\bibliography{auctions}

\appendix

\section{Proofs}

%
%
%
%
%

\subsection{Proof of Lemma~\ref{l:ICO>>VCGreserve}}

\label{app:ICO>>VCGreserve}

Fix some number $m$. 

The matroid we use is a partition matroid. In general in a partition
matroid, the ground set is partitioned into a number of disjoint sets,
or sectors, where each sector is associated with a capacity number.  A
set is feasible if and only if its intersection with each sector does
not exceed the capacity number of the sector.

Now we define the partition matroid we use.
For each
$k\in\{1,\ldots,m\}$, a type $k$ sector contains
$m^{3k-1}$ elements or agents and has capacity one.  There are $m^{2m-2k}$ disjoint type $k$ sectors for each $k\in\{1,\ldots,m\}$. So total number of
agents $n$ is at most $m^{O(m)}$. Hence $m$ is at least of order
$\frac{\log n}{\log\log n}$.

Next we define the ``sydney opera house distribution''. The
distribution $F$ is such that the value is distributed according to
uniform distribution $[m^{2k+1}-\epsilon,m^{2k+1}+\epsilon]$ with
probability $\frac{1}{m^{3k}}-\frac{1}{m^{3k+3}}$ for
$k\in\{0,\ldots,m-1\}$, and with probability $\frac{1}{m^{3k}}$ for
$k=m$. Here we take $\epsilon$ to be some sufficiently negligible
positive amount, and we will often omit $\epsilon$ related terms. So
for each $k$ the revenue function $R$ at $\frac{1}{m^{3k+3}}$ has left
limit
$R(\frac{1}{m^{3k+3}}-)=\frac{m^{2k+3}}{m^{3k+3}}=\frac{1}{m^{k}}$,
and right limit
$R(\frac{1}{m^{3k+3}}+)=\frac{m^{2k+1}}{m^{3k+3}}=\frac{1}{m^{k+2}}$.
Hence the ironed virtual valuation between quantile
$\frac{1}{m^{3k+3}}$ to quantile $\frac{1}{m^{3k}}$ is
$\frac{\frac{1}{m^{k-1}}-\frac{1}{m^{k}}}{\frac{1}{m^{3k}}-\frac{1}{m^{3k+3}}}\approx
m^{2k+1}$.  Note that the ironed virtual valuation is equal to
valuation, ignoring minor terms.

To calculate the revenue of Myerson's auction, for a type $k$ sector,
there are $m^{3k-1}$ agents. With probability at least
$1-(1-\frac{1}{m^{3k}})^{m^{3k-1}}\approx\frac{1}{m}$, the highest
agent is in quantile range $(0,\frac{1}{m^{3k}})$, with ironed virtual
valuation at least $m^{2k+1}$. So the expected ironed virtual
valuation from a type $k$ sector is at least $m^{2k}$. Multiplied by
the number of type $k$ sectors the total ironed virtual valuation, and
hence expected revenue is at least $\sum_{k}m^{2k}\cdot
m^{2m-2k}=m\cdot m^{2m}$.

To calculate the revenue of $VCG$ with some reserve $r$, suppose
w.l.o.g.\ $r\approx m^{2k'+1}$ for some $k'$. For a type $k$ sector
with $m^{3k-1}$ agents, the dominant amount of revenue is obtained
from the following two cases:
\begin{enumerate}
\item When there are at least two agents with value at least
  $m^{2k+1}-\epsilon$ (i.e. in quantile $\frac{1}{m^{3k}}$), the lower
  of which has value at most $m^{2k+1}+\epsilon$.  This happens with
  probability roughly $\frac{1}{m^{2}}$, and gives revenue
  $m^{2k+1}$. Therefore the expected revenue we get from this case is
  $\frac{1}{m^{2}}\cdot m^{2k+1}$, which multiplied by the number of
  type $k$ sector, is $O(m^{2m-1})$.
\item When $k=k'$, and there is at least one agent who beats the
  reserve $m^{2k+1}$. This happens with probability at most
  $\frac{1}{m}$.  Therefore the expected revenue from this case is
  $m^{2k+1}\cdot\frac{1}{m}$, which multiplied by the number of type
  $k$ sectors is $O(m^{2m})$.
\end{enumerate}
Summing over all $k$, the total expected revenue of $VCG$ with reserve
$r$ is at most $O(m^{2m})+m\cdot O(m^{2m-1})=O(m^{2m})$, which is less
than that of Myerson's auction by a factor of $\Omega(m)=\Omega(\log
n/\log\log n)$.
%
%


\subsection{Proof of Lemma~\ref{l:IC>EF}}

\label{app:IC>EF}

Let there be $n+1$ agents. The ``$1$ vs $n$'' set system has two maximum feasible sets, one is a singleton set and the other one has size $n$. These two sets are disjoint.
We define the valuation profile by specifying the virtual valuations.
There are $n$ ``small'' agents with virtual values $\val+\epsilon, \val+2\epsilon,\dots,\val+n\epsilon$ respectively, and one ``big'' agent with virtual value $n\val+\frac{n(n+1)}{2}\epsilon-\epsilon^2$ for some small positive $\epsilon$. The choice of the $\epsilon$ terms is such that for the sum of the virtual valuations of the first $n$ agents to beat the big agent, no small agent can lower her virtual value to some other agent's virtual value. We will ignore $\epsilon$ terms from now on.
Correspondingly, one can calculate the revenue curve, and then derive the valuations of the agents: the valuation of the big agent is $n\val$, and the small agents have values $\frac{n+1}{2}\val$, $\frac{n+2}{3}\val$,\ldots, $\frac{2n}{n+1}\val$, ignoring $\epsilon$ terms.
The allocation rule is the ironed virtual surplus optimizer w.r.t.\ this valuation profile. Note that a reserve of $\frac{2n}{n+1}\val$ is set because any value lower than this corresponds to a negative ironed virtual value.

Observe that every agent wins if and only if she is assigned to the size $n$ set, which happens with probability $n/(n+1)$. Therefore the EF revenue is  $\frac{2n}{n+1}\val \cdot \frac{n}{n+1} \cdot  (n+1)=\frac{2n^2}{n+1}\val$.
To calculate the IC revenue, with probability $n/(n+1)$, the big agent is assigned to the size $n$ set, and every of the $n$ winning agents pays the reserve $\frac{2n}{n+1}\val$. Also with probability $1/(n+1)$, the big agent is assigned to the singleton set, and every agent has to pay her own value,
which sums up to $\Theta(n\val \log(n))$. Therefore the IC revenue is $\frac{2n}{n+1}\val\cdot \frac{n}{n+1}\cdot n  +\frac{1}{n+1} \Theta(n\val \log(n))$, which is larger than EF revenue for sufficiently large $n$.

\subsection{Proof of Lemma~\ref{t:tail-regular}}
\label{app:tail-regular}
\begin{proof}[Proof]

By the reduction from matroid permutation environments to multi-unit
environments, it is sufficient to prove the statement for $k$-unit auctions.
Let $\ivv$ be the ironed virtual surplus maximizer for $F$.
We first upper-bound IC revenue from bidders 2 to $n$:
\begin{align*}
\IC{F}_{2\dots n}(\vals) & \leq \EF{\ivv}_{2\dots n}(\vals) \\
&  \leq \EFO(\vali[2],\vali[3],\dots,\vali[n], 0)\\
&  \leq \EFO(\vals\super{2}) \\
&    = \EFO\super{2}(\vals) 
\end{align*}

Here the first inequality is by Lemma~\ref{l:EF>IC}, and the last
equality is by definition of $\EFO\super{2}$.

To see the second inequality, both left hand side and right hand side
correspond to the maximum envy-free revenue from $2\dots n$ that
correspond to some outcome with at most $k$ items allocated and no
envy among $2\dots n$, except that in the right hand side, the outcome
that maximizes this revenue is chosen.

To see the third inequality, note that the revenue curve of
$\vals\super{2}$ dominates that of $(\vali[2],\vali[3],\dots,\vali[n],
0)$, and hence by Lemma~\ref{l:vv=rev}, for every allocation, the
envy-free revenue for $\vals\super{2}$ can only be higher.

Next we upper-bound IC revenue from bidder 1, where the expectation is over i.i.d.\ draws from distribution $F$.
\begin{align*}
 E[\IC{\ivv}_1(\vals)] & \leq E[\IC{\ivv}(\vals)\text{ for single item auction}]\\
              &        \leq 2 E[\vali[2]]           \\
              &        \leq 2 E[\EFO\super{2}(\vals)]   
\end{align*}
Here the second inequality is by the tail regularity assumption.  The
third inequality is because $\EFO\super{2}(\vals)\geq \vali[2]$.

To see the first inequality, consider the mechanism that first runs
$\ivv$, and then only allows the highest bidder (bidder 1) to win.
The IC payment of bidder 1 is the maximum of the second highest bid
and the threshold bid for bidder 1 to win in $\ivv$.  This is as much
as the threshold (or revenue) from bidder 1 in $\ivv$ as in the left
hand side. On the other hand, the IC revenue of this mechanism is at
most that of the optimal single-item auction, which equals to the
right hand side.

Together, we have that $E[\IC{\ivv}(\vals)] \leq 3 E[\EFO\super{2}(\vals)]$.

\end{proof}

\end{document}